\crefname{hypothesis}{Hypothesis}{Hypotheses}
\Crefname{ALC@unique}{Line}{Lines}
\numberwithin{theorem}{section}
\newcommand{\TheTitle}{The Winner Takes All: Volume-Scavenging Populations of Networked Droplets} 
\newcommand{\TheAuthors}{Thomas C. Hagen and Paul H. Steen}
\title{{\TheTitle}\thanks{arXiv submission of revision: 02/15/2019.  This manuscript is an annotated version of a related work on ``volume scavenging.'' It highlights results of a  physics-based model from different viewpoints, suggesting parallels to the socio-economic context among others. It also offers ``asides'' and annotations. The related work can be found at 
\url{https://doi.org/10.1016/j.physd.2019.01.005}. 
}}
\author{Thomas C. Hagen%
\thanks{Department of Mathematical Sciences,
    The University of Memphis,  Memphis, TN 38152, USA (\email{thagen@memphis.edu}).}%
\and
 Paul H. Steen%
\thanks{ School of Chemical and Biomolecular Engineering and Center for Applied Mathematics, Cornell University, Ithaca,  NY 14853, USA (\email{PHS7@cornell.edu}).}
}
\definecolor{shadecolor}{gray}{0.9}
\begin{document}

\maketitle

\renewcommand{\abstractname}{Overview} 
\begin{snugshade}
\begin{abstract} In this work we present and analyze a fluid-mechanical model of competition (scavenging) amongst $N$ liquid droplets (individual competitors).
The eventual outcome of this competition depends sensitively on the average resource (volume) per individual $\overline{V}$.
 For  abundant resource, $\overline{V}>1$,  there is one winner only and that winner eventually scavenges all or most of the resource. In the socio-economic realm this is is known as the  ``winner-take-all" outcome:
A disproportionately large reward falls to one or a few winners, even though 
other competitors start out with comparable (or even slightly more) resource and perform only 
marginally worse. The losing competitors are not rewarded.
For less than  abundant
 resource,  $\overline{V}<1$, an outcome with  resource that is evenly partitioned amongst the $N$ droplets becomes possible. This is the ``all-share-evenly" or egalitarian outcome. For sufficiently scarce resource the egalitarian outcome is the only one that can occur. In addition to predicting what kind and how many winners, our  analysis shows that once an individual's resource (droplet volume) falls below a fixed threshold, that individual can neither recover nor emerge as winner.  This is 
the ``once down-and-out, always down-and-out" outcome. 
Selected simulations suggest that the winner depends
 sensitively on population size and the ``trading friction" or  inefficiency of resource exchange between individuals (liquid rheology). 
Of all feasible rest states (equilibria), only certain ones are reachable (stable equilibria). Friction turns out
 to strongly influence
 the time to reach an end state (stable equilibrium), in surprising ways.  Besides the end states, our 
analysis reveals an array of rest states, ordered in hierarchies of more versus less costly (energetic) outcomes.
\end{abstract}
\end{snugshade}
\renewcommand{\abstractname}{Abstract} 
\begin{abstract}
 A system of $N$ spherical-cap fluid droplets protruding from circular openings on a plane is  connected through channels.
This system is governed by surface tension acting on the droplets and viscous
stresses inside the fluid channels. The fluid rheology is given by the Ostwald-de Waele power law, thus permitting shear thinning.
The pressure acting on each droplet is caused by capillarity and given in terms of the droplet volume via the Young-Laplace law.
Liquid is exchanged along the network of fluid conduits due to an imbalance of the Laplace pressures between the droplets. In this way some droplets gain volume at the expense of others. This mechanism,
christened ``volume scavenging,'' leads to interesting dynamics. 

Numerical experiments show that an initial  droplet configuration is driven to a stable equilibrium exhibiting $1$  super-hemi\-spher\-i\-cal droplet and $N-1$ sub-hemi\-spher\-i\-cal ones when the initial droplet  volumes are large. The selection of this ``winning'' droplet depends not only on the channel network and the fluid volume, but also notably on the fluid rheology. The rheology is also observed to drastically change the transition to equilibrium.
 For smaller droplet volumes the long-term behavior is seen to be more complicated since the types of equilibria differ from those arising for larger volumes. These observations motivate our analytical study of equilibria and their stability for the corresponding nonlinear dynamical system.  The identification of equilibria is accomplished by locating the zeros of a mass polynomial, defined through the constant volume/mass constraint. The key tool in our stability analysis is a pressure-volume work functional, related to the total surface area, which serves as a Lyapunov function for the dynamical system. This functional is useful since equilibria are typically not hyperbolic and linearization techniques not available. Equilibria will be shown to be hierarchically organized in terms of size of the pressure-volume work functional. For larger droplet volumes this ordering exhibits one hierarchy of equilibria. Two hierarchies exist when the volumes are smaller. The minimizing equilibria in either case are asymptotically stable. 
\end{abstract}

\begin{keywords}
Winner-take-all system, equilibria, stability and bifurcation, nonlinearity,  gradient dynamical system, shear thinning, fluid droplet, surface tension, constrained optimization
\end{keywords}

\begin{AMS}
  70K50, 37C75, 34C23, 76D45
\end{AMS}

\section{Introduction}

\label{sec_in}

We consider a system of $N$ spherical-cap fluid droplets protruding from uniform, circular orifices on a flat surface. The droplets are connected through
straight fluid channels of uniform circular cross section and possibly variable length. The fluid is homogeneous and incompressible. The rheology of the fluid is  governed by the Ostwald-de Waele power law, thus, in particular, permitting shear thinning which is  often observed in complex (e.g.\,biological) fluids. The flow dynamics are dominated by surface tension acting on the droplets and viscous forces within the network channels. The pressure acting on each droplet is given by the Young-Laplace relation. Our  physics-based model accounts for scavenging amongst neighboring spherical-cap droplets owing to the action of capillary pressures due to surface tension. Volume exchange arises by pressure (curvature) differences that drive liquid from one to another droplet along a network of interconnected channels. In this way, certain droplets gain volume at the expense of others. This mechanism, christened ``volume scavenging,'' leads to interesting dynamics, driving an initial  droplet configuration to a stable equilibrium.
\begin{figure}[tbhp]
\centering
    \includegraphics[width=0.7\textwidth]{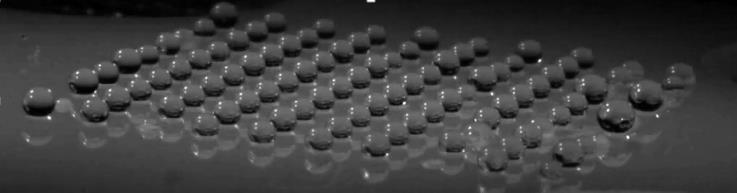}
\caption{An array of $10\times 10$ fluid droplets protruding from holes of $0.5$ millimeter diameter and connected  through a reservoir beneath, from \cite{VoSt}}
\label{drop-array}
\end{figure}
An array of fluid droplets connected to a common reservoir is displayed in \cref{drop-array}.

In the Newtonian case a small perturbation of  an initial configuration of super-hemispherical drop\-lets of equal size is known to evolve over time  into a configuration consisting of one ``winning'' super-hemi\-spher\-i\-cal droplet with all other droplets being  sub-hemispherical of equal size. Since this evolution is driven by a minimization of the total surface area, volume scavenging  is a coarsening process like Ostwald ripening  \cite{RaVo}. 
\begin{snugshade}
\noindent {\em Note to the reader}: Throughout this work we will highlight results of our physics-based model from different viewpoints, suggesting parallels to the socio-economic context among others. We will also offer ``asides."
\end{snugshade}
\begin{snugshade}
\noindent {\em Material science/fluid dynamics aside}: 
Coarsening phenomena are observed in  a large class of problems in material science and fluid dynamics, including binary mixtures and alloys, oil-water emulsions, epitaxial growth and re-crystallization.
\end{snugshade}
\begin{snugshade}\noindent {\em Socio-economic view}: Interestingly, coarsening has also been observed in the social sciences:  Schelling's agent-based model of segregation, possibly the best known representative of segregation models,  impressively demonstrates that the local interaction of individuals can lead to unexpected aggregate behavior (coarsening) \cite{Sc-art,Sc}. 
\end{snugshade}

The mathematical model we study is a generalization of the Newtonian volume scavenging model introduced by van Lengerich, Vogel and Steen \cite{LeEA1,LeEA2}. That model was used to explain the grab-and-release mechanism for  a
capillary adhesion device, described  by Vogel and Steen \cite{VoSt}, and  was motivated by a study of Eisner and Aneshansley \cite{EiAn} about the tarsal adhesion of the Florida tortoise beetle ({\em Hemisphaerota cyanea}) to defend against predators. In this work we consider the generalization of the previous model to power-law fluids. This generalization  has far-reaching consequences, both for the analytical treatment of the governing equations and for the observed dynamics of solutions. Our analysis, while extending    some of the previous results of van Lengerich et.\,al.\,\cite{LeEA1,LeEA2}, will do so without regard of the underlying conduit networks and without use of linearization techniques which are generally not available in the case of the power-law model. We will give a complete  classification of possible equilibria and their stability with the dimensionless average droplet volume $\overline{V}$ serving as a bifurcation parameter. The {\em full} range
$\overline{V}>0$ will be discussed.  One of our central results will be an {\emph {exhaustive}}, analytical identification of  hierarchies of stationary droplet configurations, classified by the size and number of large droplets versus small ones.

Volume scavenging has been the focal point of several important studies. Let us mention a few: Wente \cite{We} analyzed the two- and three-droplet regime from the vantage point of catastrophe theory. His seminal work  made use of an energy formulation to classify the stability of  equilibria.  Slater and Steen \cite{SlSt} discussed  the inviscid case of $N$ spherical-cap droplets under the symmetry assumption of equivariance with respect to  the permutation group $S_N$. Stability results  were also given  by van Lengerich et.\,al.\,\cite{LeEA1,LeEA2} for the Newtonian case with constant viscosity and large average volumes $\overline{V}>1$.  This was achieved by linearization about hyperbolic equilibria for a conduit network where the linearized equations turned out to be particularly simple.  The existence of a network-independent Lyapunov function made it then possible to extend the stability results to general networks. These works are the starting point  of our own study. Some of the  findings reported there will be included as special cases
 in this article. Yet far from being narrowly tailored, our analysis provides a template of arguments which is likely to generalize and be of use beyond this specific problem. 

We base our governing equations on  the dimensionless model introduced in \cite{LeEA1}. In this framework the uniform radius $R$ of each circular opening is rescaled and non-di\-men\-sion\-al\-ized to be $1$. The same length scale is used to rescale  the radius $r$ and the height $h$ of a spherical-cap droplet protruding from an orifice. Droplet volume $v$ is normalized such that the volume of a hemispherical droplet is $1$. Then radius $r$, height $h$ and volume $v$ of the  droplet are related by
\begin{equation}\label{randv}
	{2}\,r = h+\displaystyle{1\over h}\quad \text{and}\quad v = \displaystyle{1\over 4}\,h\,(h^2 +3).
\end{equation}
\begin{wrapfigure}{r}{0.46\textwidth}
\vspace*{-0.5cm}
\centering
    \includegraphics[width=0.45\textwidth]{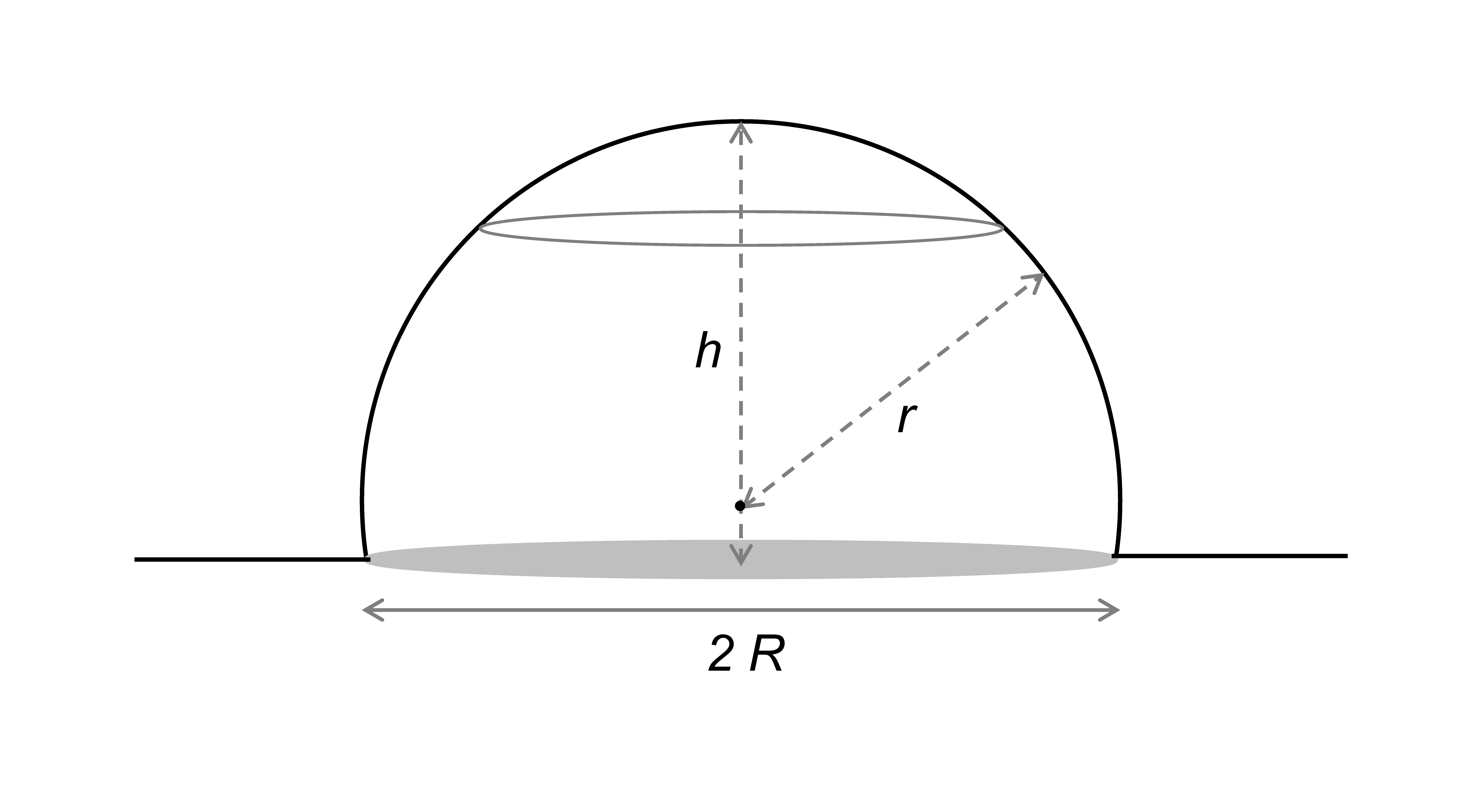}
\vspace*{-0.7cm}
 \caption{A spherical-cap droplet ($h>R$)}
  \label{droplet}
\end{wrapfigure}
A schematic of the spherical-cap droplet geometry is given in Figure~\ref{droplet}. It is also worthwhile to record the corresponding normalized surface area $s_A=s_A(h)$ of the droplet  in terms of its height $h$:
\begin{equation}\label{Adefn}
	s_A={3\over 2}\,\left(h^2+1\right).
\end{equation}
Note that $v=v(h)$ is invertible for $h\in \mathbb R$  (with $h\geq 0$ being of physical importance in our situation) and that, due to the chosen scalings,  $v=1$ if and only if $h=1$. Moreover, we have, of course,
\begin{equation}\label{vhasymp}
	|h| = O\left(|v|^{1/3}\right)\quad \text{as $|v|\rightarrow \infty$.}
\end{equation}
The  pressure $p$ acting on the spherical-cap droplet is caused by surface tension and given by the Young-Laplace law (after  non-dimensionalization):
\begin{equation}\label{ph}
	p = {2\over r}={{4\,h}\over {h^2+1}}.
\end{equation}
Hence the capillary pressure is the same for spherical-cap droplets of the same radius of curvature $r$. More precisely,  if $h_0$ is a solution of the equation $p(h)=\lambda$ with $\lambda\in {\mathbb R}_+$, then by \eqref{randv}, \eqref{ph}, $h_0^{-1}$ is also a solution. In fact, $h_0$ and $h_0^{-1}$ are the only solutions. They are distinct if and only if $\lambda\not= 1$. Consequently, similar to  the terminology used in \cite{LeEA1,LeEA2},  we call a spherical-cap droplet of height $h$ ``large'' if  $h>1$  ($v>1$) and ``small'' if $0\leq h\leq 1$ ($0\leq v\leq 1$).

Because of the relations between radius $r$, height $h$ and volume $v$ given above, we can  write  the pressure $p=p(h)$ in terms of the droplet volume  by setting
\begin{equation}
	P(V) = p(h)\quad \text{whenever $V=v(h)$.}
\end{equation}
In this way droplet pressure $P$ is defined as a function of droplet volume $V$. In a similar way, we obtain the surface area of a spherical-cap droplet as a function of its volume via
\begin{equation}
	S(V) = s_A(h)\quad \text{whenever $V=v(h)$.}
\end{equation}

\subsection{Networks of interaction}
\label{networks}
\begin{figure}[tbhp]
\centering
\vspace*{-0.3cm}
\subfloat[Complete network]{\label{graph1}
    \includegraphics[width=0.25\textwidth]{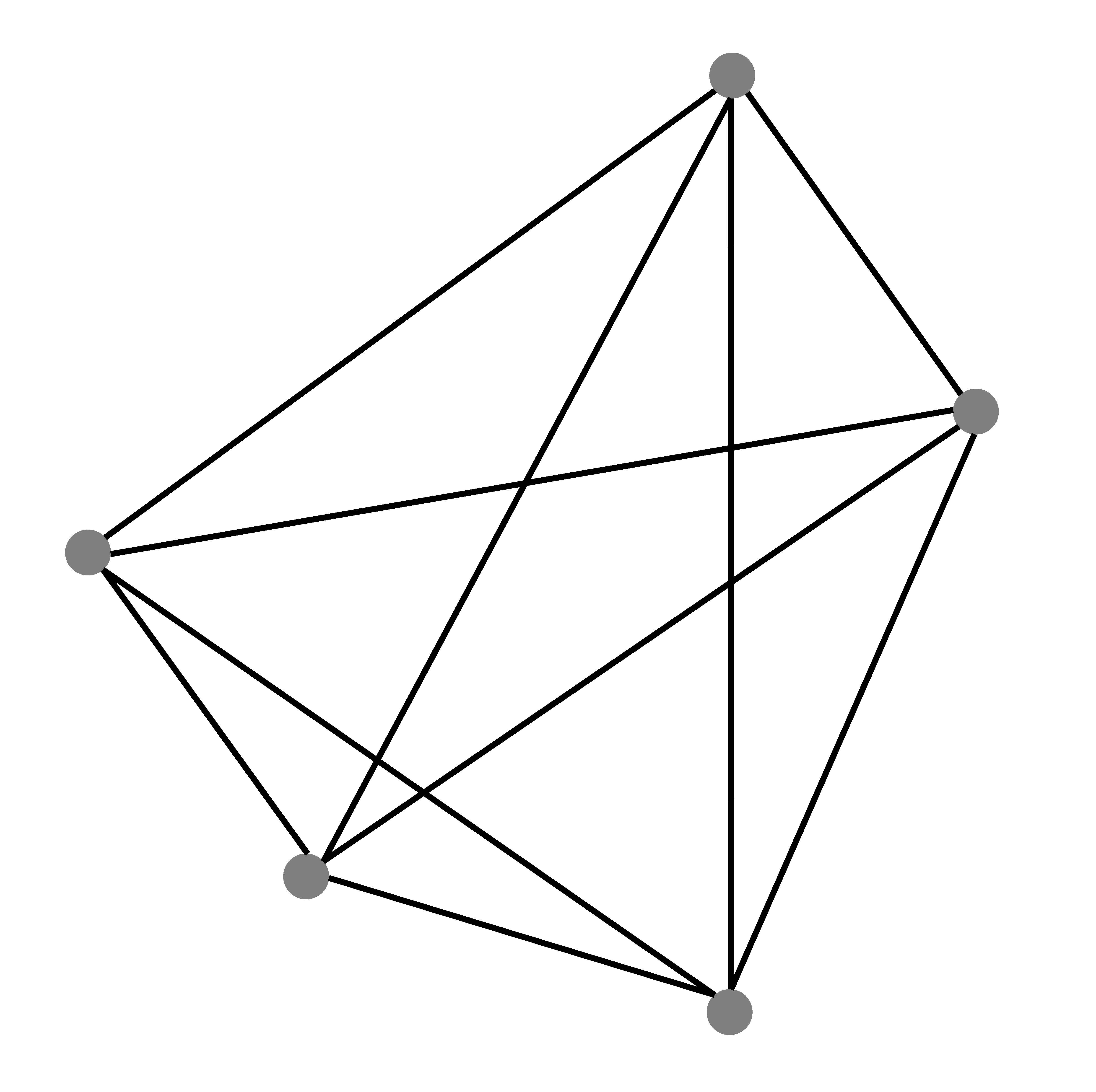}}
\subfloat[Star network]{\label{graph2}
 \includegraphics[width=0.25\textwidth]{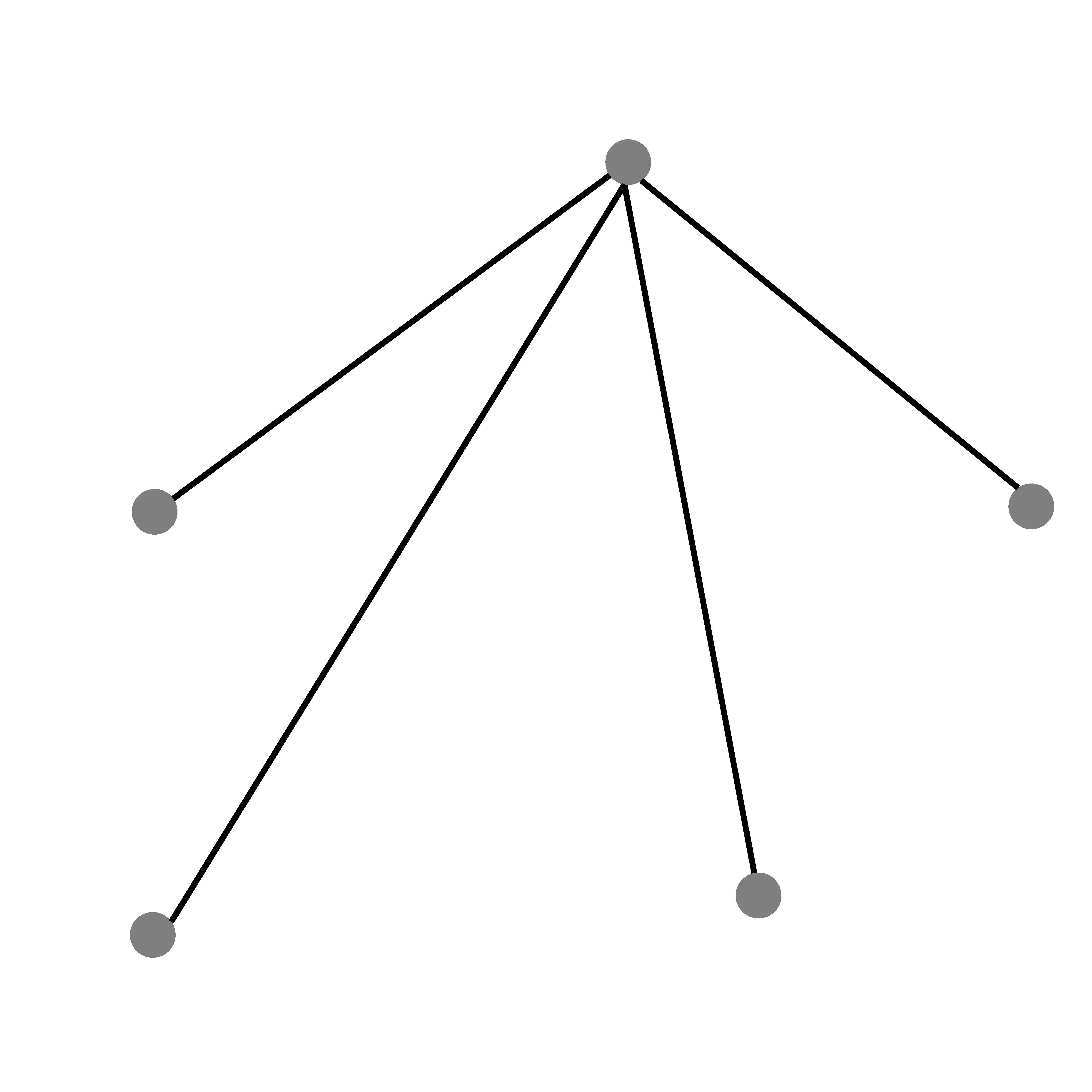}}
 \subfloat[Linear Network]{\label{graph3}
    \includegraphics[width=0.25\textwidth]{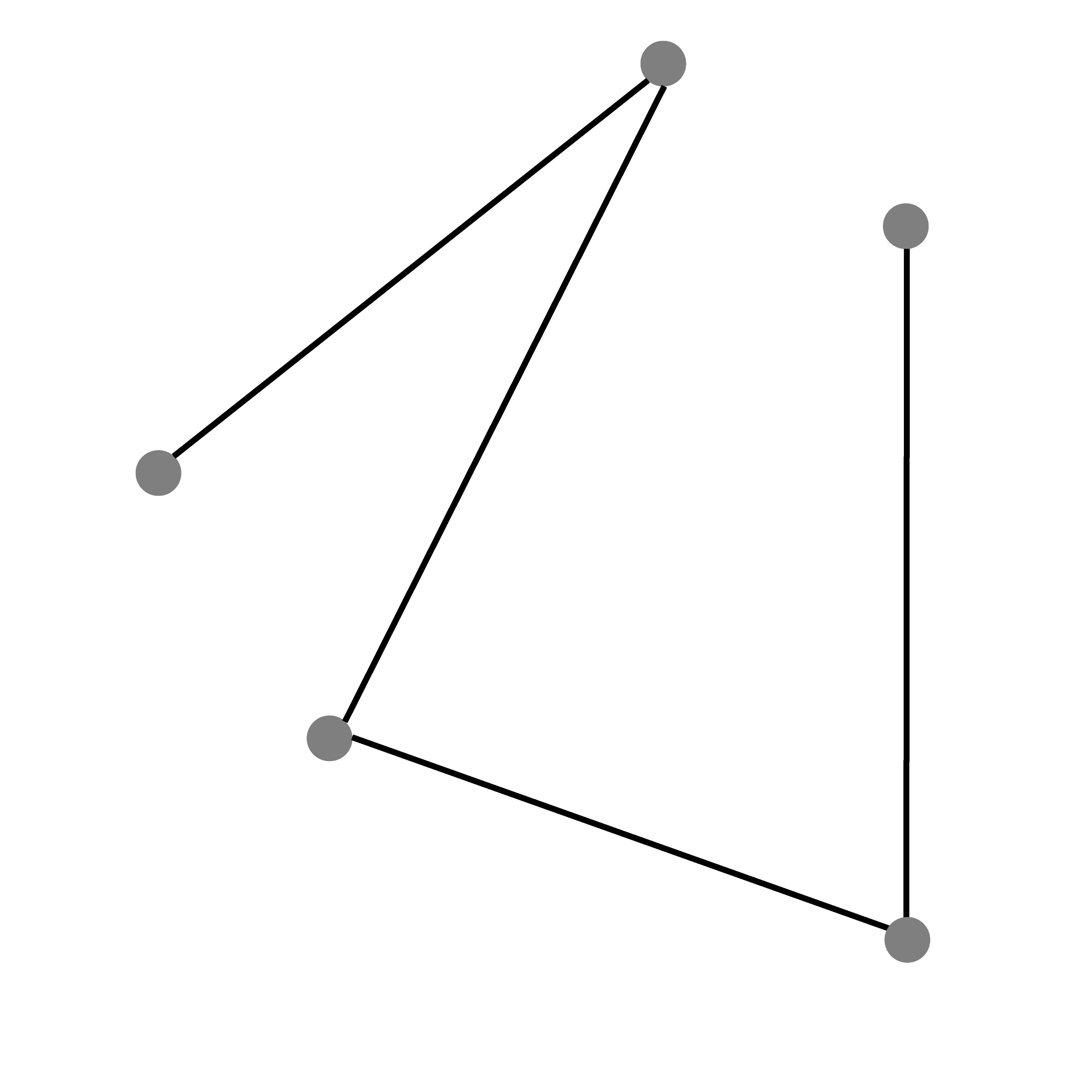}}
\caption{Simple, connected graphs with 5 vertices}
\label{graphs}
\end{figure}
We now consider a network of conduits of  circular cross section (with uniform radius) and possibly variable length connecting $N$ ($\geq 2$) spherical-cap fluid drop\-lets. The network is described by a simple, connected graph whose adjacency relation is given through a weighted adjacency matrix $(c_{i,j})$: For $1\leq i, j\leq N$,
\begin{align}
	 & c_{i,j} = c_{j,i}>0 \quad \text{if there is a channel connecting droplets $i$ and $j$ ($i\not= j$),}\\
	 & c_{i,j} = 0 \quad \text{in all other cases.}
\end{align}
The size of $c_{i,j}$ may be interpreted as a measure for the (inverse)  length of the channel between droplets $i$ and $j$. In a uniform network with conduits of equal length we may assume $c_{i,j}\in \{0,1\}$ after  rescaling. Examples of networks (graphs) with $N=5$ are given in Figure~\ref{graphs}. Important networks include the following simple, connected graphs:
\begin{itemize}
	\item {\em Complete network:} Each vertex is adjacent  to every other vertex.
	\item {\em Star network:} Exactly one vertex (star center)  is adjacent to every other vertex. No other edges are present.
	\item {\em Linear network:} Each vertex is adjacent to no more than two other vertices. Exactly two vertices are adjacent to only one other vertex.
\end{itemize}
\begin{snugshade}
\noindent {\em Socio-economic view}:  The network topology determines the level and efficiency of the $N$  individuals (droplet volumes) interacting with each other. Both local and global interaction of individuals can be modeled. 
\end{snugshade}

\subsection{Rates of exchange}
\label{exchange}
We denote the volume of droplet $j$ at time $t$ by $V_j=V_j(t)$ and  the volumetric flow rate  from droplet $i$ to droplet $j$ at time $t$ by $q_{i,j}=q_{i,j}(t)$. Then we obtain the
 change in volume $V_j$ at time $t$ from conservation of mass (volume):
\begin{equation}
	{d\over {dt}} V_j = \sum_{i=1}^N q_{i,j},\quad 1\leq j\leq N.
\end{equation}
For a power-law fluid the volumetric flow rate $q_{i,j}$ is assumed to be given by the dimensionless closed-form flow rate--pressure change   expression
\begin{equation}\label{fr-pd}
	q_{i,j} = c_{i,j}\, \Delta_s P_{i,j},\quad 1\leq i, j\leq N,
\end{equation}
where
\begin{equation}\label{defDs}
	\Delta_s P_{i,j} = |P(V_i)-P(V_j)|^s\,\text{sgn} \left(P(V_i)-P(V_j)\right)
\end{equation}
and the power-law parameter $s>0$ is the reciprocal of the power-law index of the fluid, see \cite{BiEA}. The case $s=1$ reduces this relationship to laminar pipe flow of
a Newtonian liquid. This is the situation studied in \cite{LeEA1,LeEA2}. For $s>1$ the fluid is shear thinning, i.e.\,the apparent viscosity decreases with increased stress. For $0<s<1$ the fluid is shear thickening where the opposite flow behavior is observed. Shear thinning is commonly seen  for real fluids (including biological fluids, paints and foods), while shear thickening is rare. Observed values of the power-law index in the case of shear-thinning fluids are listed in \cite{BiEA,Ta}. The corresponding values of the 
power-law parameter $s$ fall roughly  in the interval $1\leq s\leq 5$. In contrast, rheological data for shear-thickening behavior are scarce. In fact, we will observe in the next section that the shear thickening parameter range $0<s<1$ is problematic in our model.
\begin{snugshade}
\noindent {\em Socio-economic view}:  The liquid rheology determines the  ``trading friction" or inefficiency of resource exchange between the individuals (apparent viscosity). In the Newtonian regime ($s=1$) the trading friction is constant. In the shear thinning regime ($s>1$)  the trading friction is smaller when  the resource exchange is faster. 

It is worthwhile to note that all competitors perform according to a fixed set of rules (governing equations) and the results are deterministic.  There are no choices or decisions. While this rigidity disallows individuals to change course, one may interpret the resulting behavior as being optimal under the given rules.
\end{snugshade}


\begin{snugshade}
\noindent {\em Socio-economic view}: It is insightful to relate volume scavenging to the socio-economic concept of the 
``winner-take-all'' system (or market), a notion  systematically explored by Frank and Cook in their seminal work \cite{FrCo-art, FrCo}. Such a system is
 characterized by a concentration phenomenon where a disproportionately large reward falls to one or a few winners, even though other competitors start out with comparable (or even slightly more) resources and perform only marginally worse 
than the winner. The losing competitors are not rewarded. Winner-take-all systems are observed in sports, the entertainment industry, 
the rise of multi-national companies and elections. The ``egalitarian'' or ``all-share-evenly" system exhibits the opposite outcome. Here, comparable initial  resources and similar performance leads to comparable outcomes. Remarkably, both winner-take-all and all-share-evenly behaviors  can be  observed within our physics-based model.   This is an unusual feature  since  models which
 follow the  ``cumulative advantage principle"  typically exhibit  concentrated (coarsened) outcomes like the winner-take-all outcome only.  In our model the distinction of winner-take-all outcome versus egalitarian outcome depends on the average resource per individual 
$\overline{V}$ (more precisely: the average droplet volume $\overline{V}$). 
\end{snugshade}

\section{The Competition: Motivation and Numerical Results}
\label{sec_mot}

The flow is modeled by the system
\begin{equation}\label{sys}
	{d\over {dt}} V_j = \sum_{i=1}^N c_{i,j}\, \Delta_s P_{i,j},\quad 1\leq  j\leq N
\end{equation}
for the unknown droplet volumes $V_j=V_j(t)$
together with the initial volume configuration
\begin{equation}\label{ini}
	V_j(0) = v_j, \quad 1\leq  j\leq N.
\end{equation}
The initial value problem \cref{sys,ini} has a local--in time solution for every $s>0$ (and every choice of initial data). Solutions are unique for every $s\geq 1$. In the case $0<s<1$, solutions might lose uniqueness at points where the right-hand side of \cref{sys} is not Lipschitz continuous.  Indeed, the occurrence of  non-unique solutions is readily confirmed when  $0<s<1$ and  $N=2$.  

Non-uniqueness is clearly an unphysical artifact of our model. Instead of excluding the case $0<s<1$ outright, we carry it along as a mathematical curiosity and address  some resultant challenges of analytical interest.

Let us give  an elementary characterization of the equilibria of system \cref{sys}. The connectedness of the conduit network immediately implies:
\begin{proposition}\label{stat}
 For\, ${\mathbf V}^* = \left(V^*_1,\ldots,V^*_N\right) \in {\mathbb R}^N$ to be an equilibrium of the system \cref{sys}, it is necessary and sufficient that
$
	P\left(V^*_i\right) = P\left(V^*_j\right)$,  $1\leq i, j\leq N.
$
\end{proposition}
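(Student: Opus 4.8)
The plan is to write $p_j := P(V_j^*)$ and to observe that $\mathbf{V}^*$ is an equilibrium of \eqref{sys} precisely when each of the $N$ balance sums $\sum_{i=1}^N c_{i,j}\,\Delta_s P_{i,j}$ vanishes. The sufficiency direction is then immediate: if all pressures $p_j$ coincide, then by \eqref{defDs} every $\Delta_s P_{i,j}=0$, so each sum is zero. All of the content sits in the necessity direction, and the structural fact that powers it—valid for every $s>0$—is that, by \eqref{defDs}, the quantity $\Delta_s P_{i,j}$ has the same sign as $p_i-p_j$ and vanishes if and only if $p_i=p_j$. This is precisely what lets the argument run uniformly in the rheological parameter $s$.

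For the necessity I would run a discrete maximum-principle argument. Set $M=\max_{1\le j\le N}p_j$ and let $J=\{\,j:p_j=M\,\}$ be the set of pressure-maximizing droplets. Suppose, for contradiction, that the pressures are not all equal, so that $J$ is a nonempty proper subset of $\{1,\dots,N\}$. Because the graph is connected, some edge must join $J$ to its complement; fix such an edge, let $j\in J$ be its endpoint in $J$, and let $i\notin J$ be its endpoint outside. Evaluate the (vanishing) balance sum at this vertex $j$: every neighbor $k$ satisfies $p_k\le M=p_j$, whence $\Delta_s P_{k,j}\le 0$ and $c_{k,j}\,\Delta_s P_{k,j}\le 0$; in addition the particular neighbor $i$ obeys $p_i<M$, so $c_{i,j}\,\Delta_s P_{i,j}<0$ strictly. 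Adding the terms gives $\sum_{k} c_{k,j}\,\Delta_s P_{k,j}<0$, contradicting the equilibrium condition. Hence $J=\{1,\dots,N\}$ and all pressures agree.

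The one delicate point, and the place where the hypotheses are genuinely used, is the appeal to connectedness to produce a boundary edge out of the maximizing set $J$: on a disconnected network the pressures could equilibrate separately on each component, and the conclusion would fail. As an alternative derivation I could proceed energetically—multiply the $j$th balance equation by $p_j$, sum over $j$, and use the symmetry $c_{i,j}=c_{j,i}$ together with the antisymmetry $\Delta_s P_{i,j}=-\Delta_s P_{j,i}$ to collapse the double sum to $-\tfrac12\sum_{i,j}c_{i,j}\,|p_i-p_j|^{s+1}$. Since each summand is nonpositive, this can vanish only if $p_i=p_j$ across every present edge, and connectedness again propagates equality to the whole network. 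Either route closes the proof.
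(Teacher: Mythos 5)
Your proposal is correct. The paper offers no written proof here, stating only that connectedness of the conduit network ``immediately implies'' the claim, and your discrete maximum-principle argument is a sound way to make that precise: the sufficiency is trivial from \cref{defDs}, and for necessity the strict sign of $c_{i,j}\,\Delta_s P_{i,j}$ on an edge leaving the set of pressure-maximizing vertices forces the balance sum at that vertex to be strictly negative, which is exactly where connectedness enters. Your alternative energy route is also valid and is in fact the same computation that underlies \cref{Ader}, where the double sum collapses to $-\tfrac{1}{2N}\sum_{i,j}c_{i,j}\left|P(V_i)-P(V_j)\right|^{s+1}$; either argument closes the proof.
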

\begin{snugshade}
\noindent {\em Socio-economic view}: At equilibrium the ``push and pull" by capillary pressures are the same for all competitors.
\end{snugshade}
For $0<s<1$, Lipschitz continuity of the right-hand side of \cref{sys} is lost at equilibria.

The model equations given by \cref{sys} are based on conservation of volume (mass). Indeed, we obtain immediately from the symmetry of the matrix $(c_{i,j})$:

\begin{proposition}\label{masscon}
	Any  solution ${\mathbf V}={\mathbf V}(t)=\left(V_1(t),\ldots,V_N(t)\right)$ of the system
	\cref{sys}  for $0\leq t\leq T$ satisfies
	$\sum_{j=1}^N V_j(t) =\sum_{j=1}^N V_j(0)$.
\end{proposition}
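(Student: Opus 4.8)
The plan is to show that the total volume $\Phi(t):=\sum_{j=1}^N V_j(t)$ has vanishing time derivative, so that it is constant on $[0,T]$. Since a solution of \cref{sys} is differentiable on $[0,T]$ and the right-hand sides are prescribed as \emph{finite} sums, I may differentiate $\Phi$ and interchange differentiation with the summation without any analytic subtlety, even in the non-Lipschitz range $0<s<1$, where the argument will be applied at each point of differentiability.

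Differentiating and substituting \cref{sys} gives
\begin{equation*}
  \frac{d}{dt}\Phi(t) = \sum_{j=1}^N \frac{d}{dt} V_j = \sum_{j=1}^N \sum_{i=1}^N c_{i,j}\,\Delta_s P_{i,j}.
\end{equation*}
The heart of the matter is the interplay of two structural features. First, from the definition \cref{defDs} the factor $|P(V_i)-P(V_j)|^s$ is symmetric under the exchange $i\leftrightarrow j$, whereas $\text{sgn}\left(P(V_i)-P(V_j)\right)$ changes sign; hence $\Delta_s P_{i,j} = -\Delta_s P_{j,i}$, and in particular the diagonal terms $\Delta_s P_{j,j}$ vanish. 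Second, by hypothesis the weighted adjacency matrix is symmetric, $c_{i,j}=c_{j,i}$. Consequently the summand $c_{i,j}\,\Delta_s P_{i,j}$ is antisymmetric in the index pair $(i,j)$.

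It then remains only to observe that a double sum over the full index set $\{1,\dots,N\}^2$ of a quantity antisymmetric in its two indices is zero: the diagonal contributions vanish, and each off-diagonal pair cancels, $c_{i,j}\,\Delta_s P_{i,j}+c_{j,i}\,\Delta_s P_{j,i}=0$. Thus $\frac{d}{dt}\Phi\equiv 0$ on $[0,T]$, and integrating in $t$ yields $\Phi(t)=\Phi(0)$, which is the claimed conservation law. I expect no genuine obstacle here; the only point deserving care is that in the shear-thickening range $0<s<1$ the right-hand side need not be Lipschitz, but the cancellation above is purely algebraic and holds wherever the solution is differentiable, so the conclusion is unaffected.
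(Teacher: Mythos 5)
Your argument is correct and is exactly the paper's: the conservation law follows immediately from the symmetry $c_{i,j}=c_{j,i}$ together with the antisymmetry $\Delta_s P_{i,j}=-\Delta_s P_{j,i}$, which makes the double sum vanish. Your extra remark about the non-Lipschitz range $0<s<1$ is a sensible precaution but changes nothing, since solutions of \cref{sys} are by definition differentiable wherever they satisfy the equation.
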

Hence the average droplet volume 
\begin{equation}
	\overline{V} = {1\over N}\,\sum_{j=1}^N V_j(t)
\end{equation}
is an invariant of the system.
Let us explicitly state the underlying
 (physically relevant)
\begin{equation}
	\text{\it{Standing Assumption}:}\qquad \overline{V}>0
\end{equation}
and define:

\begin{definition}
	For given $\overline{V}$, let
\begin{equation}
	{\mathbb V}\left(\overline{V}\right) = \left\{{\mathbf v} = (v_1,\ldots,v_N)\in {\mathbb R}^N\,\Biggr|\, {1\over N}\,\sum_{j=1}^N v_j = \overline{V}\right\}.
\end{equation}
\end{definition}
By \cref{masscon}, the set ${\mathbb V}\left(\overline{V}\right)$ is forward invariant in the sense that any solution ${\mathbf V}={\mathbf V}(t)$
of  \cref{sys,ini} for $0\leq t\leq T$ with initial data in ${\mathbb V}\left(\overline{V}\right)$ takes values  in ${\mathbb V}\left(\overline{V}\right)$ for $0\leq t \leq T$.

\begin{figure}[tbhp]
 \centering
\subfloat[$s=1.0$]{\label{s10}
    \includegraphics[width=0.48\textwidth]{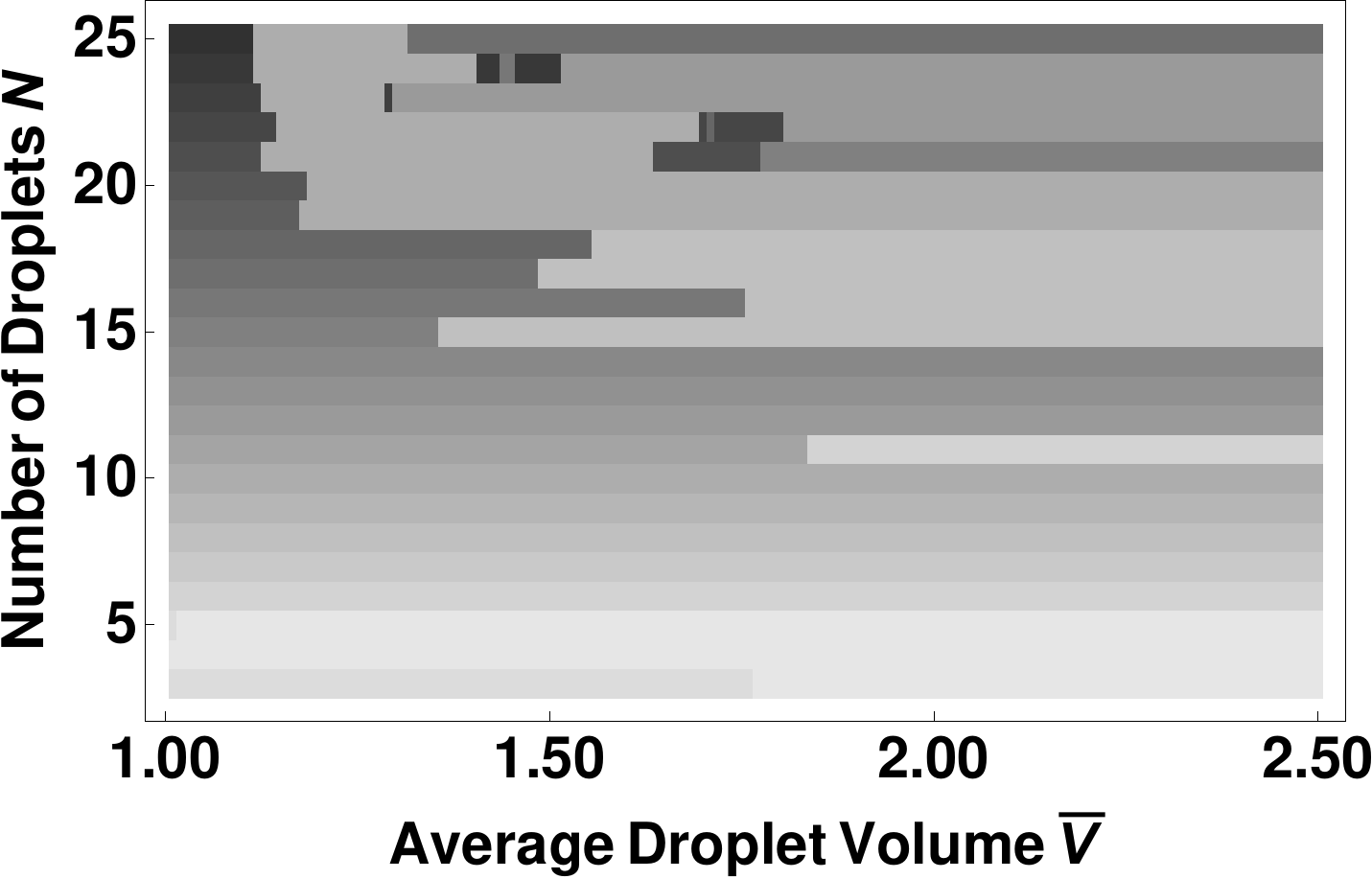}}
 \subfloat[$s=1.1$]{\label{s11}
    \includegraphics[width=0.48\textwidth]{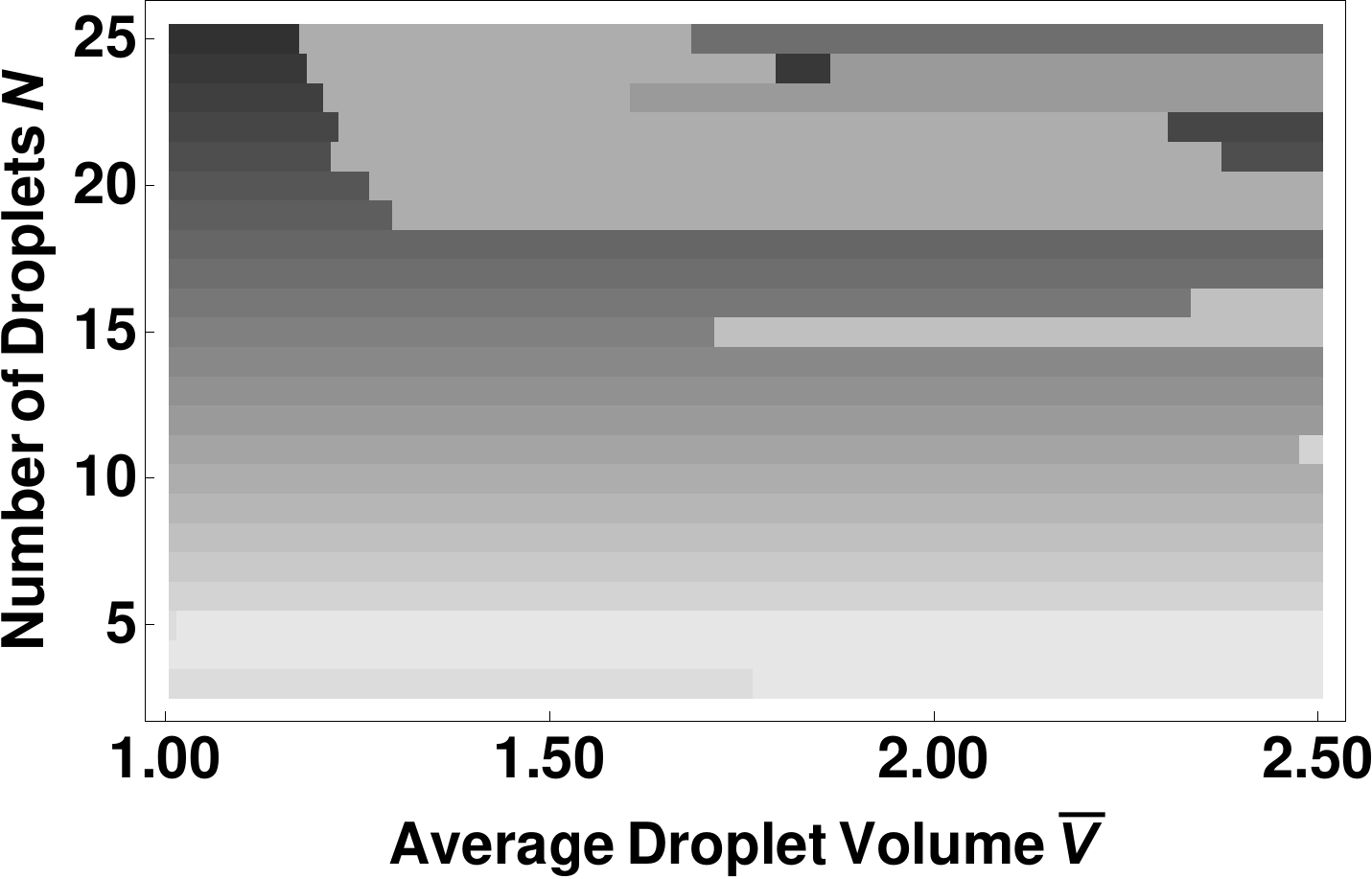}}\vspace*{-0.4cm}
 \newline
\centering
\subfloat{
    \includegraphics[width=0.9\textwidth]{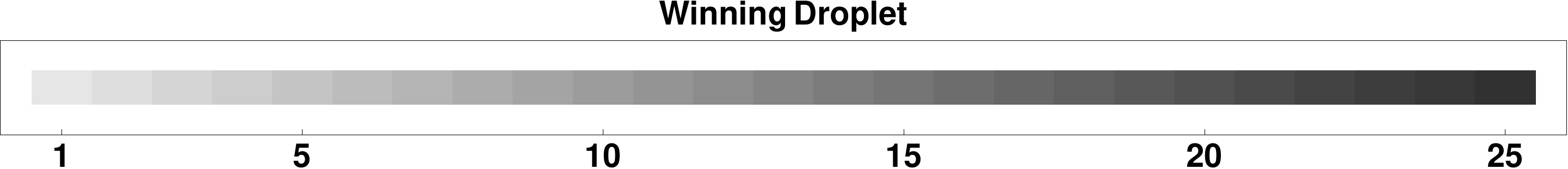}}
  \newline
\setcounter{subfigure}{2}
 \subfloat[$s=1.2$]{\label{s12}
      \includegraphics[width=0.48\textwidth]{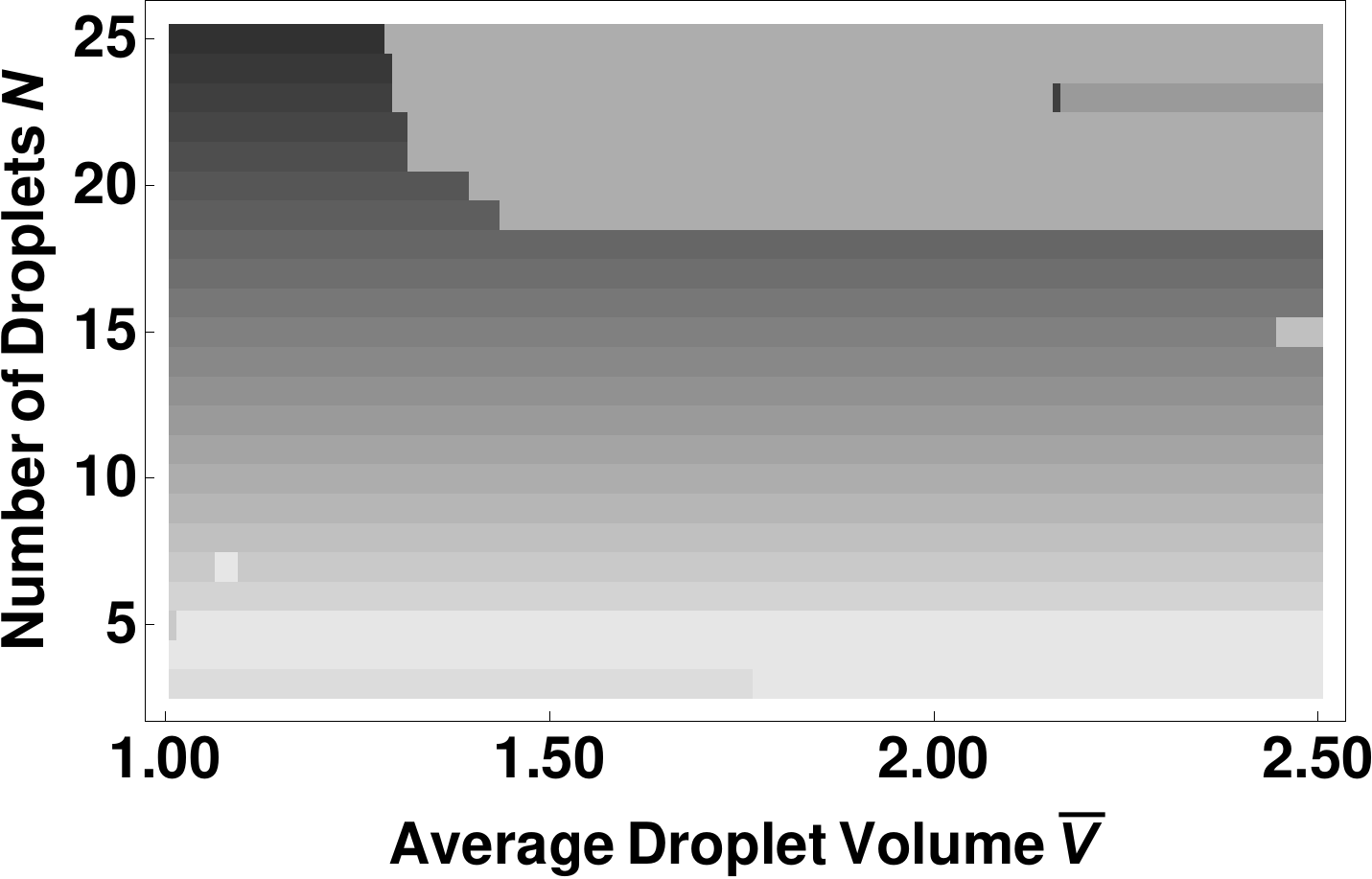}}
 \subfloat[$s=1.3$]{\label{s13}
      \includegraphics[width=0.48\textwidth]{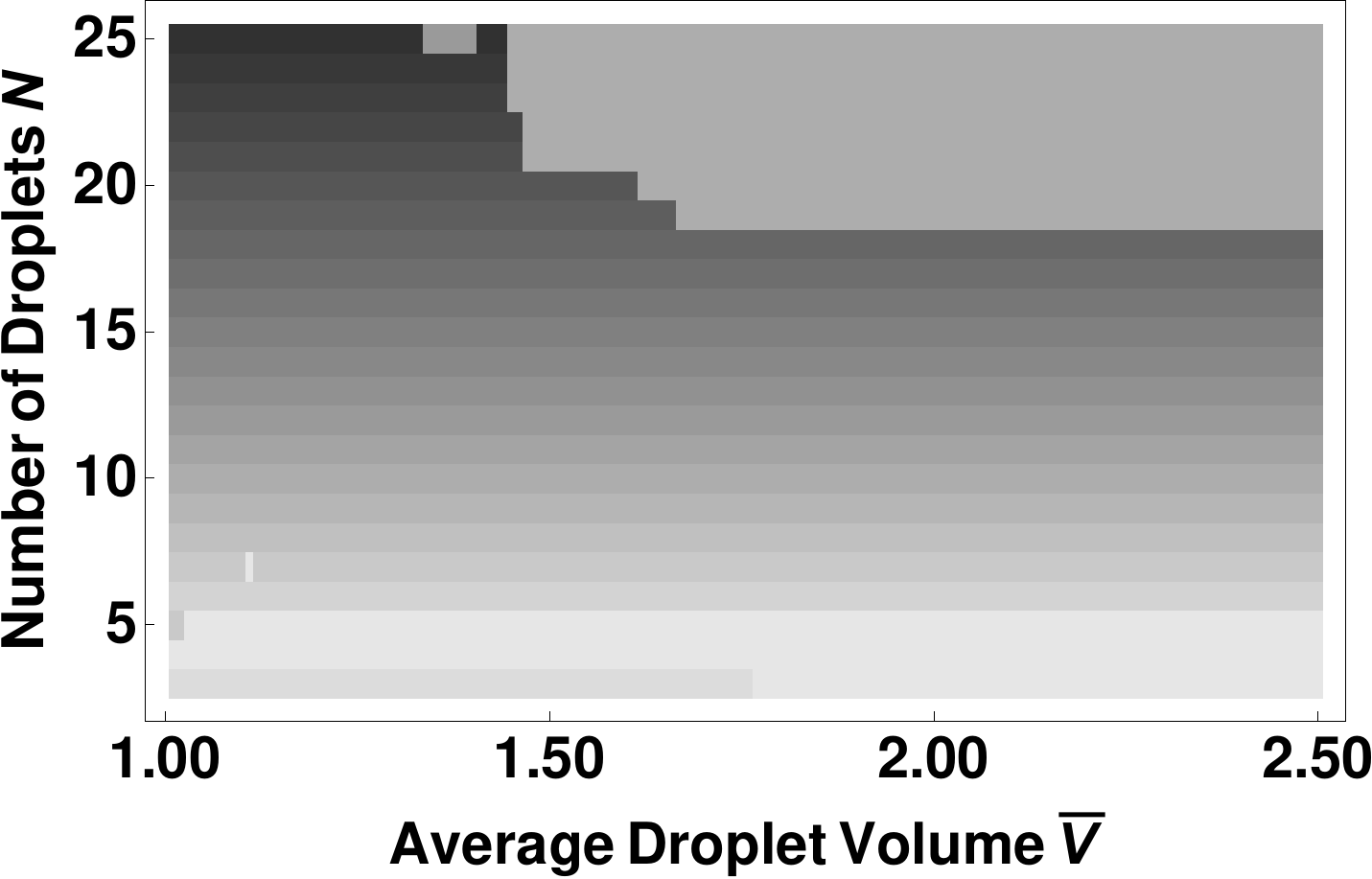}}
\caption{Winning droplets for (a) $s=1.0$, (b) $s=1.1$, (c) $s=1.2$, (d) $s=1.3$. Winning droplets with larger index $j$, $1\leq j\leq N$, are depicted by darker grays.}
\label{winners1}
\end{figure}
Having introduced the relevant parameters $N$, $\overline{V}$ and $s$, we now report some numerical experiments to document  their impact on the flow dynamics.
First we consider solutions of the system \cref{sys} with $1<\overline{V}\leq 2.5$ for a uniform linear network with $3\leq N\leq 25$ for various values of $s\geq 1$.  Specifically, we take $c_{i,i+1}=1=c_{i+1,i}$, $1\leq i\leq N-1$, and $c_{i,j}=0$ in all other cases.
As in \cite{LeEA1} we choose as initial data a small perturbation of the equilibrium $\left(\overline{V},\ldots,\overline{V}\right)$ consisting of droplets of equal size:
\begin{equation}
	v_j = \overline{V}+ 10^{-3}\,{{2\,j-(N+1)}\over {N-1}}, \quad 1\leq j\leq N.
\end{equation}
The long-term behavior of solutions is computed by using standard numerical integrators. We start with a value of $\overline{V}=1.01$ and increase it successively  to $\overline{V}=2.5$ in increments of $0.01$. Similarly to the Newtonian situation in \cite{LeEA1}, volume scavenging for $s>1$ leads to a depletion of volumes for all but one droplet that emerges as ``winner.'' In fact, $N-1$ small droplets remain with the winning droplet being large. As we will see later (and was shown in \cite{LeEA1} for the case $s=1$), this configuration corresponds to a stable equilibrium of the system. The winners of volume scavenging are displayed in  \cref{winners1} for $s=1.0$, $s=1.1$, $s=1.2$ and $s=1.3$. Winning droplets with larger index $j$, $1\leq j\leq N$, are depicted by darker grays. \Cref{s10} reproduces the Newtonian ``change-of-winner" result of \cite{LeEA1} with more details. 
\begin{snugshade}
\noindent {\em Physical mechanism}: The physical mechanism behind volume scavenging, independent of the particular fluid model,  was made plausible in \cite{SlSt} in the case $N=2$: When two droplets of equal volumes are large, adding a small volume to one droplet removes the same amount from the other one (mass conservation). While the pressure in the first droplet decreases, it increases in the second droplet, pushing even more volume toward the first droplet. Hence this configuration is unstable.
Likewise, when both droplets  are small of equal volumes less than 1, the configuration is stable. A similar behavior is also observed in  soap films \cite{Bo} and is reminiscent of the ``two-balloon experiment" \cite{DrEA,WeBa}.
\end{snugshade}
\noindent When $N>2$, the situation is more complicated as we will demonstrate later in this work. Animations of volume scavenging can be found online at
\url{https://www.dropbox.com/s/0t4ytk5jkx5uxd0/VAllLog.mpg?dl=0} with a legend given at 
 \url{https://www.dropbox.com/s/8teepdlybd2j3op/Legend.pdf?dl=0}.

\begin{snugshade}
\noindent {\em Computational aside}: In our computation we have noticed  a change in the dynamics with respect to the power-law parameter $s$: The transient times to equilibrium increase by several orders of magnitude as $s$ increases. This phenomenon occurs as the apparent viscosity (friction) of the fluid decreases with increasing shear rates. 
Hence the longer transients are surprising, at least at first sight, since, na\"{i}vely, we anticipate a shear-thinning fluid ($s>1$) to become ``thinner'' than a Newtonian one  ($s=1$). However, on second thought, one realizes that the apparent viscosity of shear-thinning fluids actually increases when shear rates become smaller. In volume scavenging an increase in the apparent viscosity occurs when pressure drops are small. Hence in low-pressure drop regions a shear-thinning fluid   is expected to flow slower than a Newtonian one. This slow-down manifests itself in the flow dynamics, especially near equilibria. 
To rationalize the observed transient behavior further, we refer to the similarity with the  ``toy problem'' $x^\prime = -x^s$, $x(0)=x_0>0$. The solution $x(t)$ decays to the equilibrium $0$ for every $s\geq 1$. The decay is exponential when $s=1$. For $s>1$, however,  the decay is only algebraic like $t^{-1/(s-1)}$. 
\end{snugshade}

\begin{figure}[tbhp]
\centering
\subfloat[$s=2.0$]{ \label{graph20}
    \includegraphics[width=0.48\textwidth]{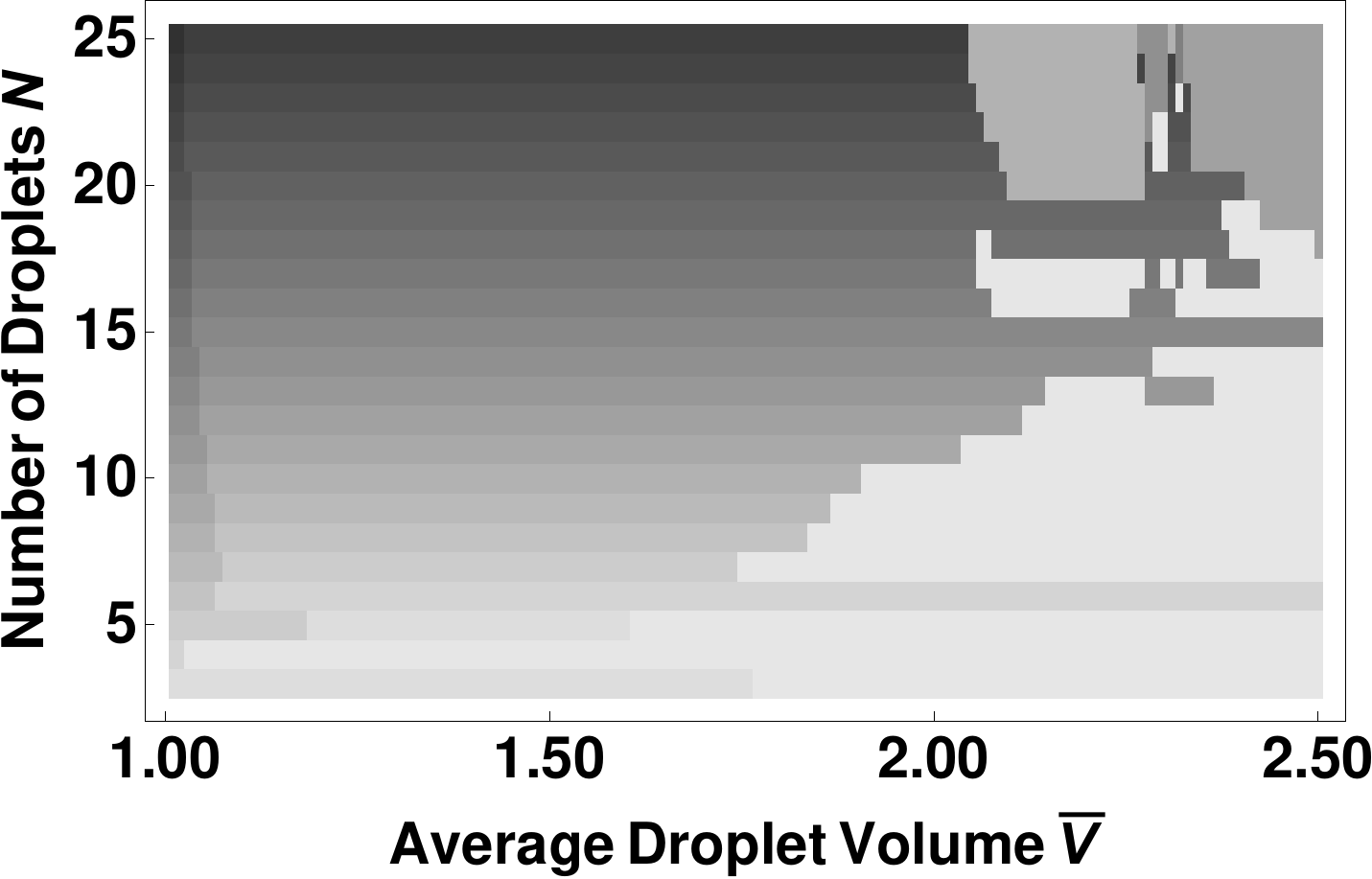}}
 \subfloat[$s=3.0$]{  \label{graph30}
    \includegraphics[width=0.48\textwidth]{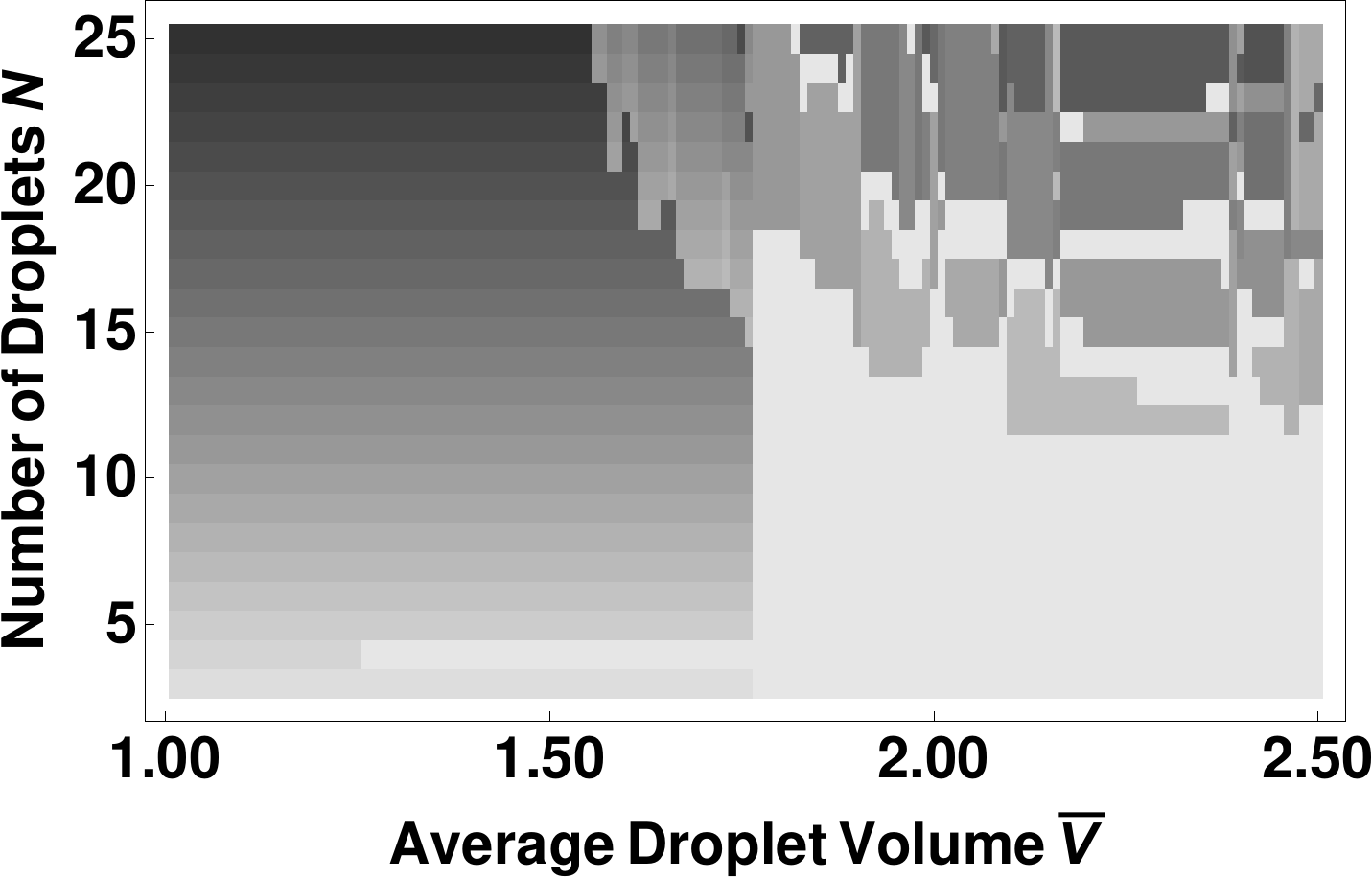}}\vspace*{-0.4cm}
 \newline
\subfloat{\includegraphics[width=0.9\textwidth]{winner.pdf}}
\caption{Winning droplets for (a) $s=2.0$, (b) $s=3.0$}
\label{winners2}
\end{figure}
\Cref{winners1} shows that even small variations in $s$ lead to notable changes in the winning droplet configuration. This fact is particularly visible when $N\geq 15$. It appears  that the basins of attraction of the corresponding stable equilibria not only vary drastically with $N$ and $\overline{V}$ (as indicated in \cite{LeEA1}), but also with $s$. These changes, it seems,  become    more pronounced for larger values of $s$ as seen in \cref{winners2}. The exact mechanisms underlying the selection of winning droplets remain largely unknown, although some  explanations to this end are given in \cite{LeEA1} for networks with $N=3$ and $s=1$.
\begin{snugshade}
\noindent {\em Engineering aside}:  As suggested by our model, it might be possible to identify (or narrow down)  an unknown power-law parameter $s$ from a small number of
change-of-winner data. In this way such information could prove useful  to replace shear stress versus shear rate measurements  in conventional rheometers.
 \end{snugshade}
\begin{figure}[tbhp]
\centering
\subfloat[$n=0$]{\label{n0}
    \includegraphics[width=0.45\textwidth]{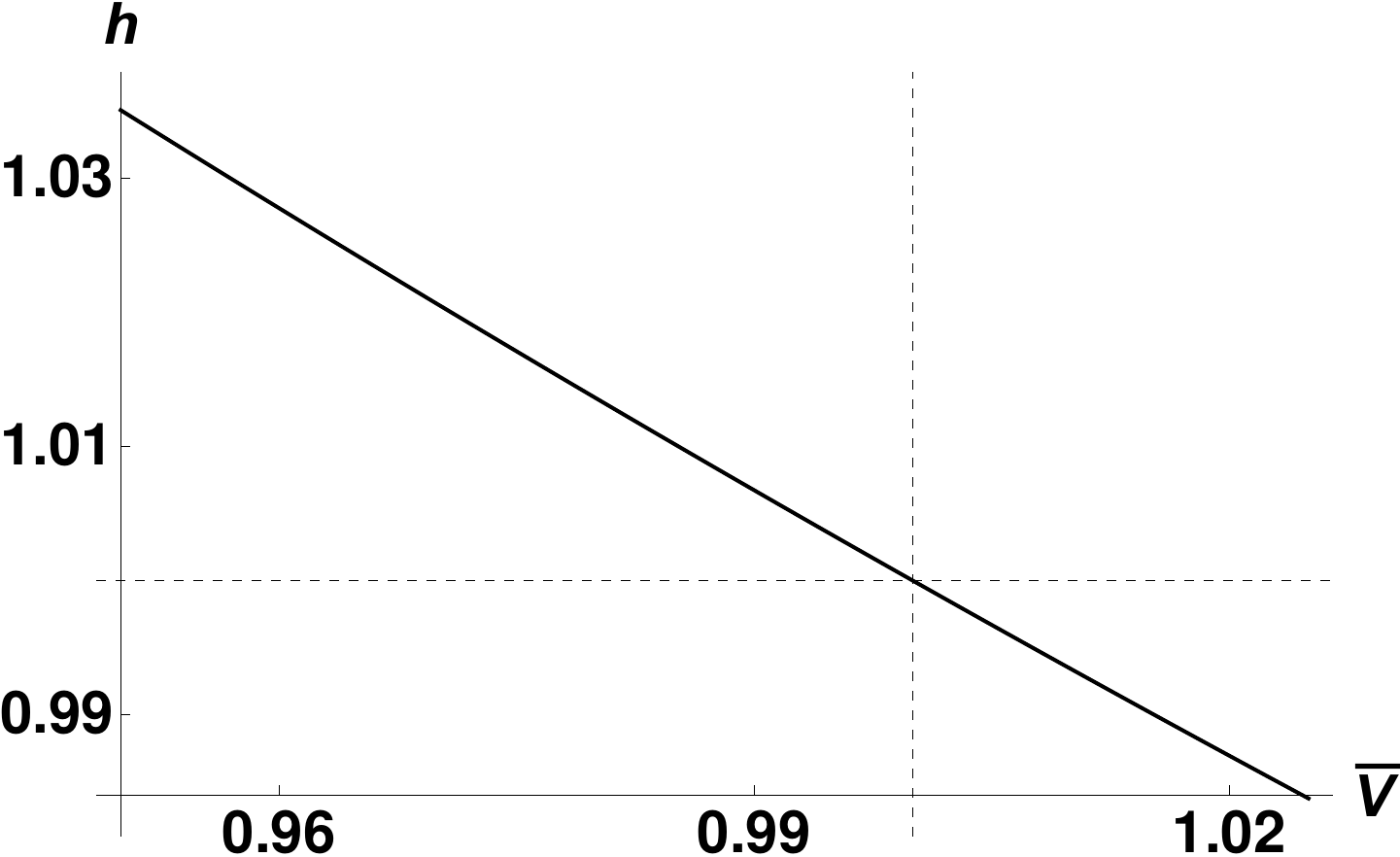}}
\subfloat[$n=1$]{\label{n1}
    \includegraphics[width=0.45\textwidth]{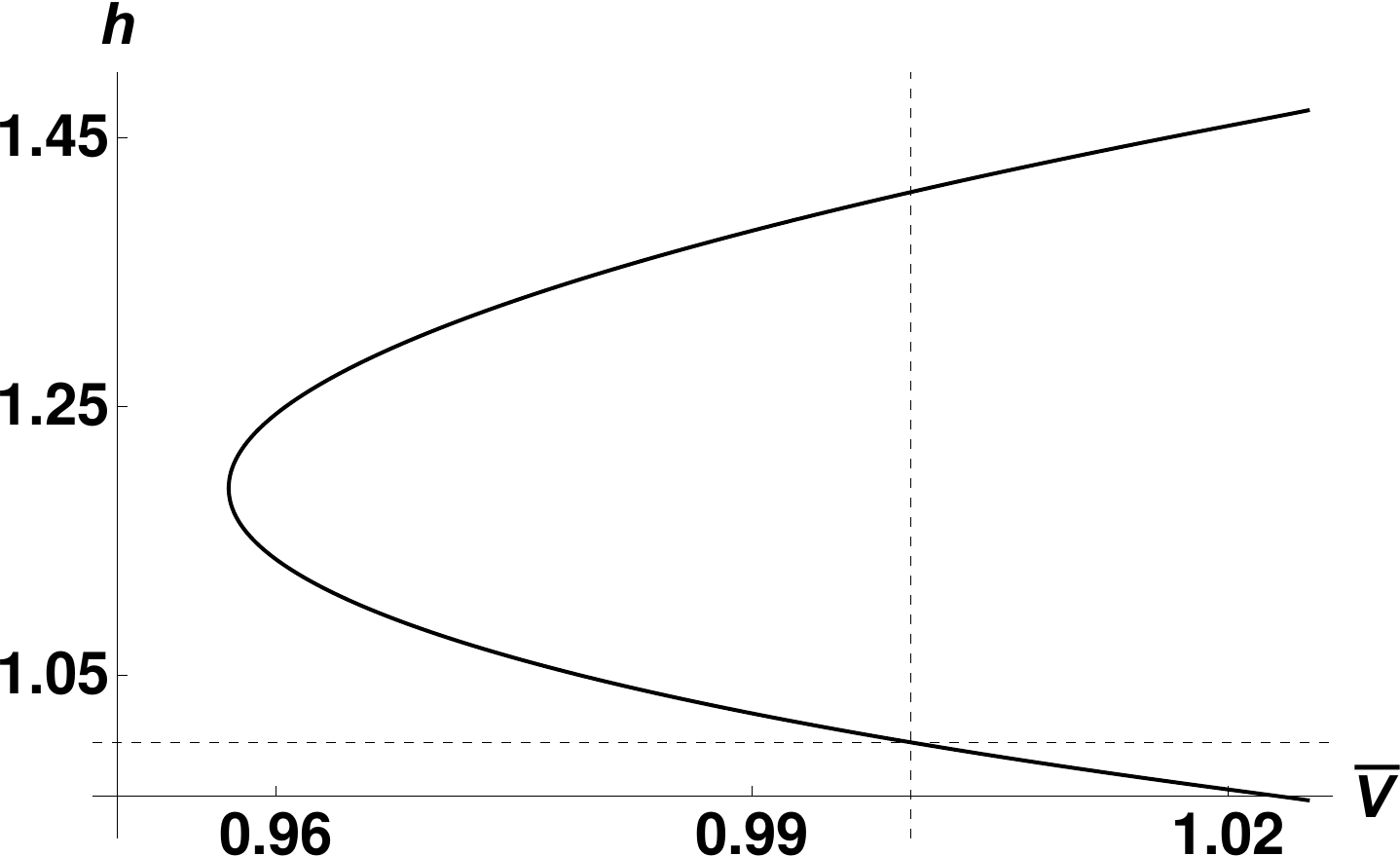}}
\newline
\subfloat[$n=2$]{\label{n2}
      \includegraphics[width=0.45\textwidth]{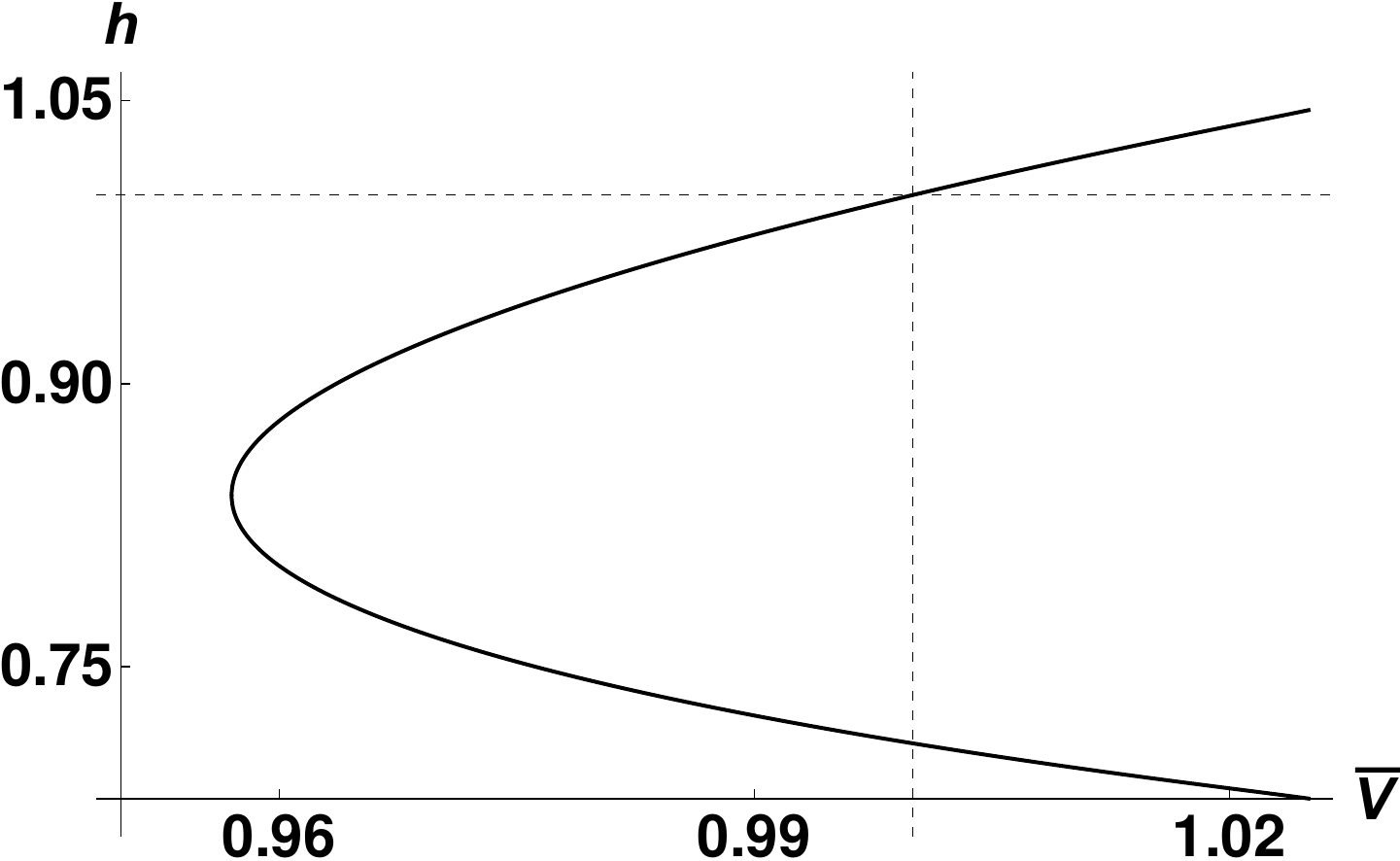}}
\subfloat[$n=3$]{\label{n3}
      \includegraphics[width=0.45\textwidth]{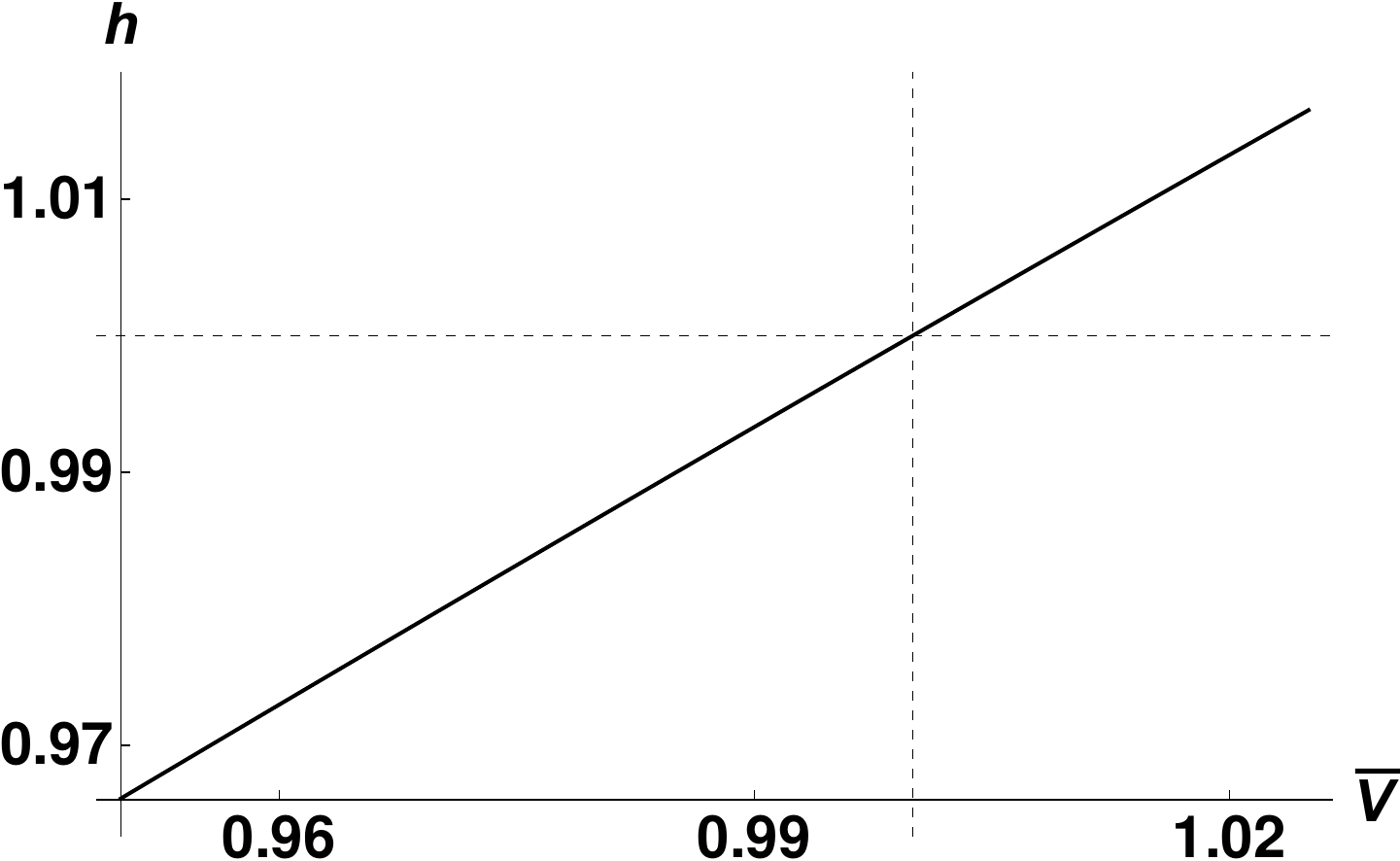}}
\caption{Positive solutions $h$ of Equation \cref{n3eq} in dependence on $\overline{V}$}
\label{zerosh}\vspace*{-0.4cm}
\end{figure}
Our second concern is to  highlight possible issues with the occurrence of equilibria in dependence on $\overline{V}>0$. We concentrate on the case $N=3$. In light of \cref{stat}, any equilibrium of the system \cref{sys} must consist of $n$ ($0\leq n\leq 3$) large droplets of height $h>1$ and $3-n$ small droplets of height $h^{-1}$ such that the capillary pressure in all droplets is equal. By mass conservation, for a given average droplet volume $\overline{V}>0$ we have
\begin{equation}
	n\,v(h) + (3-n)\,v\left(h^{-1}\right) = 3\,\overline{V},
\end{equation}
or equivalently,
\begin{equation}\label{n3eq}
	n\,h^6+3\,n\,h^4-12\,\overline{V}\,h^3+3\,(3-n)\,h^2+3-n=0.
\end{equation}
\Cref{zerosh} shows the positive solutions $h$ of this equation for varying values of $\overline{V}$ with $n\in\{0, 1, 2, 3\}$. We focus  on solutions  near $h=1$, $\overline{V}=1$ (intersection point of dashed lines). For given $n$, the solutions of interest are those with $h>1$. As expected, in \cref{n0,n3} we encounter equilibria consisting of droplets of the same size. They are the only equilibria which arise for $0<\overline{V}<1$ with $n=0$ (three small droplets of height $h^{-1}$ and zero large droplets) and for $\overline{V}>1$ with $n=3$ (three  large droplets of height $h>1$ and zero small droplets). As seen in \cref{n2}, equilibria with exactly  two large droplets occur for $\overline{V}>1$, but not for $0<\overline{V}<1$. The most interesting case is $n=1$. Here, \cref{n1} confirms  that for $\overline{V}>1$, we have equilibria consisting of exactly one large droplet (and two small ones). If, however,  $\overline{V}$ falls in the  interval $(0.957,1)$, two types of equilibria consisting of exactly one  large  and two  small droplets  exist. For smaller values of $\overline{V}$, there are no equilibria with $n=1$. We illustrate all possible equilibria (up to permutation of the droplets) with one large and two small droplets for $\overline{V}=0.98$ by volume percentages of the total droplet volume $3\,\overline{V} = 2.94$ in \cref{pies}. 
\begin{figure}[tbhp]\vspace*{-0.7cm}
\centering
 \subfloat[Volumes $V_L$, $V_S$, $V_S$]{\label{LSS}
    \includegraphics[width=0.32\textwidth]{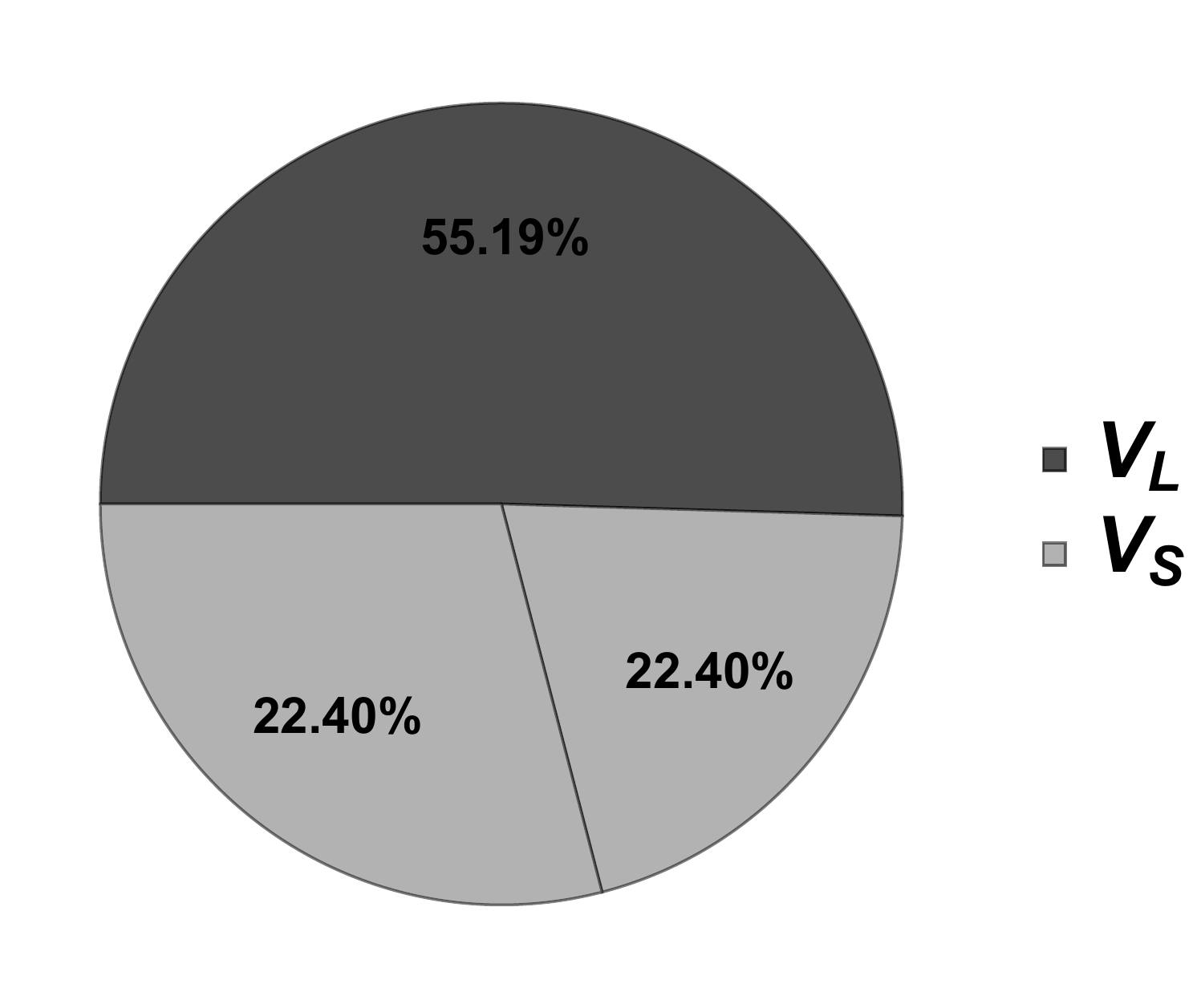}}
\subfloat[Volumes $V_l$, $V_s$, $V_s$]{\label{lss}
    \includegraphics[width=0.32\textwidth]{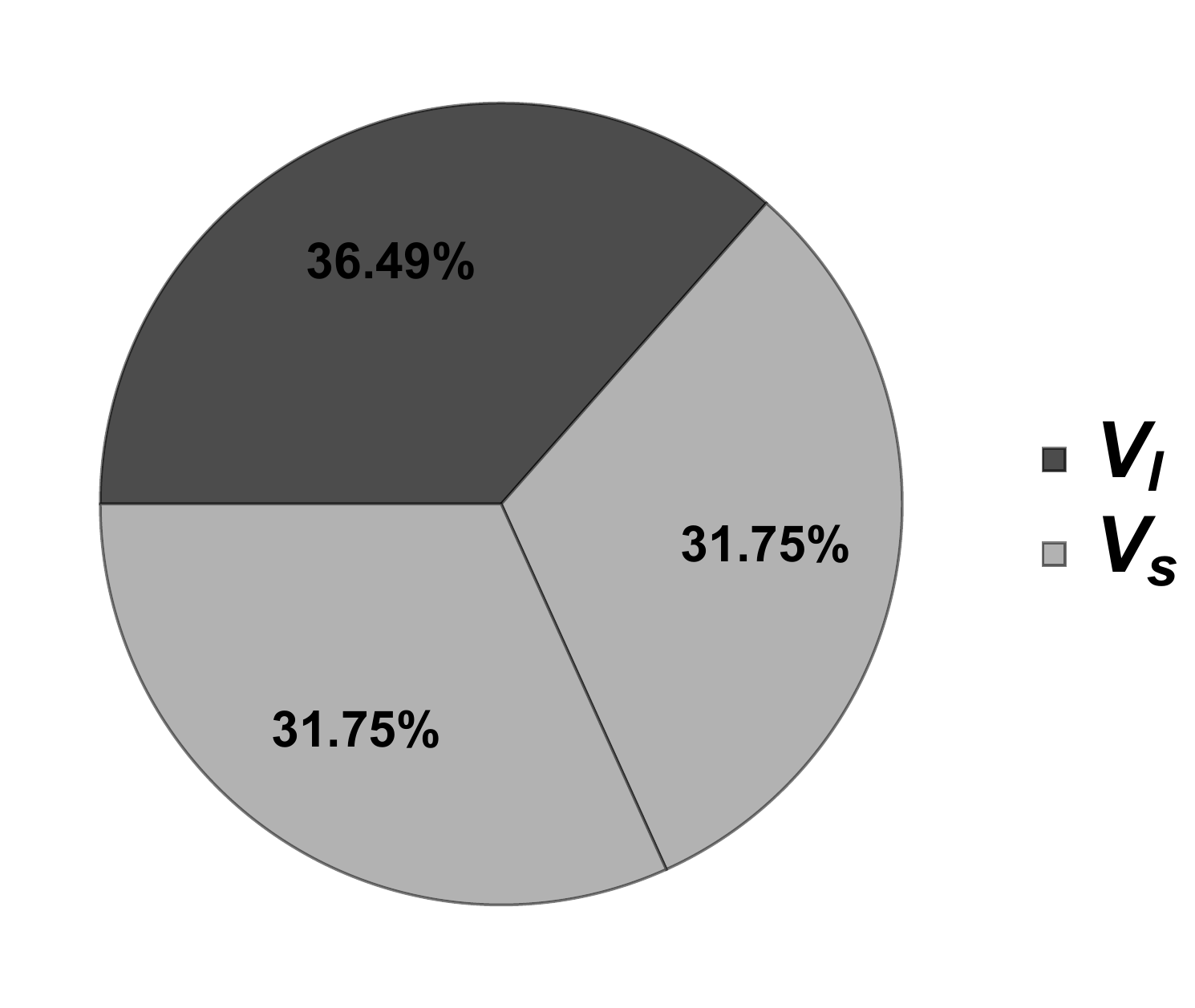}}
  \subfloat[Volumes $\overline{V}$, $\overline{V}$, $\overline{V}$]{\label{sss}
      \includegraphics[width=0.32\textwidth]{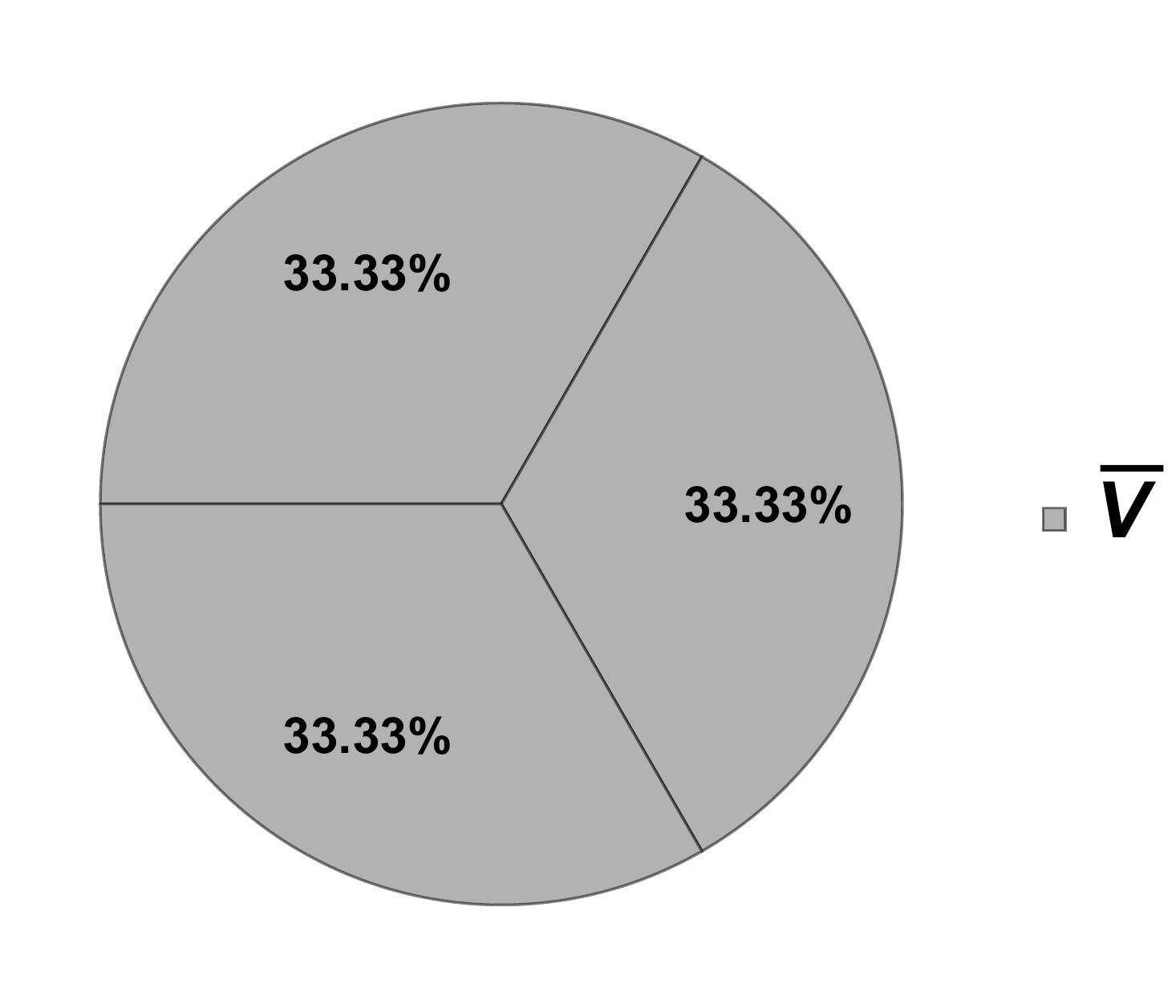}}
\caption{Volume percentages (rounded) for equilibria with $N=3$, $\overline{V}=0.98$}
\label{pies}
\end{figure}
Here, the two possible heights of the large spherical-cap droplet are $h_L = 1.348$ and $h_l = 1.047$ with the height of the two small droplets given by the reciprocals $h_S=h_L^{-1}$ and $h_s=h_l^{-1}$, respectively. The corresponding volumes are denoted by $V_L=v(h_L)$, $V_S=v(h_S)$, $V_l=v(h_l)$ and $V_s=v(h_s)$. The only other possible equilibrium consists of three equal, small droplets of volume $\overline{V}$.
We will confirm in our work that for $\overline{V}>1$ only one type of stable equilibrium is present, as seen in the Newtonian case $s=1$  \cite{LeEA1} and  in the inviscid, symmetric case   \cite{SlSt}. If $\overline{V}$ is restricted to an interval of the form $l_0<\overline{V}<1$,  two types of stable equilibria will be shown to arise side-by-side, resembling the inviscid case in \cite{SlSt}. The limiting value $l_0>0$ will turn out to depend on $N$. In the situation discussed above ($N=3$)  $l_0\approx 0.957$, and the stable equilibria will be the ones described by \cref{LSS,sss}. If, however, $0<\overline{V}<l_0$, the only possible equilibrium will be the one consisting of droplets of equal size.  This equilibrium is stable.
\begin{snugshade}
\noindent {\em Socio-economic view}:  The parameter ranges $\overline{V}>1$ (``abundant resource") and $0<\overline{V}<l_0$ (``scarcest resource") correspond to the winner-take-all and egalitarian outcomes, respectively. The parameter range $l_0<\overline{V}<1$ (``scarce resource")  exhibits both possible outcomes in dependence on  initial conditions.
\end{snugshade}

We will study our physics-based model based on analytical techniques that are independent of the underlying network and of the power-law parameter $s$.
A key tool for our study of the system \cref{sys} will be provided by the following energy functional:
\begin{equation}\label{defW}
	{\mathcal W}({\mathbf V}) = {1\over N}\,\sum_{j=1}^N \int_0^{V_j} P(V)\,dV\quad \text{for ${\mathbf V} =  \left(V_1,\ldots,V_N\right)\in {\mathbb R}^N$.}
\end{equation}
$\mathcal W$ is a minor modification of the equivalent functional introduced in \cite{LeEA1}. It is a measure of the pressure-volume work due to the capillary  pressure acting on each spher\-ical-cap droplet.
\begin{snugshade}
\noindent {\em Socio-economic view}: The objective functional $\mathcal W$ is the total cost of resource accumulation on the level of the population, which reflects the aggregate drive to accumulate.
\end{snugshade}

Since for any $V_0\in \mathbb R$ and $h_0 = v^{-1}\left(V_0\right)$
\begin{equation}\label{intP}
	\int_0^{V_0} P(V)\,dV = \int_0^{h_0} p(h)\,v^\prime(h)\,dh = 3\,\int_0^{h_0} h\,dh={3\over 2}\,h_0^2,
\end{equation}
by Equation \cref{Adefn}, $\mathcal W$ is related to the total surface area $A({\mathbf V})=\sum_{j=1}^N S(V_j)$ through
\begin{equation}
	{\mathcal W}({\mathbf V}) = {1\over N}\,A({\mathbf V})-{3\over 2}.
\end{equation}
Let us record:
\begin{proposition}\label{propA}
${\mathcal W}({\mathbf V})\geq 0$ for all\, ${\mathbf V} \in {\mathbb R}^N$, and ${\mathcal W}({\mathbf V})\rightarrow \infty$ if and only if\, $\|{\mathbf V}\|\rightarrow \infty$.
\end{proposition}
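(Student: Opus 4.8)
The plan is to reduce everything to the single-variable substitution already carried out in \eqref{intP}. Applying that identity termwise to the definition \eqref{defW}, I would write
\begin{equation}
  \mathcal{W}(\mathbf{V}) = \frac{3}{2N}\sum_{j=1}^N h_j^2, \qquad h_j := v^{-1}(V_j),
\end{equation}
which is legitimate because, as noted after \eqref{randv}, $v=v(h)$ is a bijection of $\mathbb{R}$, so each $h_j$ is a well-defined real number. Nonnegativity is then immediate: $\mathcal{W}(\mathbf{V})$ is a nonnegative multiple of a sum of squares, and this holds for every $\mathbf{V}\in\mathbb{R}^N$, including the unphysical regime $V_j<0$, where $h_j<0$ but $h_j^2\ge 0$. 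Equivalently, this is just the identity $\mathcal{W}(\mathbf{V})=\tfrac1N A(\mathbf{V})-\tfrac32$ combined with \eqref{Adefn}.

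For the coercivity statement I would isolate the scalar fact that drives it: the map $h\mapsto v(h)=\tfrac14 h(h^2+3)$ is a proper homeomorphism of $\mathbb{R}$, meaning $|h|\to\infty \iff |v(h)|\to\infty$. One direction is exactly \eqref{vhasymp} (the leading cubic forces $|v|\to\infty$ as $|h|\to\infty$); the converse --- that $|v(h)|\to\infty$ forces $|h|\to\infty$ --- follows because $v$ is continuous, hence bounded on bounded sets, so a bounded value of $h$ cannot produce an unbounded $v$. Consequently, coordinatewise, $|h_j|\to\infty \iff |V_j|\to\infty$.

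Finally I would assemble the equivalence along sequences. Using equivalence of norms on $\mathbb{R}^N$, the condition $\|\mathbf{V}\|\to\infty$ is the same as $\max_j |V_j|\to\infty$; picking the maximizing index and applying the scalar fact gives $|h_j|\to\infty$ for that index, whence $\sum_j h_j^2\to\infty$ and $\mathcal{W}(\mathbf{V})\to\infty$. Conversely, if $\mathcal{W}(\mathbf{V})\to\infty$ then $\max_j h_j^2\to\infty$, and the scalar fact turns the maximizing $|h_j|\to\infty$ into $|V_j|\to\infty$, so $\|\mathbf{V}\|\to\infty$.

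I expect no serious obstacle, since the proposition is essentially a nonnegativity plus a coercivity/properness statement for a separable sum over coordinates. The one point to handle with care is that the stated asymptotic \eqref{vhasymp} supplies only the easy half of the homeomorphism property; I must add the converse (bounded $h$ yields bounded $v$) to obtain a genuine ``if and only if'' rather than a one-sided growth bound.
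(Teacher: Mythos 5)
Your proposal is correct and follows essentially the same route the paper takes (the paper leaves the proof implicit, relying on the identity \eqref{intP}, which gives $\mathcal{W}(\mathbf{V})=\tfrac{3}{2N}\sum_j h_j^2$ with $h_j=v^{-1}(V_j)$, together with the properness of the cubic $v$). Your extra care in noting that \eqref{vhasymp} supplies only one half of the equivalence, and that the other half comes from continuity of $v$ on bounded sets, is a worthwhile clarification but not a departure from the paper's argument.
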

The following result states the key  property of $\mathcal W$ for every $s>0$:
\begin{lemma}\label{Ader}
	Suppose\, ${\mathbf V} = {\mathbf V}(t) =  \left(V_1(t),\ldots,V_N(t)\right)$ is any solution of the system \cref{sys} on an open interval $I\subset \mathbb R$. Then
\begin{equation}
	{d\over {dt}}\, {\mathcal W}({\mathbf V}(t))=-{1\over {2\,N}}\,\sum_{i,j=1}^N c_{i,j}\,\left|P(V_i(t))-P(V_j(t))\right|^{s+1}\quad \text{for all $t\in I$.}
\end{equation}
In particular,
$
	{d\over {dt}}\, {\mathcal W}({\mathbf V}(t))\leq 0$ for all $t\in I$,
and
$
	{d\over {dt}}\, {\mathcal W}({\mathbf V}(t_0))= 0$  for some $t_0\in I$
if and only if \,  ${\mathbf V}(t_0)$ is an equilibrium of the system \cref{sys}.
\end{lemma}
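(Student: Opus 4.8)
The plan is to differentiate $\mathcal{W}(\mathbf{V}(t))$ directly by the chain rule and then exploit the symmetry of the network matrix to collapse the resulting double sum into the stated form. First I would note that since $P$ is continuous, each map $V_j \mapsto \int_0^{V_j} P(V)\,dV$ is $C^1$ with derivative $P(V_j)$ by the fundamental theorem of calculus; since $\mathbf{V}(t)$ solves \cref{sys} and is therefore differentiable, the chain rule yields
\begin{equation}
\frac{d}{dt}\,\mathcal{W}(\mathbf{V}(t)) = \frac{1}{N}\sum_{j=1}^N P(V_j)\,\frac{d}{dt}V_j = \frac{1}{N}\sum_{i,j=1}^N c_{i,j}\,P(V_j)\,\Delta_s P_{i,j}.
\end{equation}

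The key step is to symmetrize this double sum. I would observe two facts: the matrix is symmetric, $c_{i,j}=c_{j,i}$, while the flux is antisymmetric, $\Delta_s P_{j,i} = -\Delta_s P_{i,j}$ (immediate from \cref{defDs}). Relabelling $i \leftrightarrow j$ and averaging with the original expression shows that the sum equals $\tfrac{1}{2}\sum_{i,j} c_{i,j}\,\bigl(P(V_j)-P(V_i)\bigr)\,\Delta_s P_{i,j}$. Writing $d_{i,j} := P(V_i)-P(V_j)$, the elementary identity $\bigl(P(V_j)-P(V_i)\bigr)\,\Delta_s P_{i,j} = -d_{i,j}\,|d_{i,j}|^s\,\mathrm{sgn}(d_{i,j}) = -|d_{i,j}|^{s+1}$ (using $d\,\mathrm{sgn}(d)=|d|$) converts each summand into $-c_{i,j}\,|P(V_i)-P(V_j)|^{s+1}$, giving exactly the asserted formula after multiplying by $\tfrac{1}{N}$.

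The remaining assertions follow quickly. Since $c_{i,j}\geq 0$ and $s+1>0$, every summand is nonnegative, so the derivative is $\leq 0$. For the equality characterization, $\tfrac{d}{dt}\mathcal{W}(\mathbf{V}(t_0))=0$ forces $c_{i,j}\,|P(V_i)-P(V_j)|^{s+1}=0$ for every pair; since $c_{i,j}>0$ precisely on the edges of the graph, this means $P(V_i(t_0))=P(V_j(t_0))$ whenever $i$ and $j$ are adjacent. Here I would invoke connectedness of the network: equality of pressures propagates along any path, so $P(V_i(t_0))=P(V_j(t_0))$ for all $i,j$, which by \cref{stat} is exactly the condition for $\mathbf{V}(t_0)$ to be an equilibrium; the converse is immediate, since at an equilibrium every summand vanishes.

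The computation is essentially routine, so the only point requiring care is the regularity justification: for $0<s<1$ the right-hand side of \cref{sys} is merely continuous, not Lipschitz, yet any solution $\mathbf{V}(t)$ is by definition $C^1$, so the chain-rule step above remains valid for all $s>0$ without modification. I expect the antisymmetrization bookkeeping and the connectedness argument in the equality case to be the only genuinely structural steps; everything else is direct substitution.
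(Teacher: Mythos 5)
Your argument is correct and is essentially the paper's own proof: the paper likewise writes the derivative as two copies of $\tfrac{1}{2N}\sum_j P(V_j)V_j'$, relabels one using $c_{i,j}=c_{j,i}$, and collapses via the antisymmetry of $\Delta_s P_{i,j}$ to get $-\tfrac{1}{2N}\sum_{i,j}c_{i,j}|P(V_i)-P(V_j)|^{s+1}$. You have merely filled in the details the paper leaves implicit (the sign identity, the connectedness argument for the equality case, and the $C^1$ regularity remark), all correctly.
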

\begin{snugshade}
\noindent {\em Socio-economic view}: Lowering the total cost $\mathcal W$ drives the dynamics of the competition.
\end{snugshade}
\begin{proof}
By direct calculation we obtain
\begin{align}
	{d\over {dt}}\, {\mathcal W}({\mathbf V})&= {1\over {2\,N}}\,\sum_{j=1}^N P(V_j)\,V_j^\prime+{1\over {2\,N}}\,\sum_{i=1}^N P(V_i)\,V_i^\prime\\
&= {1\over {2\,N}}\,\sum_{i,j=1}^N c_{i,j}\,  \left(P(V_j)\,\Delta_s P_{i,j} + P(V_i)\,\Delta_s P_{j,i}\right).
\end{align}
This implies the claim.
\end{proof}
Hence  solutions stay bounded on ``any'' time interval of existence starting at $t=0$. Consequently,
we obtain a global existence and uniqueness result.

\begin{proposition}\label{gloex}
Solutions of the initial value problem \cref{sys,ini} exist for all time $t\geq 0$. They are unique if $s\geq 1$.
\end{proposition}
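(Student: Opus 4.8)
The plan is to combine the local theory already recorded in the text with an a priori bound furnished by the Lyapunov function $\mathcal W$, and then close the argument with the standard continuation (blow-up) alternative. Since a local-in-time solution exists for every $s>0$ and is unique for every $s\geq 1$, the only genuinely new ingredient is the exclusion of finite-time blow-up: the global bound is exactly what \cref{Ader} and \cref{propA} were set up to deliver. The sentence preceding the statement already isolates the crux (``solutions stay bounded''), so the proof is an assembly of these two results via a compactness/continuation principle.

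For the global existence claim I would fix initial data and let $[0,T_{\max})$ denote the maximal forward interval of existence of a solution $\mathbf V=\mathbf V(t)$. By \cref{Ader} the map $t\mapsto\mathcal W(\mathbf V(t))$ is nonincreasing, so
\[
\mathcal W(\mathbf V(t))\leq \mathcal W(\mathbf V(0))=:M \qquad (0\leq t<T_{\max}).
\]
The coercivity clause of \cref{propA}---that $\mathcal W(\mathbf V)\to\infty$ whenever $\|\mathbf V\|\to\infty$---is equivalent to the sublevel set $\{\mathbf V\in\mathbb R^N:\mathcal W(\mathbf V)\leq M\}$ being bounded, so there is $R=R(M)>0$ with $\|\mathbf V(t)\|\leq R$ throughout $[0,T_{\max})$; the orbit never leaves a fixed compact ball. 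Because the right-hand side of \cref{sys} is continuous on all of $\mathbb R^N$, the escape-from-compacts alternative for maximal solutions forces $\|\mathbf V(t)\|\to\infty$ as $t\to T_{\max}^-$ whenever $T_{\max}<\infty$. This contradicts the uniform bound, whence $T_{\max}=\infty$. I would emphasize that this step uses only continuity of the vector field, so it applies to \emph{every} $s>0$, and that it is genuinely forward-in-time, since $\mathcal W$ decreases along the flow.

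For uniqueness when $s\geq 1$ I would verify that the vector field in \cref{sys} is locally Lipschitz and then invoke Picard--Lindel\"of (or a direct Gr\"onwall estimate) on every compact subinterval of $[0,\infty)$, which upgrades to uniqueness on all of $[0,\infty)$. The sole non-smooth factor is the scalar map $x\mapsto|x|^s\,\text{sgn}(x)$ from \cref{defDs}: for $s\geq 1$ it is $C^1$ away from the origin with derivative $s|x|^{s-1}$, and it remains Lipschitz across the origin (the identity when $s=1$, the derivative extending continuously by $0$ when $s>1$). Composing with the smooth pressure law $P$ of \cref{ph} and the constant weights $c_{i,j}$ preserves local Lipschitz continuity, giving the claim.

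The main obstacle is conceptual rather than computational. One must recognize that the coercivity in \cref{propA} is precisely the boundedness of the sublevel sets of $\mathcal W$, and---critically for the delicate range $0<s<1$, where the field is merely continuous---that the appropriate extension principle is the Peano-type escape-from-compacts statement rather than one presupposing uniqueness. Once this is in view, the remainder is a routine combination of the Lyapunov inequality, coercivity, and the local Lipschitz estimate already sketched.
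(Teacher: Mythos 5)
Your proof is correct and follows exactly the route the paper intends: the monotonicity of $\mathcal W$ from \cref{Ader} combined with the coercivity in \cref{propA} yields a uniform bound on any maximal solution, the continuation alternative then gives global existence for every $s>0$, and local Lipschitz continuity of the vector field for $s\geq 1$ gives uniqueness. The paper leaves these steps implicit (stating only that ``solutions stay bounded on any time interval of existence''), so your write-up is simply a fuller version of the same argument.
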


\section{Preserved During Competition: Forward Invariant Sets}

\label{sec_equ}

We have seen in  \cref{masscon} that, by conservation of mass,  the set ${\mathbb V}\left(\overline{V}\right)$ is forward invariant. In this section we study two more, physically relevant sets.

\begin{definition}
	For given $\overline{V}$ and $k\in\mathbb N$ with $1\leq k\leq N$,   define the sets
$
	{\mathbb V}_+\left(\overline{V}\right) ={\mathbb V}\left(\overline{V}\right)\cap [0,\infty)^N$ and $
{{\mathbb V}}^k_+\left(\overline{V}\right) = \left\{{\mathbf v} = (v_1,\ldots,v_N)\in {\mathbb V}_+(\overline{V})\,\Biggr|\, v_k\leq 1\right\}$.
\end{definition}
Because  of \cref{stat,masscon} we readily have:
\begin{proposition}\label{VV+}
	Let ${\mathbf V}^* = \left(V^*_1,\dots, V^*_N\right)$ be an equilibrium of the system \cref{sys} with $\overline{V} = {1\over N}\,\sum_{j=1}^N V^*_j >0$. Then ${\mathbf V}^*$  is contained in  ${\mathbb V}_+\left(\overline{V}\right)$. In fact, $V^*_j>0$ for all $1\leq j\leq N$.
\end{proposition}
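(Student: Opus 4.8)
The plan is to reduce the entire statement to a single sign property of the pressure-volume relation, namely that $\operatorname{sgn} P(V) = \operatorname{sgn} V$ for every $V \in \mathbb{R}$. Once this is established, the proposition follows from \cref{stat} by an elementary sign argument that uses neither the network weights $(c_{i,j})$ nor the power-law parameter $s$, in keeping with the network- and $s$-independent flavor of the analysis.

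First I would pin down the sign property. By \cref{randv} we have $v(h) = \tfrac14\,h(h^2+3) = \tfrac14(h^3+3h)$, so $v'(h) = \tfrac34(h^2+1) > 0$ for all $h$; hence $v$ is a strictly increasing bijection of $\mathbb{R}$ onto $\mathbb{R}$ with $v(0)=0$, giving $\operatorname{sgn} v(h) = \operatorname{sgn} h$. By \cref{ph}, $p(h) = 4h/(h^2+1)$ has $\operatorname{sgn} p(h) = \operatorname{sgn} h$ as well, since $h^2+1>0$. Writing $V = v(h)$ and using the definition $P(V) = p(h)$, I conclude $\operatorname{sgn} P(V) = \operatorname{sgn} h = \operatorname{sgn} V$, as wanted; in particular $P(V)>0 \iff V>0$ and $P(V)\le 0 \iff V\le 0$.

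Next I would invoke \cref{stat}: since $\mathbf{V}^*$ is an equilibrium, all capillary pressures coincide, so there is a common value $\lambda$ with $P(V^*_j) = \lambda$ for every $1\le j\le N$. By the sign property, every $V^*_j$ then carries the sign of $\lambda$. If $\lambda \le 0$, then $V^*_j \le 0$ for all $j$, forcing $\overline{V} = \tfrac1N\sum_{j=1}^N V^*_j \le 0$ and contradicting the standing assumption $\overline{V}>0$. Therefore $\lambda>0$, which yields $V^*_j>0$ for all $j$. This places $\mathbf{V}^*$ in $[0,\infty)^N$, and since $\mathbf{V}^* \in \mathbb{V}(\overline{V})$ by hypothesis, we obtain $\mathbf{V}^* \in \mathbb{V}_+(\overline{V})$, which is the claim together with the sharpened positivity $V^*_j>0$.

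I do not anticipate a genuine obstacle here: the result is essentially a one-line consequence of \cref{stat} once the sign property is in place, and the only point that deserves care is the monotonicity/sign bookkeeping, which is immediate from the explicit formulas for $v$ and $p$. The proof is robust precisely because it rests on the qualitative fact that pressure and volume share the sign of the height $h$, rather than on any finer structure of the dynamics.
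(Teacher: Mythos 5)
Your proof is correct and is essentially the argument the paper has in mind: the paper gives no written proof, merely noting that the claim follows ``readily'' from \cref{stat} and \cref{masscon}, and the intended reasoning is exactly yours --- at an equilibrium all pressures share a common value $\lambda$, the sign identity $\operatorname{sgn}P(V)=\operatorname{sgn}V$ forces every $V^*_j$ to carry the sign of $\lambda$, and $\overline{V}>0$ then rules out $\lambda\le 0$. Nothing is missing; you have simply written out the one-line verification the authors left implicit.
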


Next we will show that the set ${\mathbb V}_+(\overline{V})$ is forward invariant when $s\geq 1$.  To obtain this result, we choose $\epsilon>0$ and let
\begin{equation}\label{adeps}
	c^\epsilon_{i,j} = \left\{\begin{matrix} c_{i,j} & \text{if $c_{i,j}>0$,}\\
							\epsilon & \text{if $c_{i,j} = 0$ and $i\not=j$,}\\
							0&\text{otherwise.} \end{matrix}\right.
\end{equation}

\begin{theorem}\label{invariant}
	Suppose that $s\geq 1$, or that $s>0$ and the network is complete. Then
	any solution  ${\mathbf V}={\mathbf V}(t)$ of the system \cref{sys} with ${\mathbf V}(0)\in {\mathbb V}_+(\overline{V})$ satisfies
$
	 {\mathbf V}(t)\in {\mathbb V}_+(\overline{V})$   for all $t\geq  0$.
\end{theorem}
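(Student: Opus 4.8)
The plan is to reduce the claim to a statement about the sign of individual volumes and then verify that the vector field never points out of the nonnegative orthant along its boundary. Since a solution starting in ${\mathbb V}_+(\overline{V})$ already remains in the affine set ${\mathbb V}(\overline{V})$ by mass conservation (\cref{masscon}), it suffices to show that $V_j(t)\geq 0$ for all $j$ and all $t\geq 0$ whenever this holds at $t=0$. The one structural fact I would record first is the sign behaviour of the pressure: from \cref{randv,ph} the volume $v(h)$ is strictly increasing with $v(0)=0$, and $p(h)=4h/(h^2+1)$ has the same sign as $h$; hence $P(V)$ has the same sign as $V$ and, in particular, $P(0)=0$.

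Next I would settle the complete-network case, valid for every $s>0$. On the boundary face $\{V_j=0\}$ inside $[0,\infty)^N$ one has $P(V_j)=0$ and $P(V_i)\geq 0$, so $\Delta_s P_{i,j}=\bigl(P(V_i)\bigr)^s\geq 0$ and therefore $\tfrac{d}{dt}V_j=\sum_i c_{i,j}\bigl(P(V_i)\bigr)^s\geq 0$. The crucial point is that this inequality is \emph{strict}: because $\overline{V}>0$ the coordinates cannot all vanish, so some $V_i>0$ gives $P(V_i)>0$, and in a complete network $c_{i,j}>0$ forces $\tfrac{d}{dt}V_j>0$. With strict inwardness in hand the invariance follows by an elementary first-exit argument: if the solution left the orthant, let $t^\ast$ be the first time some coordinate reaches $0$ from a configuration with all coordinates nonnegative; the corresponding $V_k$ satisfies $V_k(t^\ast)=0$ and $\tfrac{d}{dt}V_k(t^\ast)>0$, so $V_k$ is increasing through $t^\ast$, contradicting that it becomes negative immediately afterwards. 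This argument needs no Lipschitz continuity, which is why it covers all $s>0$.

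For a general connected network with $s\geq 1$ I would transfer the complete-network result by the perturbation \cref{adeps}: the coefficients $c^\epsilon_{i,j}$ define a complete network, so by the previous step every solution of the $\epsilon$-system with nonnegative initial data stays in $[0,\infty)^N$. Since $s\geq 1$ makes the right-hand side of \cref{sys} locally Lipschitz (the map $x\mapsto|x|^s\operatorname{sgn}(x)$ is $C^1$ for $s\geq 1$ and $P$ is smooth), solutions are unique and depend continuously on the coefficients; moreover the functional $\mathcal W$ is non-increasing for every symmetric nonnegative weight matrix (\cref{Ader}) and controls $\|{\mathbf V}\|$ (\cref{propA}), so the $\epsilon$-solutions are bounded uniformly in $\epsilon$ on each interval $[0,T]$. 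Letting $\epsilon\to 0$, the $\epsilon$-solutions converge uniformly on $[0,T]$ to the solution of \cref{sys}, and since $[0,\infty)^N$ is closed the limit stays nonnegative; together with \cref{masscon} this gives ${\mathbf V}(t)\in{\mathbb V}_+(\overline{V})$.

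The main obstacle is the tangency of the flow to the boundary: on the face $\{V_j=0\}$ the derivative $\tfrac{d}{dt}V_j$ is only guaranteed to be nonnegative, and the degenerate case where all neighbours of $j$ also carry zero volume would make a naive tangency argument fail. The completeness assumption together with $\overline{V}>0$ is exactly what upgrades this to a strict inequality and removes the difficulty; the price is that extending to arbitrary networks relies on continuous dependence, and hence on the Lipschitz regularity available only for $s\geq 1$, which explains why the shear-thickening range $0<s<1$ is admitted here only for complete networks.
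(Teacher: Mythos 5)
Your proposal is correct and follows essentially the same route as the paper: a strict-inwardness argument on the boundary of the nonnegative orthant for complete networks (valid for all $s>0$, using $\overline{V}>0$ to get a strictly positive neighbour), followed by the $\epsilon$-completion \cref{adeps} and a continuous-dependence limit for general networks when $s\geq 1$. Your added remarks --- making the first-exit argument explicit and invoking \cref{Ader} and \cref{propA} for uniform-in-$\epsilon$ bounds to justify the passage to the limit --- are minor refinements of the paper's appeal to continuous dependence, not a different method.
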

\begin{proof}
Let ${\mathbf V}={\mathbf V}(t)=\left(V_1(t),\ldots,V_N(t)\right)$ be a solution originating in ${\mathbb V}_+(\overline{V})$, and let us assume for the moment that the network is complete. It suffices to show that $V_j(t)\geq 0$ for all $t\geq 0$ and $1\leq j\leq N$. The map $V\mapsto P(V)$ is positive for $V>0$, increasing on $(0,1)$ and decreasing on $(1,\infty)$. It  satisfies $P(0) = \lim_{V\rightarrow \infty} P(V) = 0$.
 Suppose now that $t_0\geq 0$ is a fixed time and $j\in \{1,\ldots, N\}$ an index  such that
$
	0= V_j(t_0)\leq V_i(t_0)$,  $1\leq i\leq N$.
Since $\overline{V}>0$,
there is $k\in \{1,\ldots, N\}$ such that
$V_k(t_0)>0$. Consequently,
$0 = P(V_j(t_0))< P(V_k(t_0))$ and $0=P(V_j(t_0))\leq P(V_i(t_0))$, $1\leq i\leq N$.
Hence we have from \cref{defDs} at $t_0$ that
$ \Delta_s P_{kj}>0$ and $\Delta_s P_{i,j}\geq 0$ for $1\leq i\leq N$.
Completeness of the network gives then
${d\over {dt}}\,V_j(t_0)> 0$.
However, this result implies that
$V_j(t)\geq 0$ for $t\geq 0$,  $1\leq j\leq N$. 
Together with \cref{masscon}, this proves the claim for the case of a complete network with any $s>0$.

Let us now turn to the case $s\geq 1$. For ${\mathbf v}\in {\mathbb V}_+(\overline{V})$ and $\epsilon>0$, denote the solution of the system \cref{sys} with initial data ${\mathbf V}^\epsilon(0)= {\mathbf v}$  and with adjacency matrix $\left(c^\epsilon_{i,j}\right)$, given in
\cref{adeps}, by ${\mathbf V}^\epsilon={\mathbf V}^\epsilon (t) = \left(V^\epsilon_1(t),\ldots, V^\epsilon_N(t)\right)$. The corresponding network is complete. Hence our argument above applies and proves that
$V^\epsilon_j(t)\geq 0$ for $t\geq  0$, $\leq j\leq N$.
As we pass to the limit $\epsilon\rightarrow 0^+$, continuous dependence of solutions on the parameter $\epsilon$ shows that
the pointwise limit ${\mathbf V}(t) = \lim_{\epsilon\rightarrow 0^+} {\mathbf V}^\epsilon(t)$ exists for every $t\geq 0$ and solves the initial value problem \cref{sys,ini} with initial data ${\mathbf V}(0) = \mathbf v$ and with the original adjacency matrix $\left(c_{i,j}\right)$. Hence
$V_j(t)\geq 0$ for $t\geq 0$,  $1\leq j\leq N$.
\end{proof}

Finally let us turn to a physically important observation:

\begin{theorem}\label{invariant2}
	Suppose  that $s\geq 1$ and $k\in\{1,\ldots, N\}$. Then any solution ${\mathbf V}={\mathbf V}(t)$   of the system \cref{sys} with ${\mathbf V}(0)\in {\mathbb V}^k_+(\overline{V})$ satisfies
$
	 {\mathbf V}(t)\in {\mathbb V}^k_+(\overline{V})$  for all $t\geq  0$.
\end{theorem}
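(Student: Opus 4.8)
The goal is to show that the set $\mathbb{V}^k_+(\overline{V})$, which imposes the additional constraint $V_k \leq 1$ (droplet $k$ remains small) on top of nonnegativity, is forward invariant when $s \geq 1$. This is the ``once down-and-out, always down-and-out'' statement from the overview: a droplet whose volume has fallen to or below the hemispherical threshold can never again become large.

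My plan is to combine the already-established forward invariance of $\mathbb{V}_+(\overline{V})$ from \cref{invariant} with a boundary argument at the single constraint surface $V_k = 1$. Since $\mathbb{V}^k_+(\overline{V}) = \mathbb{V}_+(\overline{V}) \cap \{v_k \leq 1\}$, and we already know solutions starting in $\mathbb{V}_+(\overline{V})$ stay in $\mathbb{V}_+(\overline{V})$, it suffices to prevent $V_k$ from crossing from $V_k \leq 1$ to $V_k > 1$. First I would recall the key monotonicity facts about $P$ used in the proof of \cref{invariant}: $P$ is increasing on $(0,1)$ and decreasing on $(1,\infty)$, so $P(1) = 2$ is the global maximum of $P$ on $(0,\infty)$. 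The decisive consequence is that for \emph{any} other droplet $i$ with $V_i \geq 0$ we have $P(V_i) \leq P(1)$, with equality only if $V_i = 1$.

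The heart of the argument is to examine the sign of $\tfrac{d}{dt} V_k$ at a time $t_0$ where $V_k(t_0) = 1$. From \cref{sys}, $\tfrac{d}{dt} V_k(t_0) = \sum_{i=1}^N c_{i,k}\, \Delta_s P_{i,k}$, where each term carries the sign of $P(V_i(t_0)) - P(V_k(t_0)) = P(V_i(t_0)) - P(1)$. By the maximum property of $P(1)$, every such difference is $\leq 0$, so every summand is $\leq 0$ and hence $\tfrac{d}{dt} V_k(t_0) \leq 0$. Thus whenever droplet $k$ reaches the threshold $V_k = 1$, its volume is non-increasing, which is exactly the inward-pointing (or tangent) condition needed to confine the trajectory to $\{V_k \leq 1\}$. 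I would then invoke the same approximation-and-limit scheme as in \cref{invariant}: for $s \geq 1$ on a general network, perturb to the complete network via the matrix $(c^\epsilon_{i,j})$ in \cref{adeps}, establish the inequality $\tfrac{d}{dt}V^\epsilon_k(t_0) \leq 0$ at threshold crossings there (completeness is harmless since the added $\epsilon$-terms are still $\leq 0$ by the same maximum property), conclude $V^\epsilon_k(t) \leq 1$, and pass to the limit $\epsilon \to 0^+$ using continuous dependence to recover $V_k(t) \leq 1$ for the original network.

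The main obstacle is the rigorous ``barrier'' step: showing that $\tfrac{d}{dt}V_k \leq 0$ on the threshold genuinely forbids $V_k$ from ever exceeding $1$. The subtlety is that $\tfrac{d}{dt}V_k = 0$ (rather than strictly negative) is possible when every neighbor of $k$ also sits at volume $1$, so a naive strict-inequality argument fails and one cannot simply claim the boundary is repelling. I would handle this by a standard comparison/continuity argument: suppose for contradiction that $V_k(t_1) > 1$ for some $t_1 > 0$, let $t_0 = \sup\{t < t_1 : V_k(t) \leq 1\}$ so that $V_k(t_0) = 1$ and $V_k > 1$ on $(t_0, t_1]$; on this interval $P(V_k) = P(1) \geq P(V_i)$ for all $i$ still, so every term in $\tfrac{d}{dt}V_k$ remains $\leq 0$, forcing $V_k$ to be non-increasing on $(t_0,t_1]$ and hence $V_k(t_1) \leq V_k(t_0) = 1$, contradicting $V_k(t_1) > 1$. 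This monotonicity-past-the-boundary observation is cleaner than a pointwise tangency condition and sidesteps the degeneracy $\tfrac{d}{dt}V_k = 0$; the uniqueness of solutions for $s \geq 1$ guaranteed by \cref{gloex} ensures these trajectories are well-defined so the supremum argument is legitimate.
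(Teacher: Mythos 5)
Your overall strategy --- reduce to the boundary $V_k=1$, show the vector field points into $\{V_k\leq 1\}$ there, and transfer from the complete network to a general one via the $\epsilon$-perturbation \cref{adeps} --- is the same as the paper's, and the tangency computation $\frac{d}{dt}V_k(t_0)\leq 0$ at a time with $V_k(t_0)=1$ is correct, since $P$ attains its maximum $P(1)=2$ at $V=1$. The gap is in the step you yourself flag as the main obstacle. In your first-crossing argument you claim that on the interval $(t_0,t_1]$ where $V_k>1$ one still has $P(V_k)=P(1)\geq P(V_i)$ for all $i$, so that every term of $\frac{d}{dt}V_k$ stays nonpositive. This is false: once $V_k>1$ the pressure $P(V_k)$ drops strictly below the maximum value $2$, and any droplet $i$ whose pressure then exceeds $P(V_k)$ pushes liquid \emph{into} droplet $k$. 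So the terms $c_{i,k}\,\Delta_s P_{i,k}$ need not be $\leq 0$ past the boundary, $V_k$ need not be non-increasing on $(t_0,t_1]$, and the contradiction does not follow. The nonstrict inequality at the boundary alone does not forbid crossing, precisely because of the degenerate configuration you identify (all neighbours at volume $1$).

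Two repairs are available. The paper's route: for the complete network, split into cases. If at some touching time $t_0$ some droplet $j$ has $V_j(t_0)\neq 1$, then $\Delta_s P_{j,k}<0$ strictly and completeness gives $\frac{d}{dt}V_k(t_0)<0$, so $V_k$ strictly decreases after each such touch and cannot cross; in the remaining case every touching time has $V_i=1$ for all $i$, and uniqueness of solutions for $s\geq 1$ (the right-hand side of \cref{sys} is then locally Lipschitz) forces the solution to coincide with the constant equilibrium $(1,\dots,1)$, which stays in the set. Alternatively, your Gronwall instinct can be made rigorous if applied to the correct quantity: set $u(t)=\max\left(V_k(t)-1,0\right)$ and compare the $k$-th component of the right-hand side of \cref{sys} at ${\mathbf V}(t)$ with its value at the point obtained by resetting $V_k$ to $1$; the latter is $\leq 0$ and the difference is bounded by $L\,u(t)$ by local Lipschitz continuity, so $u^\prime\leq L\,u$ with $u(t_0)=0$, whence $u\equiv 0$. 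Either fix closes the gap; as written, the proposal's key step does not.
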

Hence for $s\geq 1$,  a small droplet will remain small.
\begin{snugshade}
\noindent {\em Socio-economic view}:  \Cref{invariant2} describes the ``once down-and-out, always down-and-out" outcome: When a competitor's resource falls below $1$, that competitor cannot recover. Hence upward mobility is impossible for individuals whose resource shrinks below this threshold.
\end{snugshade}
\begin{proof}
Suppose first that the network is complete. Given $k$, we may assume that there exists $t_0\geq 0$ and $j\in\{1,\ldots,N\}$ such that
$V_k(t_0)=1$  and $V_j(t_0)\not=1$.
 For otherwise, we would either have $V_k(t)<1$ for all $t\geq 0$, or $V_i(t) =1$, $1\leq i\leq N$, whenever $V_k(t) = 1$. The latter situation implies that $V_i(t)=1$, $1\leq i\leq N$,  for all $t\geq 0$ by uniqueness of solutions. In either case, the claim would follow.
Since for every $V>0$ with $V\not=1$, we have $P(V)<P(1)=2$, we obtain  that at time $t_0$,
$ \Delta_s P_{j k}<0 $ and $\Delta_s P_{i k}\leq 0$ for all $1\leq i\leq N$.
Consequently, by completeness of the network,
${d\over {dt}} V_k(t_0)< 0$.
Hence as soon as $V_k$ reaches the value $1$, it must decrease again. In conclusion,
$V_k(t)\leq 1$ for all $t\geq 0$ and $k\in\{1,\dots,N\}$.

If the network is not complete,  let us proceed as before: For ${\mathbf v}\in  {\mathbb V}^k_+(\overline{V})$ and $\epsilon>0$, let ${\mathbf V}^\epsilon={\mathbf V}^\epsilon (t) = \left(V^\epsilon_1(t),\ldots, V^\epsilon_N(t)\right)$ be the solution with initial data ${\mathbf V}^\epsilon(0)= {\mathbf v}$  and with adjacency matrix $\left(c^\epsilon_{i,j}\right)$, introduced  in
\cref{adeps}. Since the corresponding network is complete, our result above applies.
As we pass to the limit $\epsilon\rightarrow 0^+$, we obtain  the sought-for estimate as before.
\end{proof}

The condition $s\geq 1$ was used in this proof in two instances: First, we appealed to the uniqueness of the solution of \cref{sys,ini} with initial data $V_i(0)=1$, $1\leq i\leq N$, in the case of a complete network. Second, we exploited the uniqueness of solutions for continuous dependence on parameters to obtain the claim for non-complete networks. When we have a complete network, it is thus sufficient to replace the condition $s\geq 1$ by the requirement that the initial value problem \cref{sys,ini} with initial data $V_i(0)=1$, $1\leq i\leq N$, only permit the solution $V_i(t)=1$, $1\leq i\le N$, $t\geq 0$. This is indeed satisfied for $0<s<1$ in the case $N=2$, as can be seen by direct calculation.

\section{Rest States and End States: Equilibria and Semiflows}

Let us now return to the study of equilibria of the system \cref{sys} in ${\mathbb V}_+(\overline{V})$ for fixed $\overline{V}>0$. In light of \cref{stat}, we consider the question whether an equilibrium can consist of  $n$ large droplets (i.e. droplets of height $h>1$) and $N-n$ small droplets (i.e. droplets of height $h^{-1}\leq 1$). Noting the volume-height relation \cref{randv}, we can thus cast mass conservation, given by \cref{masscon}, in  the form
${n\over N}\,v(h)+\left(1-{n\over N}\right)\,v\left(h^{-1}\right) = {n\over N}\,\left({{h^3+3\,h}\over 4}\right) + \left(1-{n\over N}\right)\,\left({{h^{-3}+3\,h^{-1}}\over 4}\right) =\overline{V}$,  or
\begin{equation}\label{volconstr}
	\alpha\,v(h)+(1-\alpha)\,v\left(h^{-1}\right)=\overline{V}\quad \text{with $\alpha={n\over N}$.}
\end{equation}
This equation has a simple symmetry:
\begin{equation}\label{symeq}
	\text{$h>0$ solves \cref{volconstr} with $\alpha=\beta$ if and only if $h^{-1}$  solves \cref{volconstr} with  $\alpha=1-\beta$.}
\end{equation}
It is convenient to rewrite  Equation \cref{volconstr} as
$
{\mathcal P}_\alpha(h)=0
$
where the \emph{mass polynomial} ${\mathcal P}_\alpha = {\mathcal P}_\alpha(h;\overline{V})$ is  given by
\begin{equation}\label{defpa}
	{\mathcal P}_\alpha(h) \equiv \alpha\,h^6+3\,\alpha\,h^4-4\,\overline{V}\,h^3+3\,(1-\alpha)\,h^2+1-\alpha.
\end{equation}
\begin{snugshade}
\noindent {\em Applied mathematics aside}: We will allow the parameter $\alpha$ to take on all values in the interval $[0,1]$ instead of discrete values only. This  parametrization makes it possible to study the zeros of ${\mathcal P}_\alpha$  in continuous dependence on $\alpha$ and without regard to particular values of $n$ and $N$.
\end{snugshade}
Let us now address the {\em uniform} equilibrium.

\begin{proposition}\label{unieq}
	For every $\overline{V}>0$, there is an equilibrium consisting of $N$ droplets of equal height (and volume).
\end{proposition}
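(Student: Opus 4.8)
The plan is to produce the equilibrium explicitly rather than through the zeros of the mass polynomial \eqref{defpa}. First I would consider the \emph{uniform configuration} ${\mathbf V}^* = (\overline{V},\ldots,\overline{V})$. By construction ${1\over N}\sum_{j=1}^N V^*_j = \overline{V}$, so ${\mathbf V}^* \in {\mathbb V}(\overline{V})$, and since all components coincide we trivially have $P(V^*_i) = P(V^*_j)$ for all $1\leq i,j\leq N$. By \cref{stat} this equality of pressures is both necessary and sufficient for ${\mathbf V}^*$ to be an equilibrium of \cref{sys}, so ${\mathbf V}^*$ is an equilibrium. This is the entire structural content of the statement.

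It remains to exhibit the common height and confirm it is physically admissible, i.e.\ nonnegative. Here I would invoke the invertibility of $v=v(h)$ noted after \eqref{randv}. From $v(h) = {1\over 4}\,h\,(h^2+3)$ we read off $v^\prime(h) = {3\over 4}\,(h^2+1) > 0$ together with $v(0)=0$, so $v$ maps $[0,\infty)$ bijectively onto $[0,\infty)$. Consequently, for the given $\overline{V}>0$ there is a unique $h^* = v^{-1}(\overline{V}) > 0$, and assigning height $h^*$ (hence volume $\overline{V}$) to each of the $N$ droplets realizes ${\mathbf V}^*$ as a configuration of $N$ droplets of equal height and volume. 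This completes the argument.

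Since the claim is essentially immediate once \cref{stat} is in hand, there is no genuine obstacle; the only point deserving care is the verification that the common height is nonnegative, which is why I would record the monotonicity $v^\prime(h)>0$ and the boundary value $v(0)=0$ explicitly. For contrast one could instead recover $h^*$ from the mass polynomial by noting that the uniform equilibrium corresponds to the degenerate endpoints $\alpha\in\{0,1\}$ in \eqref{defpa}, where for instance $\mathcal{P}_1(h) = h^3\,(h^3 + 3\,h - 4\,\overline{V})$ has a unique positive root solving $v(h)=\overline{V}$. This route, however, forces a case split on whether $\overline{V}$ exceeds $1$ (to decide whether the uniform state is labelled $\alpha=1$ or $\alpha=0$) and is less transparent than the direct construction above.
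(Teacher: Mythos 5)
Your proof is correct and matches the paper's (implicit) reasoning: the paper states this proposition without a formal proof, treating it as an immediate consequence of \cref{stat} together with the invertibility of $v$ on $[0,\infty)$, and then merely records that the uniform equilibrium corresponds to $n=N$ when $\overline{V}>1$ and to $n=0$ when $0<\overline{V}\leq 1$. Your explicit verification that $v^\prime(h)=\tfrac{3}{4}\,(h^2+1)>0$ and $v(0)=0$, so that $h^*=v^{-1}(\overline{V})>0$ exists and is unique, is exactly the content the paper leaves unstated.
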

The uniform equilibrium arises for $\overline{V}>1$ with $n=N$ and for $0<\overline{V}\leq 1$ with $n=0$.

Before we continue with our analysis, let us briefly adapt an approach taken by Slater and Steen \cite{SlSt} in their study of $N$ inviscid spherical droplets with $S_N$ symmetry. To this end, we note that for constant $\overline{V}$, equilibria can be
 identified with each other if they share  the same number of large droplets of the same height (or, equivalently, the same number of small droplets of the same height). Because of the symmetry expressed in \cref{symeq}
 we may assume that  $ \alpha={n\over N}\leq {{\left\lfloor{{N/ 2}}\right\rfloor}\over N}$. Using ideas proposed in \cite{ThEA},  we  now introduce the quantity
\begin{equation}\label{deftheta}
	\theta = \theta(h) \equiv v(h)-v\left({1\over h}\right),
\end{equation}
defined for all positive $h$ such that  ${\mathcal P}_\alpha(h)=0$ for  some $\overline{V}>0$ and some $\alpha={n\over N}$, $1\leq n\leq {{\left\lfloor{{N/ 2}}\right\rfloor}}$.  The function $\theta$ is clearly invertible on its domain.
The case  $\theta>0$ arises exactly when there exists $h>1$ such that ${\mathcal P}_\alpha(h)=0$. Hence in this situation we have an equilibrium with exactly  $n=\alpha\,N$ large droplets of height $h>1$. The negative case $\theta<0$ corresponds to exactly $n=\alpha\,N$ small droplets of height $0<h<1$ as is apparent from the symmetry \cref{symeq}. $\theta=0$ occurs exactly when $h=1$. Hence the corresponding equilibrium is uniform and $\overline{V}=1$. It will become evident from \cref{volconstr,symeq} and the later developments that,  in this way, $\theta$ is defined for all $h>0$. To include all uniform equilibria, we extend the definition of $\theta$ by setting $\theta=0$ if there exists $h>0$ such that ${\mathcal P}_0(h)=0$ for  some $\overline{V}>0$.  Since ${\mathcal P}_0(h) = -4\,\overline{V}\,h^3+3\,h^2+1$, we have exactly one positive zero $h$ for every choice of $\overline{V}$. The zero $h$ is in $(0,1)$ if $\overline{V}>1$, it is  $1$ if $\overline{V}=1$, and it lies in $(1,\infty)$ if $0<\overline{V}<1$. Therefore, as we appeal again to \cref{symeq}, we conclude that $\theta=0$ corresponds either to the uniform equilibrium with  $N$ large droplets of height $h^{-1}$ if $\overline{V}>1$, or to the uniform equilibrium with $N$ small droplets of height $h^{-1}$ if $0<\overline{V}\leq 1$. Hence  after this extended definition of $\theta$, $\theta=0$ represents an equilibrium in the $(\overline{V}, \theta)$-plane for every $\overline{V}>0$. Having such a constant equilibrium for all parameter values is commonly assumed in bifurcation theory \cite{GuHo}. 

Now let $a\equiv h+h^{-1}$ and $b \equiv h-h^{-1}$. Then $a^2-b^2=4$. Since \cref{deftheta} becomes
\begin{equation}
	\theta = {1\over 4}\,b\,(b^2+6),
\end{equation}
we have
\begin{equation}\label{defab}
    b={\mathcal B}(\theta) \equiv { {
-2^{2/3}+2^{1/3}\,\left(\theta+\sqrt{2+\theta^2}\right)^{2/3} }\over {\left(\theta+\sqrt{2+\theta^2}\right)^{1/3}}   }\quad\text{and}\quad
a = {\mathcal A}(\theta) \equiv  \left(4+  {\mathcal B}(\theta)^2\right)^{1/2}.
\end{equation}
Equation \cref{volconstr} can now be written in the form 
$\overline{V} = {1\over 8}\,a^3+{{2\,\alpha-1}\over 8}\,b\,(b^2+6)$.
Hence we recover a single equation relating $\theta$ and $\overline{V}$:
\begin{equation}\label{Aeq}
	\overline{V} = {1\over 8}\,{\mathcal A}(\theta)^3+ \left(\alpha-{1\over 2}\right)\,\theta.
\end{equation}
Equation \cref{Aeq} is - up to a rescaling - the same as in \cite{SlSt}. This might seem curious since the work \cite{SlSt} is concerned with inviscid spherical-cap droplets subject to $S_N$ symmetry. On second thought,  equilibria in \cite{SlSt} are determined by the same volume constraint together with the Young-Laplace relation as here. \Cref{ecurves} displays the generic situation: curves of equilibria with $n$ large droplets determined by Equation \cref{Aeq} plus the uniform equilibrium curve $\theta=0$ (labeled $n=0$ for $\overline{V}\leq 1$ and $n=N$ for $\overline{V}>1$).
\begin{figure}[tbhp]
\centering
    \includegraphics[width=0.7\textwidth]{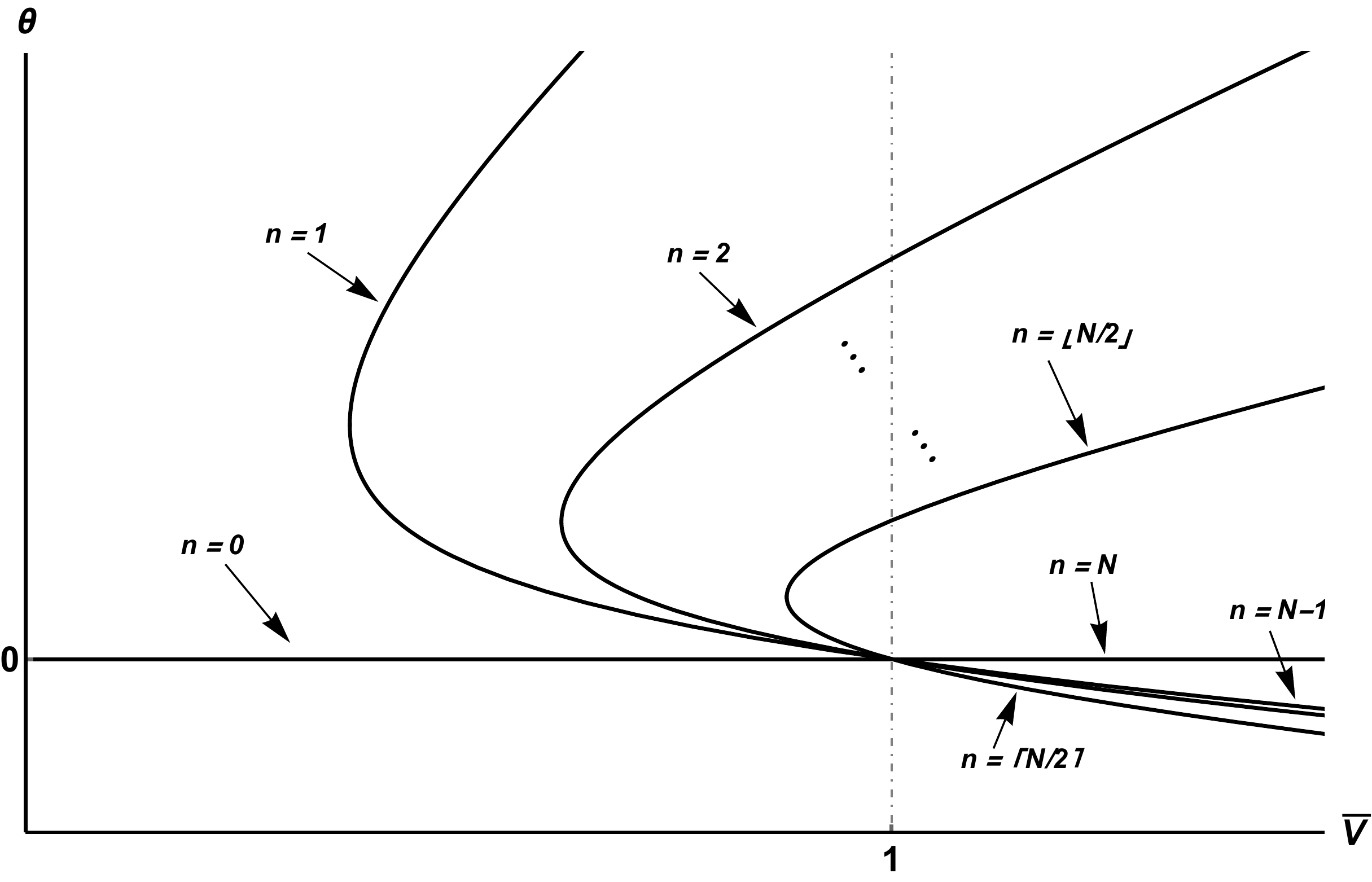}
\caption{Equilibrium curves in the $(\overline{V}, \theta)$-plane for equilibria with $n$ large droplets (odd $N$) }
 \label{ecurves}
\end{figure}

 Equation \cref{Aeq} allows an interesting first insight in the location of equilibria in dependence on $\overline{V}$. To obtain more detailed information about the location and nature of equilibria, we proceed by studying the zeros of the mass polynomial ${\mathcal P}_{\alpha}$ directly. As we will see, this approach will prove useful: Firstly, we gather details about the size of equilibria which have an immediate bearing on their stability. Secondly, we can easily  identify the turning points of the equilibrium curves in \cref{ecurves}. Thirdly, our results lay the foundation of our study of hierarchies among the equilibria.

From here onwards it will be convenient  to allow any real $\alpha$ with $0\leq \alpha\leq 1$ in the definition of ${\mathcal P}_\alpha$, keeping in mind that $\alpha={n\over N}$ describes the situation of $n$ large and $N-n$ small droplets.
Moreover,  to find non-uniform  equilibria (i.e.\,equilibria consisting of both large and small droplets), it suffices to assume that $1\leq n\leq N-1$ and  to study the zeros of ${\mathcal P}_\alpha$ for $0<\alpha<1$.
It is tacitly understood in all later developments that zeros are counted according to their multiplicity. The key result on zeros of ${\mathcal P}_\alpha$ is the following:

\begin{theorem}\label{poszero}
 Let $\overline{V}>0$. Then for every $0<\alpha<1$, the mass polynomial ${\mathcal P}_\alpha$,  has either no positive zeros or exactly two. It has exactly two positive zeros $h_1=h_1(\alpha)$ and $h_2=h_2(\alpha)$ with $h_1\leq h_2$ if and only if
\begin{equation}\label{KVcon}
		\overline{V}\geq \alpha^{3/4}\,\left(1-\alpha\right)^{1/4}+\alpha^{1/4}\,\left(1-\alpha\right)^{3/4}.
\end{equation}
In this case the zeros $h_1$ and $h_2$ are such that either
\begin{align}
	& h_1=\left({1\over \alpha}-1\right)^{1/4}=h_2\quad \text{if\quad  $\overline{V}=\alpha^{3/4}\,\left(1-\alpha\right)^{1/4}+\alpha^{1/4}\,\left(1-\alpha\right)^{3/4}$, or}\\
 	& h_1<\left({1\over \alpha}-1\right)^{1/4}<h_2\quad \text{if\quad  $\overline{V}>\alpha^{3/4}\,\left(1-\alpha\right)^{1/4}+\alpha^{1/4}\,\left(1-\alpha\right)^{3/4}$.}
\end{align}
\end{theorem}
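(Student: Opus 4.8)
The plan is to eliminate $\overline{V}$ and convert the zero-counting problem into a level-set count for a single function of $h$. Since $\overline{V}$ enters ${\mathcal P}_\alpha$ only through the term $-4\overline{V}h^3$, for $h>0$ the equation ${\mathcal P}_\alpha(h)=0$ is equivalent to $g(h)=\overline{V}$, where
\[
g(h) \equiv \frac{\alpha h^6+3\alpha h^4+3(1-\alpha)h^2+1-\alpha}{4h^3} = \frac{\alpha}{4}h^3+\frac{3\alpha}{4}h+\frac{3(1-\alpha)}{4}h^{-1}+\frac{1-\alpha}{4}h^{-3}.
\]
Equivalently ${\mathcal P}_\alpha(h)=4h^3\bigl(g(h)-\overline{V}\bigr)$, so the positive zeros of ${\mathcal P}_\alpha$ are exactly the solutions of $g(h)=\overline{V}$ on $(0,\infty)$, with matching multiplicities. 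Thus it suffices to understand the shape of $g$.

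First I would record the boundary behaviour: because $0<\alpha<1$, both $\alpha$ and $1-\alpha$ are positive, so $g(h)\to+\infty$ as $h\to 0^+$ (from the $h^{-3}$ term) and as $h\to+\infty$ (from the $h^3$ term), while $g$ is smooth and positive on $(0,\infty)$. Next I would locate the critical points: differentiating and clearing denominators, $g'(h)=0$ is equivalent to $\alpha h^6+\alpha h^4-(1-\alpha)h^2-(1-\alpha)=0$, which factors as $(h^2+1)\bigl(\alpha h^4-(1-\alpha)\bigr)=0$. Since $h^2+1>0$, the only positive critical point is $h_*=(1/\alpha-1)^{1/4}$. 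Because $g\to+\infty$ at both ends and has a single interior critical point, $h_*$ is the unique global minimizer, with $g$ strictly decreasing on $(0,h_*)$ and strictly increasing on $(h_*,\infty)$.

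With the shape of $g$ pinned down, the count follows from the intermediate value theorem: $g(h)=\overline{V}$ has no solution when $\overline{V}<g(h_*)$, exactly one when $\overline{V}=g(h_*)$ (namely $h_*$), and exactly two (one in each monotone branch, straddling $h_*$) when $\overline{V}>g(h_*)$. This yields the trichotomy $h_1<h_*<h_2$, $h_1=h_*=h_2$, and ``no zero,'' matching the statement once $g(h_*)$ is identified with the threshold. To see that the equality case is a genuine double zero (so the total is $0$ or $2$ with multiplicity), I would use ${\mathcal P}_\alpha(h)=4h^3\bigl(g(h)-\overline{V}\bigr)$, giving ${\mathcal P}_\alpha'(h_*)=12h_*^2\bigl(g(h_*)-\overline{V}\bigr)+4h_*^3 g'(h_*)=0$ when $\overline{V}=g(h_*)$, since both factors vanish; that $h_*$ is exactly double (not higher order) follows from $g''(h_*)>0$, read off from the factored form of $g'$.

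The one genuinely computational step — and the main place algebra is needed — is evaluating $g(h_*)$ and showing it equals $\alpha^{3/4}(1-\alpha)^{1/4}+\alpha^{1/4}(1-\alpha)^{3/4}$. The hard part is purely bookkeeping with fractional powers: substituting $h_*^4=(1-\alpha)/\alpha$ collapses the numerator of $g$ to $4(1-\alpha)\bigl(\sqrt{(1-\alpha)/\alpha}+1\bigr)$, and dividing by $4h_*^3=4\bigl((1-\alpha)/\alpha\bigr)^{3/4}$ yields $\alpha^{1/4}(1-\alpha)^{1/4}\bigl(\alpha^{1/2}+(1-\alpha)^{1/2}\bigr)$, which is the claimed threshold. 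I expect no conceptual difficulty here, only care with exponents; the subtlety worth flagging is that the whole argument uses $0<\alpha<1$ essentially, both to force the blow-up of $g$ at the two ends and to make $h_*$ a bona fide positive real.
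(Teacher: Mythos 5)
Your proof is correct, and it takes a genuinely different route from the paper's. The paper argues purely algebraically: it exhibits an explicit quartic $q_\alpha$ with positive coefficients (hence $q_\alpha>0$ on $(0,\infty)$) and verifies that ${\mathcal P}_\alpha(h)\gtreqqless q_\alpha(h)\,\bigl(h-(1/\alpha-1)^{1/4}\bigr)^2$ according as $\overline{V}\lesseqqgtr L(\alpha)$, reading off the three cases from this: exact factorization with a double root at the borderline, strict positivity below it, and a sign change on either side of the critical point above it. You instead divide out the $\overline{V}$-dependence and count level sets of $g(h)={\mathcal P}_\alpha(h)/(4h^3)+\overline{V}$, showing $g$ is strictly unimodal with unique positive critical point $h_*=(1/\alpha-1)^{1/4}$ and minimum value $L(\alpha)$. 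Both arguments hinge on the same critical point and threshold, but yours is more self-contained: it \emph{derives} $h_*$ and $L(\alpha)$ rather than presenting the factorization as a fait accompli, it makes the ``at most two positive zeros'' count transparent (one zero per monotone branch), and it explains conceptually why $L(\alpha)$ is the threshold, namely $L(\alpha)=\min g$. What the paper's route buys is brevity and a calculus-free polynomial verification; what it leaves slightly implicit --- that $\overline{V}>L(\alpha)$ yields \emph{exactly} two positive zeros rather than merely at least two --- your monotonicity argument supplies for free. Your computations (the factorization $g'(h)=\tfrac{3}{4}h^{-4}(h^2+1)(\alpha h^4-(1-\alpha))$, the evaluation $g(h_*)=L(\alpha)$, and the multiplicity-two check at the borderline via $g''(h_*)>0$) all check out, and you correctly flag that $0<\alpha<1$ is what forces the blow-up of $g$ at both ends.
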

\begin{proof}
Consider the polynomial $q_\alpha$, defined  by
\begin{multline}
	q_\alpha(h) \equiv \alpha\,\biggl(h^4+2\,\left({1\over \alpha}-1\right)^{1/4}\,h^3+3\,\left(1+\left({1\over \alpha}-1\right)^{1/2}\right)\,h^2+\\
	2\,\left({1\over \alpha}-1\right)^{1/4}\,h+\left({1\over \alpha}-1\right)^{1/2}\biggr).
\end{multline}
Evidently, $q_\alpha(h)>0$ for $h>0$.
Then for $h>0$,
\begin{equation}\label{gseq}
	{\mathcal P}_\alpha(h)\gtreqqless
	q_\alpha(h)\,\left(h-\left({1\over \alpha}-1\right)^{1/4}\right)^2\  \text{if   $\overline{V}\lesseqqgtr \alpha^{3/4}\,\left(1-\alpha\right)^{1/4}+\alpha^{1/4}\,\left(1-\alpha\right)^{3/4}$,}\\
\end{equation}
respectively. The claim follows now immediately.
\end{proof}
Condition \cref{KVcon} in \cref{poszero} is both  necessary and sufficient for positive zeros of the polynomial ${\mathcal P}_\alpha$. This surprising result hinges on the factorizability of  ${\mathcal P}_\alpha$, a polynomial of degree 6, when   $\overline{V}= \alpha^{3/4}\,\left(1-\alpha\right)^{1/4}+\alpha^{1/4}\,\left(1-\alpha\right)^{3/4}$. The symmetry \cref{symeq} hints at this result. For later use let us introduce
the \emph{limiting  function}
\begin{equation}
	\alpha\longmapsto L(\alpha)\equiv \alpha^{3/4}\,\left(1-\alpha\right)^{1/4}+\alpha^{1/4}\,\left(1-\alpha\right)^{3/4},
\end{equation}
appearing on the right of \cref{KVcon}. $L$ is strictly increasing on $(0,{1\over 2})$ and strictly decreasing on 
$({1\over 2}, 1)$. It has the maximum 1 at $\alpha={1\over 2}$. The function is graphed in \cref{limfunc}.
\begin{figure}[tbhp]
  \begin{center}
    \includegraphics[width=0.35\textwidth]{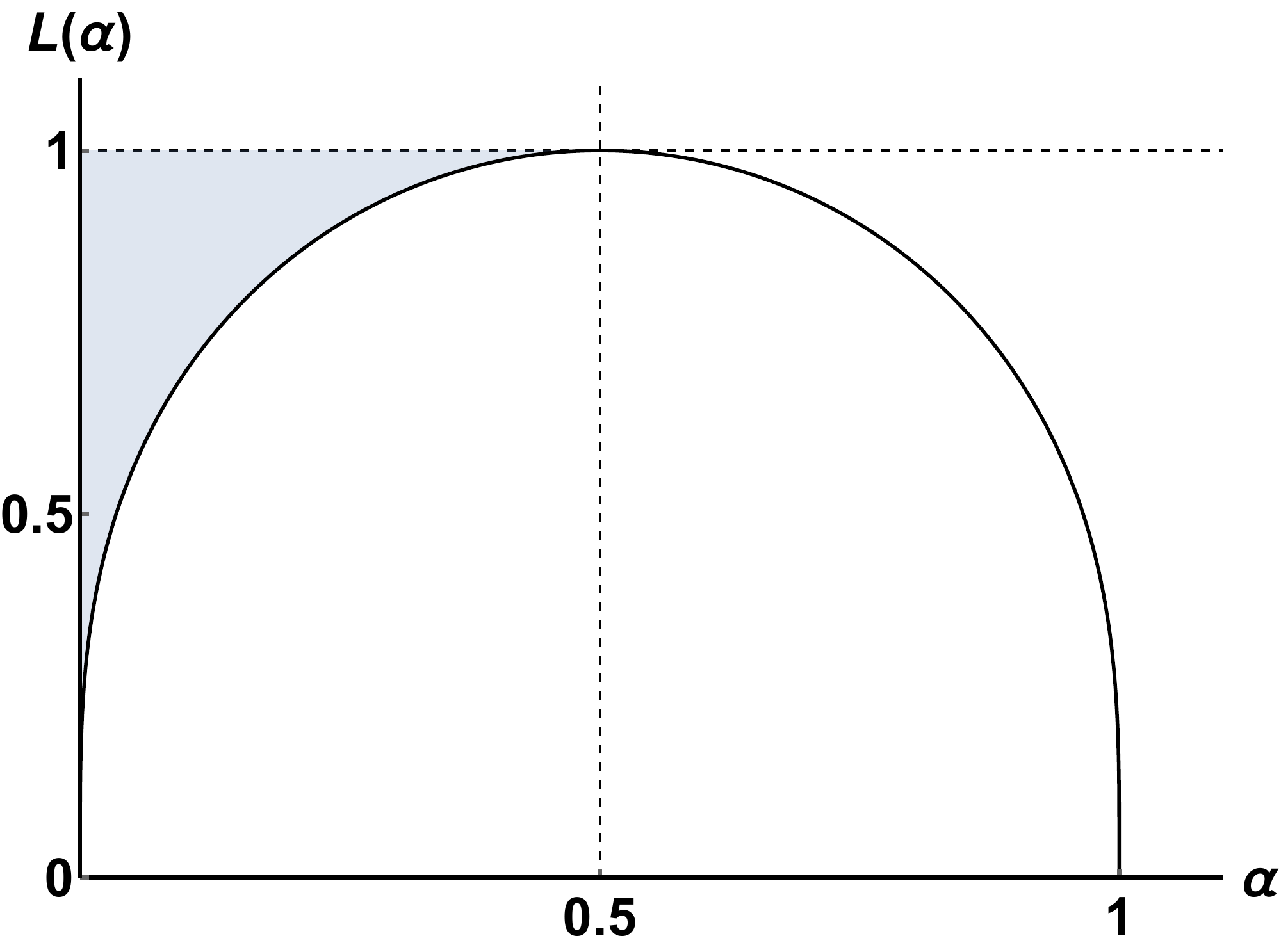}
  \end{center}
\vspace*{-0.3cm}
 \caption{The limiting function $L$}
  \label{limfunc}
\end{figure}
Next we record some additional properties  of the mass polynomial ${\mathcal P}_\alpha$.

\begin{proposition}\label{propp} For every $\overline{V}>0$ and $0<\alpha< 1$,
\begin{align}
	 & \text{${\mathcal P}_\alpha(h)\rightarrow \infty$ as $h\rightarrow \infty$,}\quad
	{\mathcal P}_\alpha(0) = 1-\alpha,\quad {\mathcal P}_\alpha(1) = 4\,\left(1-\overline{V}\right),\label{zeroone}\\
&{\mathcal P}_\alpha\left(\left({{1}\over \alpha}-1\right)^{1/4}\right) = {4}\,\left({1\over \alpha}-1\right)^{3/4}\,\left(L(\alpha)-\overline{V}\right).\label{critval}
\end{align}
\end{proposition}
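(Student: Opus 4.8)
The plan is to verify all four statements by direct substitution into the definition \cref{defpa} of $\mathcal{P}_\alpha$, since each reduces to an algebraic identity. The first three are immediate; only \cref{critval} requires any care.

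First, the limiting behavior and the two easy evaluations in \cref{zeroone}. The leading coefficient of $\mathcal{P}_\alpha$ is $\alpha>0$, so $\mathcal{P}_\alpha(h)=\alpha\,h^6+O(h^4)\to\infty$ as $h\to\infty$. Setting $h=0$ annihilates every term except the constant, giving $\mathcal{P}_\alpha(0)=1-\alpha$. Setting $h=1$ gives $\mathcal{P}_\alpha(1)=\alpha+3\alpha-4\,\overline{V}+3\,(1-\alpha)+(1-\alpha)$; the $\alpha$-terms cancel and the constants sum to $4$, so $\mathcal{P}_\alpha(1)=4\,(1-\overline{V})$.

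The only computation requiring attention is \cref{critval}. I would write $\beta\equiv{1\over\alpha}-1=(1-\alpha)/\alpha$, so that the evaluation point is $h_\ast=\beta^{1/4}$ with $h_\ast^4=\beta$, $h_\ast^3=\beta^{3/4}$, and $h_\ast^6=\beta^{3/2}$. The crucial simplification is the identity $\alpha\,\beta=1-\alpha$, which turns $\alpha\,h_\ast^6=\alpha\,\beta^{3/2}=(1-\alpha)\,\beta^{1/2}$ and $3\,\alpha\,h_\ast^4=3\,(1-\alpha)$. Collecting the terms free of $\overline{V}$ then gives $(1-\alpha)\,\beta^{1/2}+3\,(1-\alpha)+3\,(1-\alpha)\,\beta^{1/2}+(1-\alpha)=4\,(1-\alpha)\,(1+\beta^{1/2})$, while the remaining term is $-4\,\overline{V}\,\beta^{3/4}$. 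Factoring out $4\,\beta^{3/4}$ yields $\mathcal{P}_\alpha(h_\ast)=4\,\beta^{3/4}\left[(1-\alpha)(1+\beta^{1/2})\,\beta^{-3/4}-\overline{V}\right]$.

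It remains to recognize the bracketed coefficient as $L(\alpha)$. Resubstituting $\beta=(1-\alpha)/\alpha$ and expanding the two pieces of $(1-\alpha)(1+\beta^{1/2})\,\beta^{-3/4}$ separately produces $\alpha^{1/4}(1-\alpha)^{3/4}+\alpha^{3/4}(1-\alpha)^{1/4}$, which is exactly $L(\alpha)$; this establishes \cref{critval}. The main (indeed only) obstacle is the bookkeeping of fractional powers of $\alpha$ and $1-\alpha$ in this last step. The key to keeping it painless is to exploit $\alpha\,\beta=1-\alpha$ at the very start, so that every surviving $\overline{V}$-free term already carries the common factor $4\,(1-\alpha)$ before the final factorization is performed.
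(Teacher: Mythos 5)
Your proof is correct and is exactly the direct verification the paper intends (the paper simply ``records'' Proposition \ref{propp} without proof, the computations being routine substitutions into \cref{defpa}). The substitution $\beta = \tfrac{1}{\alpha}-1$ together with the identity $\alpha\beta = 1-\alpha$ cleanly handles the only nontrivial evaluation \cref{critval}, and all the algebra checks out.
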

With this information we obtain:

\begin{theorem}\label{pVL1}
Let $\overline{V}>1$. For every\,  $0<\alpha<1$, the mass polynomial ${\mathcal P}_\alpha$ has exactly one zero $h_L=h_L(\alpha)$ larger than 1. In particular,
\begin{equation}\label{estLa1}
	h_L>\left({1\over \alpha}-1\right)^{1/4}.
\end{equation}
\end{theorem}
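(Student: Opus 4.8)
The plan is to combine the exact count of positive zeros supplied by \cref{poszero} with the sign information recorded in \cref{propp}. First I would observe that the hypothesis $\overline{V}>1$ places us squarely in the regime governed by \cref{poszero}: since the limiting function satisfies $L(\alpha)\le 1$ for all $\alpha\in(0,1)$ (with equality only at $\alpha=\tfrac12$), we have $\overline{V}>1\ge L(\alpha)$, so condition \cref{KVcon} holds with \emph{strict} inequality. Hence \cref{poszero} guarantees that ${\mathcal P}_\alpha$ has exactly two positive zeros $h_1\le h_2$, and moreover that they strictly straddle the critical point, $h_1<\left(\tfrac1\alpha-1\right)^{1/4}<h_2$.

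The next step is to locate these two zeros relative to $h=1$ using \cref{propp}. Since $0<\alpha<1$ we have ${\mathcal P}_\alpha(0)=1-\alpha>0$, while $\overline{V}>1$ gives ${\mathcal P}_\alpha(1)=4(1-\overline{V})<0$; finally ${\mathcal P}_\alpha(h)\to\infty$ as $h\to\infty$. The intermediate value theorem therefore forces at least one zero in $(0,1)$ and at least one in $(1,\infty)$. Because \cref{poszero} already caps the total number of positive zeros at two, there is exactly one zero in each of these intervals, so $h_1\in(0,1)$ and $h_2\in(1,\infty)$.

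This immediately yields the theorem: the unique zero larger than $1$ is $h_L=h_2$. For the estimate \cref{estLa1}, I would simply read off the strict straddling inequality already provided by \cref{poszero}, namely $h_2>\left(\tfrac1\alpha-1\right)^{1/4}$, which is exactly $h_L>\left(\tfrac1\alpha-1\right)^{1/4}$.

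I do not anticipate a serious obstacle, since the heavy lifting---establishing that there are exactly two positive zeros and that they bracket $\left(\tfrac1\alpha-1\right)^{1/4}$---is already done in \cref{poszero}. The only point needing care is the bookkeeping that reconciles the two bracketing pictures: the zeros straddle $\left(\tfrac1\alpha-1\right)^{1/4}$ (from \cref{poszero}) and they also straddle $1$ (from the sign change of ${\mathcal P}_\alpha$ across $h=1$). One must simply verify these are consistent and that the larger zero $h_2$ is the one exceeding $1$, which is automatic once we know one zero lies below $1$ and the other above, with $h_1\le h_2$. It is also worth confirming that $h=1$ is genuinely not a zero (true since ${\mathcal P}_\alpha(1)=4(1-\overline{V})\neq0$), so that the count of zeros strictly above $1$ is unambiguous.
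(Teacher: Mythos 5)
Your proposal is correct and follows essentially the same route as the paper: invoke \cref{poszero} (noting \cref{KVcon} holds automatically since $L(\alpha)\le 1<\overline{V}$) to get exactly two positive zeros straddling $\left(\tfrac1\alpha-1\right)^{1/4}$, then use the sign data of \cref{propp} at $0$, $1$, and $\infty$ to place exactly one zero above $1$. The only cosmetic difference is that the paper derives \cref{estLa1} by citing \cref{critval} directly (i.e.\ ${\mathcal P}_\alpha\bigl(\left(\tfrac1\alpha-1\right)^{1/4}\bigr)<0$), whereas you read the same straddling inequality off the conclusion of \cref{poszero}; these are equivalent.
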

Note that condition \cref{KVcon} is vacuously satisfied. The existence and uniqueness of $h_L>1$ is a consequence of \cref{zeroone} when $\overline{V}>1$. 
The estimate \cref{estLa1} follows from \cref{critval} since for $\overline{V}>1$, $0<\alpha<1$, we have  ${\mathcal P}_\alpha\left(\left({1\over \alpha}-1\right)^{1/4}\right)<0$. Of course, \cref{estLa1} is useful only if $0<\alpha<{1\over 2}$.

\begin{theorem}\label{pVE1}
Let $\overline{V} = 1$.
\begin{enumerate}
\item[\mbox{(a)}]  If\, $0<\alpha<{1\over 2}$,  the mass polynomial ${\mathcal P}_\alpha$ has exactly one zero $h_L=h_L(\alpha)$ larger than 1. In this case,
\begin{equation}\label{estE1}
	h_L>\left({1\over \alpha}-1\right)^{1/4}.
\end{equation}
\item[\mbox{(b)}] If\, ${1\over 2}\leq \alpha<1$, the mass polynomial ${\mathcal P}_\alpha$ has no zero larger than 1.
\end{enumerate}
\end{theorem}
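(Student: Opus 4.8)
The plan is to exploit the fact that at $\overline{V}=1$ the uniform height $h=1$ is automatically a zero of $\mathcal{P}_\alpha$, and then to use the localization of zeros already established in \cref{poszero} to decide on which side of $1$ the second zero lies. First I would record, either by direct substitution or by reading off the value $\mathcal{P}_\alpha(1)=4(1-\overline{V})$ in \cref{propp}, that $\mathcal{P}_\alpha(1)=0$ for every $\alpha$ when $\overline{V}=1$. Thus $h=1$ is always one of the positive zeros of $\mathcal{P}_\alpha$.

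Next I would invoke the recorded properties of the limiting function $L$. Since $L$ attains its maximum value $1$ only at $\alpha=\tfrac12$ and is strictly monotone on each side, we have $L(\alpha)<1=\overline{V}$ for $\alpha\neq\tfrac12$ and $L(\tfrac12)=1$. Hence the hypothesis $\overline{V}\geq L(\alpha)$ of \cref{poszero} is satisfied for every $0<\alpha<1$, so $\mathcal{P}_\alpha$ possesses exactly two positive zeros $h_1\leq h_2$ (counted with multiplicity) in all cases, and the zero $h=1$ found above must coincide with one of them.

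The decisive step is then to compare the critical abscissa $\left(\tfrac1\alpha-1\right)^{1/4}$ with $1$: it exceeds $1$ precisely when $\alpha<\tfrac12$, equals $1$ when $\alpha=\tfrac12$, and lies below $1$ when $\alpha>\tfrac12$. For part (a), $0<\alpha<\tfrac12$, the strict inequality $\overline{V}>L(\alpha)$ lets \cref{poszero} supply the ordering $h_1<\left(\tfrac1\alpha-1\right)^{1/4}<h_2$, whence $h_2>\left(\tfrac1\alpha-1\right)^{1/4}>1$; since $h=1$ must equal $h_1$ or $h_2$ and $h_2>1$, necessarily $h_1=1$, leaving $h_L:=h_2$ as the unique zero larger than $1$ and automatically yielding the estimate \cref{estE1}. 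For part (b) with $\tfrac12<\alpha<1$ the same strict ordering forces $h_1<\left(\tfrac1\alpha-1\right)^{1/4}<1$, so that $h_2=1$ and no zero exceeds $1$; the boundary case $\alpha=\tfrac12$ falls under the equality $\overline{V}=L(\tfrac12)$, which by \cref{poszero} produces the double zero $h_1=h_2=\left(\tfrac1\alpha-1\right)^{1/4}=1$, again leaving no zero larger than $1$.

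I do not anticipate a serious obstacle, since the zero-counting and localization are already delivered by \cref{poszero}; the only point requiring care is the coincidence $h=1=\left(\tfrac1\alpha-1\right)^{1/4}$ at $\alpha=\tfrac12$, which must be treated through the equality branch of \cref{poszero} rather than the strict one, so that the two zeros collapse onto $h=1$ and the transition between the regimes of (a) and (b) is accounted for consistently.
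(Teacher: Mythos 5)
Your proof is correct. It follows the same skeleton as the paper's: both start from ${\mathcal P}_\alpha(1)=0$ at $\overline{V}=1$ and both lean on the two-zero count of \cref{poszero} to rule out extra roots. The one place you diverge is in deciding on which side of $1$ the second zero sits: the paper computes ${\mathcal P}_\alpha^\prime(1)=6\,(2\,\alpha-1)$ and reads off the answer from the sign of the derivative together with ${\mathcal P}_\alpha(0)>0$ and ${\mathcal P}_\alpha(h)\to\infty$, whereas you never touch the derivative and instead use the full separation statement $h_1<\left({1\over\alpha}-1\right)^{1/4}<h_2$ of \cref{poszero} combined with the observation that the separator $\left({1\over\alpha}-1\right)^{1/4}$ straddles $1$ exactly at $\alpha={1\over 2}$. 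The two mechanisms are equivalent in substance --- your separation inequality and the paper's appeal to \cref{critval} for the estimate \cref{estE1} both ultimately rest on ${\mathcal P}_\alpha\bigl(\left({1\over\alpha}-1\right)^{1/4}\bigr)<0$ when $\overline{V}>L(\alpha)$ --- but your version has the small advantage of handling the estimate and the localization in a single stroke, and of treating the boundary case $\alpha={1\over 2}$ cleanly through the double-zero branch rather than through the vanishing of ${\mathcal P}_\alpha^\prime(1)$.
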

\begin{proof}
Since  ${\mathcal P}_\alpha(1) = 0$ and ${\mathcal P}_\alpha^\prime(1) = 6\,(2\,\alpha-1)$, it is clear in light of \cref{zeroone} that ${\mathcal P}_{\alpha}$ has a zero larger than 1 if and only if $0<\alpha< {1\over 2}$. If so, this is the only zero larger than 1 since ${\mathcal P}_\alpha$ cannot have more than two positive zeros. The estimate \cref{estE1} follows again from \cref{critval} since ${\mathcal P}_\alpha\left(\left({1\over \alpha}-1\right)^{1/4}\right)<0$ if $\overline{V}=1$ and $0<\alpha<{1\over 2}$.
\end{proof}

\begin{theorem}\label{pVS1}
Let  $0<\overline{V}<1$.
\begin{enumerate}
\item[\mbox{(a)}] Suppose\, $0<\alpha<{1\over 2}$ and $\overline{V}\geq L(\alpha)$.
Then the mass polynomial ${\mathcal P}_\alpha$  has exactly two  zeros $h_l=h_l(\alpha)$ and $h_L=h_L(\alpha)$ such that
\begin{equation}\label{h12}
	1<h_l\leq \left({1\over \alpha}-1\right)^{1/4}\leq h_L.
\end{equation}
The two zeros $h_l$ and $h_L$ are such that either
\begin{align}
	& h_l=\left({1\over \alpha}-1\right)^{1/4}=h_L\quad \text{if\quad  $\overline{V}=L(\alpha)$, or}\\
 	& h_l<\left({1\over \alpha}-1\right)^{1/4}<h_L\quad \text{if\quad  $\overline{V}> L(\alpha)$.}
\end{align}
\item[\mbox{(b)}] If ${1\over 2}\leq \alpha<1$ or $\overline{V}< L(\alpha)$, the mass polynomial ${\mathcal P}_\alpha$ has no zero larger than 1.
\end{enumerate}
\end{theorem}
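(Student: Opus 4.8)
The plan is to derive both parts from \cref{poszero} together with the explicit evaluations recorded in \cref{propp}, the decisive input being the observation that $\mathcal{P}_\alpha(1)=4\left(1-\overline{V}\right)>0$ whenever $0<\overline{V}<1$. Throughout write $c\equiv\left({1\over\alpha}-1\right)^{1/4}$, so that $c>1$ when $0<\alpha<{1\over 2}$, $c=1$ when $\alpha={1\over 2}$, and $c<1$ when ${1\over 2}<\alpha<1$; recall also from \cref{critval} that $\mathcal{P}_\alpha(c)=4\,c^3\left(L(\alpha)-\overline{V}\right)$, whose sign is that of $L(\alpha)-\overline{V}$. The first move is to dispose of the case $\overline{V}<L(\alpha)$: by \cref{poszero} the mass polynomial then has no positive zero at all, hence none exceeding $1$, which settles that branch of (b) for every $\alpha$. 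It remains to treat $\overline{V}\geq L(\alpha)$, where \cref{poszero} furnishes exactly two positive zeros $h_1\leq h_2$ with $h_1\leq c\leq h_2$, equality holding throughout precisely when $\overline{V}=L(\alpha)$ (the double root at $c$).

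For part (a), with $0<\alpha<{1\over 2}$ we have $c>1$. If $\overline{V}=L(\alpha)$ the two roots coincide at $c>1$, giving $1<h_l=c=h_L$. If $\overline{V}>L(\alpha)$ then $\mathcal{P}_\alpha(c)<0$ while $\mathcal{P}_\alpha(1)>0$ and $1<c$, so the intermediate value theorem produces a zero in $(1,c)$; since the only positive zeros are $h_1<c<h_2$, this zero must be $h_1$, whence $1<h_1<c<h_2$. Either way $1<h_l\leq c\leq h_L$, which is exactly \cref{h12} together with the claimed strict/non-strict dichotomy.

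For part (b) it remains to handle ${1\over 2}\leq\alpha<1$ with $\overline{V}\geq L(\alpha)$. The endpoint $\alpha={1\over 2}$ is vacuous, since $L\left({1\over 2}\right)=1$ would force $\overline{V}\geq 1$, contradicting $\overline{V}<1$; so assume ${1\over 2}<\alpha<1$, giving $c<1$. In the double-root case $\overline{V}=L(\alpha)$ both roots sit at $c<1$, so none exceeds $1$. If $\overline{V}>L(\alpha)$, then $\mathcal{P}_\alpha(c)<0$, $\mathcal{P}_\alpha(1)>0$ and $c<1$ yield a zero in $(c,1)$; as the zeros are $h_1<c<h_2$, the one in $(c,1)$ must be $h_2$, giving $h_1<c<h_2<1$. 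Hence no positive zero exceeds $1$.

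I expect the only genuine care to lie in the bookkeeping rather than in any hard estimate. One must ensure every sub-case of (b) is covered---in particular the degenerate $\alpha={1\over 2}$ and the all-encompassing $\overline{V}<L(\alpha)$ branch---and one must correctly match the intermediate-value zero to $h_1$ versus $h_2$ by invoking the straddling $h_1\leq c\leq h_2$ from \cref{poszero}. The positivity of $\mathcal{P}_\alpha(1)$ coming from $\overline{V}<1$ is what drives the whole argument, and no computation beyond \cref{propp} should be required.
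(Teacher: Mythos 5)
Your proposal is correct and follows essentially the same route as the paper: both arguments rest on \cref{poszero} for the count of positive zeros and the straddling $h_1\leq\left({1\over\alpha}-1\right)^{1/4}\leq h_2$, combined with the sign evaluations ${\mathcal P}_\alpha(0)>0$, ${\mathcal P}_\alpha(1)=4\left(1-\overline{V}\right)>0$ and ${\mathcal P}_\alpha\bigl(\left({1\over\alpha}-1\right)^{1/4}\bigr)\leq 0$ from \cref{propp}, to conclude that both zeros lie on the same side of $1$ as $\left({1\over\alpha}-1\right)^{1/4}$. Your explicit intermediate-value bookkeeping and the separate treatment of $\alpha={1\over 2}$ merely spell out details the paper leaves implicit.
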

\begin{proof}
By \cref{poszero}, ${\mathcal P}_\alpha$, $0<\alpha<1$, has either no positive zeros or exactly two. The latter occurs if and only if  \cref{KVcon} holds true. If so,  since ${\mathcal P}_\alpha(0)>0$,  ${\mathcal P}_\alpha(1)>0$ and ${\mathcal P}_\alpha\left(\left({1\over \alpha}-1\right)^{1/4}\right)\leq 0$, \cref{propp} implies that either the number $\left({1\over \alpha}-1\right)^{1/4}$ and  the positive zeros of ${\mathcal P}_\alpha$  lie in the interval $(0,1)$, or they all lie in the interval $(1,\infty)$. Now $\left({1\over \alpha}-1\right)^{1/4}$ lies
in $(1,\infty)$ if and only if $0<\alpha<{1\over 2}$. The remaining claims follow from \cref{poszero}.
\end{proof}
The shaded area in \cref{limfunc} displays permissible pairs  $(\alpha, \overline{V})$ to which part (a) of  \cref{pVS1} applies.

All possible non-uniform equilibria are now completely described by \cref{pVL1,pVE1,pVS1} 
when we note that each such equilibrium is -- up to permutation -- given by the number of its large droplets $n$ of height $h>1$ and the number of its small droplets $N-n$ of height $h^{-1}$.
We note, in particular, that, for $0<\overline{V}\leq 1$, the condition $0<\alpha<{1\over 2}$ translates to $N\geq 3$.  Hence non-uniform equilibria with $0<\overline{V}\leq 1$ require $N\geq 3$.  For $\overline{V}>1$, every equilibrium contains at least one large droplet.  If $N\geq 2$ and $0<\overline{V}<L\left({1\over N}\right)$, then only the uniform equilibrium consisting of $N$ droplets of equal size is possible.

Finally, we end this section with some important remarks about the structure of the dynamical system,  given by \cref{sys,ini}: In the case $s\geq 1$, \cref{gloex} together with \cref{masscon} shows that the initial value problem \cref{sys,ini} defines a semiflow on ${\mathbb V}(\overline{V})$.
By \cref{stat} and \cref{Ader}, this semiflow  is a gradient dynamical system with Lyapunov function $\mathcal W$, see \cite{Ha}.
When $0<s<1$, the situation is more delicate. Here,  \cref{stat} and \cref{Ader} imply  that the initial value problem \cref{sys,ini} defines a generalized semiflow (in the sense of Ball) on ${\mathbb V}(\overline{V})$ with $\mathcal W$ serving again as a Lyapunov function, see \cite{Ba}.

Our discussion above shows that ${\mathbb V}_+(\overline{V})$ (and hence ${\mathbb V}(\overline{V})$ as well) contains only finitely many equilibria for fixed $\overline{V}$. Hence we can appeal to classical results about gradient dynamical systems in the case $s\geq 1$ and to analogous results for generalized semiflows with Lyapunov function in the case $0<s<1$ (see \cite{Ba,Ha}) to conclude:

\begin{proposition}
	The (generalized) semiflow on ${\mathbb V}(\overline{V})$, given by the initial value problem \cref{sys,ini} with ${\mathbf V}(0)\in {{\mathbb V}(\overline{V})}$, has a global attractor which consists of all  equilibria. Specifically, every  semi-trajectory  ${\mathbf V}={\mathbf V}(t)$ of the system \cref{sys} with ${\mathbf V}(0)\in {{\mathbb V}(\overline{V})}$ converges to an equilibrium of \cref{sys} in ${\mathbb V}_+(\overline{V})$ as $t\rightarrow\infty$.
\end{proposition}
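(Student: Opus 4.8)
The plan is to verify the hypotheses of the abstract convergence theorems for gradient dynamical systems (when $s\geq 1$) and for generalized semiflows possessing a Lyapunov function (when $0<s<1$), and then to upgrade ``convergence to the equilibrium set'' into ``convergence to a single equilibrium'' using the finiteness of the equilibrium set recorded above.

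First I would establish precompactness of forward orbits. By \cref{Ader} the map $t\mapsto \mathcal{W}(\mathbf{V}(t))$ is non-increasing along every solution, so $\mathcal{W}(\mathbf{V}(t))\leq \mathcal{W}(\mathbf{V}(0))$ for all $t\geq 0$; since $\mathcal{W}$ is coercive by \cref{propA} (its sublevel sets are bounded), the entire orbit lies in the bounded set $\{\mathcal{W}\leq \mathcal{W}(\mathbf{V}(0))\}\cap \mathbb{V}(\overline{V})$, which is forward invariant by \cref{masscon}. Hence each orbit has compact closure, giving the asymptotic compactness (dissipativity) that, together with the gradient structure, yields the existence of the global attractor via the cited theorems of Hale \cite{Ha} and Ball \cite{Ba}.

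Next I would identify the $\omega$-limit set. Because $\mathcal{W}$ is bounded below by $0$ (\cref{propA}) and non-increasing, $\mathcal{W}(\mathbf{V}(t))$ decreases to some limit $\ell\geq 0$; by continuity $\mathcal{W}\equiv \ell$ on the nonempty, compact $\omega$-limit set $\omega(\mathbf{V}(0))$. The LaSalle invariance principle --- here the invariance of $\omega$-limit sets combined with the equality clause of \cref{Ader}, which says $\frac{d}{dt}\mathcal{W}=0$ exactly at equilibria --- then forces every point of $\omega(\mathbf{V}(0))$ to be an equilibrium lying in $\mathbb{V}_+(\overline{V})$ (by \cref{VV+}). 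Finally, the $\omega$-limit set of a single continuous trajectory is connected (it is a nested intersection of the connected compact sets $\overline{\mathbf{V}([T,\infty))}$); being a connected subset of the finite, hence totally disconnected, equilibrium set, it must reduce to a single point $\mathbf{V}^*$. Thus $\mathbf{V}(t)\to \mathbf{V}^*$ as $t\to\infty$, and the attractor is built from the unstable sets of these equilibria.

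The main obstacle is the regime $0<s<1$, where the right-hand side of \cref{sys} fails to be Lipschitz at equilibria and solutions of \cref{sys,ini} need not be unique, so \cref{sys} no longer defines a classical semiflow. The delicate step is then the invariance of the $\omega$-limit set and the associated LaSalle argument, which I would recover inside Ball's theory of generalized semiflows: I would check his structural axioms (existence for all initial data via \cref{gloex}, time-translation and concatenation of solutions, and upper semicontinuity with respect to initial data), and combine them with the asymptotic compactness above and the Lyapunov property of \cref{Ader} so that Ball's results \cite{Ba} apply. I note that the connectedness argument used for single-point convergence is insensitive to non-uniqueness, since it concerns a fixed continuous solution curve; only the passage ``$\omega$-limit set $\subseteq$ equilibria'' genuinely requires the generalized-semiflow machinery.
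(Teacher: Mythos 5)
Your proposal is correct and follows essentially the same route as the paper, which simply invokes the classical convergence results for gradient semiflows (Hale) when $s\geq 1$ and for generalized semiflows with a Lyapunov function (Ball) when $0<s<1$, combined with the finiteness of the equilibrium set; you have merely unpacked the standard proof of those cited theorems (coercivity and monotonicity of $\mathcal W$ give precompact orbits, LaSalle identifies the $\omega$-limit set with equilibria, and connectedness plus finiteness yields single-point convergence).
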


\begin{snugshade}
\noindent {\em Socio-economic view}:  While we have accounted for all possible rest states (equilibria) in the competition, only some of them will be end states, i.e.\,reachable rest states (stable equilibria). This observation motivates the following section. 
\end{snugshade}

\section{Reachable Rest States: Stability of Equilibria}

\label{sec_stab}

The stability of equilibria was discussed for the Newtonian case $s=1$ with $\overline{V}>1$  in \cite{LeEA1}. The arguments there were based on the observation that
the Lyapunov function ${\mathcal W}$ is independent of the particular network of fluid channels. The authors exploited this observation by  working with a network of their choice (star network) and
combining local information about equilibria (obtained by linearization)  with the Lyapunov Stability Theorem. Instead of pursuing an approach as in \cite{LeEA1}, we will give an argument which is independent of any particular network (except the number $N$), and is also independent of the power-law parameter $s>0$ since the energy functional $\mathcal W$ is as well. Our discussion contains the stability findings   in \cite{LeEA1}. At the core of our approach lies the following simple equivalence:

\begin{proposition}\label{Astab}
	A point\, ${\mathbf V}^*\in {\mathbb V}(\overline{V})$ is a stable equilibrium of the system \cref{sys}  if and only if it is a local minimizer of the Lyapunov function ${\mathcal W}$ on ${\mathbb V}(\overline{V})$.
\end{proposition}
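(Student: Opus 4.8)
The plan is to prove the two implications separately, leaning on three facts already in hand: by \cref{Ader} the functional $\mathcal{W}$ is non-increasing along every solution and its derivative vanishes at a time $t_0$ exactly when $\mathbf{V}(t_0)$ is an equilibrium; the set $\mathbb{V}(\overline{V})$ is forward invariant (\cref{masscon}) and contains only finitely many equilibria for fixed $\overline{V}$, so each equilibrium is isolated; and every semi-trajectory of \cref{sys} starting in $\mathbb{V}(\overline{V})$ converges to an equilibrium as $t\to\infty$ (the preceding proposition). Throughout, ``stable'' is meant in the sense of Lyapunov.

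For the implication ``local minimizer $\Rightarrow$ stable equilibrium,'' I would first argue that a local minimizer $\mathbf{V}^*$ of $\mathcal{W}$ on $\mathbb{V}(\overline{V})$ must be an equilibrium: were it not, \cref{Ader} gives $\frac{d}{dt}\mathcal{W}<0$ at $t=0$ along the trajectory issuing from $\mathbf{V}^*$, so the points $\mathbf{V}(t)$ for small $t>0$ lie in $\mathbb{V}(\overline{V})$, arbitrarily close to $\mathbf{V}^*$, and satisfy $\mathcal{W}(\mathbf{V}(t))<\mathcal{W}(\mathbf{V}^*)$, contradicting minimality. I would then upgrade to a strict local minimizer: choosing a ball $B$ about $\mathbf{V}^*$ containing no other equilibrium and on which $\mathcal{W}\ge\mathcal{W}(\mathbf{V}^*)$, any point $\mathbf{x}_0\in B\cap\mathbb{V}(\overline{V})$ with $\mathbf{x}_0\ne\mathbf{V}^*$ and $\mathcal{W}(\mathbf{x}_0)=\mathcal{W}(\mathbf{V}^*)$ would not be an equilibrium, so its trajectory — staying in $\mathbb{V}(\overline{V})$ and, for small time, in $B$ — would push $\mathcal{W}$ strictly below $\mathcal{W}(\mathbf{V}^*)$, again a contradiction. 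With strictness secured, $\mathcal{W}-\mathcal{W}(\mathbf{V}^*)$ is a strict Lyapunov function and the classical Lyapunov stability theorem yields stability.

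For the converse, suppose $\mathbf{V}^*$ is a stable equilibrium but, for contradiction, not a local minimizer, and pick $\mathbf{x}_n\to\mathbf{V}^*$ in $\mathbb{V}(\overline{V})$ with $\mathcal{W}(\mathbf{x}_n)<\mathcal{W}(\mathbf{V}^*)$. I would fix $\epsilon>0$ small enough that $\mathbf{V}^*$ is the only equilibrium in the closed $\epsilon$-ball about it (possible by isolation) and let $\delta$ be the corresponding stability radius. For large $n$ we have $\|\mathbf{x}_n-\mathbf{V}^*\|<\delta$, so the trajectory $\mathbf{V}^{(n)}(t)$ stays within $\epsilon$ of $\mathbf{V}^*$ for all $t\ge 0$; since it converges to an equilibrium and the only candidate is $\mathbf{V}^*$, we get $\mathbf{V}^{(n)}(t)\to\mathbf{V}^*$ and hence $\mathcal{W}(\mathbf{V}^{(n)}(t))\to\mathcal{W}(\mathbf{V}^*)$ by continuity of $\mathcal{W}$. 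But \cref{Ader} gives $\mathcal{W}(\mathbf{V}^{(n)}(t))\le\mathcal{W}(\mathbf{x}_n)<\mathcal{W}(\mathbf{V}^*)$ for all $t$, and passing to the limit $t\to\infty$ forces $\mathcal{W}(\mathbf{V}^*)\le\mathcal{W}(\mathbf{x}_n)<\mathcal{W}(\mathbf{V}^*)$, which is absurd. Therefore $\mathbf{V}^*$ is a local minimizer.

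I anticipate two obstacles. The subtle analytic point is the upgrade from a local to a \emph{strict} local minimizer in the first implication: this is precisely where the isolation of equilibria must be combined with the strict-decrease property of $\mathcal{W}$ off equilibria, since without strictness the textbook Lyapunov theorem does not apply. The second is that for $0<s<1$ the initial value problem \cref{sys} defines only a generalized semiflow (in Ball's sense), so solutions may fail to be unique and the stability and convergence arguments must be phrased to survive this. Here I would interpret stability as a statement about \emph{all} solutions issuing from a given initial point and rely on the trajectory-wise monotonicity of $\mathcal{W}$ together with convergence to the equilibrium set, both of which persist in the generalized setting; each step of the argument then goes through verbatim.
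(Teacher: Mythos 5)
Your argument is correct and is exactly the standard Lyapunov/gradient-system reasoning the paper relies on: the paper states \cref{Astab} without proof as a ``simple equivalence,'' and its only accompanying remark --- that a local minimizer is automatically a \emph{strict} local minimizer by \cref{Ader} --- is precisely the upgrade step you carry out before invoking the Lyapunov stability theorem. Your handling of the converse via isolation of equilibria and convergence of trajectories to the equilibrium set, and your caveat about the generalized semiflow for $0<s<1$, are consistent with the framework the paper sets up (citing Hale and Ball), so nothing is missing.
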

Note that a local minimizer of $\mathcal W$ is actually a strict local minimizer by \cref{Ader}. Also, in light of \cref{VV+} we may replace ${\mathbb V}(\overline{V})$ by ${\mathbb V}_+(\overline{V})$ here.

\begin{snugshade}
\noindent {\em Applied mathematics aside}: \Cref{Astab} characterizes stable equilibria as minimizers of an optimization problem subject to constraints. A single Lagrange multiplier suffices for the volume constraint.
\end{snugshade}

Consider the Lagrangian ${\mathcal L}: {\mathbb R}^N\times {\mathbb R}\rightarrow {\mathbb R}$, given for ${\mathbf V} = (V_1,\ldots,V_N)$ by
\begin{equation}
	{\mathcal L}({\mathbf V}, \lambda) \equiv{\mathcal W}({\mathbf V})-\lambda\,F({\mathbf V})\quad
\text{with}\quad
	F({\mathbf V}) \equiv {1\over N}\,\sum_{k=1}^N V_k -\overline{V}.
\end{equation}
If ${\mathcal W}$ assumes a local minimum on ${\mathbb V}(\overline{V})$ at ${\mathbf V}^* = (V^*_1,\ldots,V^*_N)$, then there exists $\lambda^*\in \mathbb R$ such that  $({\mathbf V}^*,\lambda^*)$ is a local minimizer of ${\mathcal L}$  on ${\mathbb R}^N\times {\mathbb R}$.
 For ${\mathcal L}$ to have a local minimum on ${\mathbb R}^N\times {\mathbb R}$ at $({\mathbf V}^*, \lambda^*)$, it is necessary that
\begin{align}
	& D_{\mathbf V} {\mathcal L}({\mathbf V}^*,\lambda^*) = {\mathbf 0},\quad F({\mathbf V}^*) = 0,\quad \text{and}\label{condmin1}\\
	& {\boldsymbol \phi}\,  D_{\mathbf V}^2 \,{\mathcal L}({\mathbf V}^*,\lambda^*)\, {{\boldsymbol \phi}}^T\geq 0\quad \text{for all ${\boldsymbol \phi}\in {\mathbb R}^N$ with $\left(D  F({\mathbf V}^*)\right)\,{\boldsymbol \phi}^T = 0$.}\label{condmin2}
\end{align}
Here, $D F$ denotes the total derivative (row vector) of $F$, while $D_{\mathbf V} {\mathcal L}$ and $D_{\mathbf V}^2 \,{\mathcal L} $ denote the total derivative  and the Hessian of $\mathcal L$ with regard to $\mathbf V$, respectively. Clearly, in our situation we have  $D_{\mathbf V}^2 \,{\mathcal L} = D^2\,{\mathcal W}$, the Hessian of ${\mathcal W}$.
Conversely, if $({\mathbf V}^*, \lambda^*)\in {\mathbb V}(\overline{V})\times {\mathbb R}$ is such that the condition \cref{condmin1} holds  and such that
\begin{equation}\label{condmin3}
	  {\boldsymbol \phi}\,  D_{\mathbf V}^2 \,{\mathcal L}({\mathbf V}^*,\lambda^*)\, {{\boldsymbol \phi}}^T> 0\quad \text{for all nonzero $ {\boldsymbol \phi}\in {\mathbb R}^N$ with $\left(D F({\mathbf V}^*)\right)\,{ {\boldsymbol \phi}}^T = 0$,}
\end{equation}
${\mathcal W}$ assumes a strict local minimum on ${\mathbb V}(\overline{V})$ at ${\mathbf V}^*$.
Conditions \cref{condmin1,condmin2,condmin3}, are classical, see e.g.\,\cite{LuYe}. \Cref{condmin1} just states the requirement  that ${\mathbf V}^*$ be an equilibrium of  \cref{sys} in ${\mathbb V}(\overline{V})$.

If ${\mathbf V}^*=(V^*_1,\ldots,V^*_N) \in {\mathbb V}(\overline{V})$ is an equilibrium of \cref{sys}, ${\mathbf V}^*$ consists of $n$ entries for large droplets of volume $V_L>1$ (and height $h_L>1$) and $N-n$  entries for small droplets of corresponding volume $0<V_S\leq 1$ (and height $h_S$). Note that $V_S=1$ immediately requires $n=0$ and $\overline{V}=1$.
Let us define
\begin{equation}
	\delta_S \equiv \left\{\begin{matrix} P^\prime(V_S) & \text{if $0\leq n\leq N-1$,}\\[1ex]
						0 &{\text{if $n=N$}}\end{matrix}\right.\quad \text{and}\quad \delta_L \equiv \left\{\begin{matrix} P^\prime(V_L) & \text{if $1\leq n\leq N$,}\\[1ex]
						0 & \text{if $n=0$.}\end{matrix}\right.
\end{equation}
Then  the necessary condition \cref{condmin2} for a local minimizer can be cast in the form
\begin{equation}\label{conalt}
		\delta_L\,\left(\sum_{k=1}^n \phi_k^2\right)+\delta_S\,\left(\sum_{k=n+1}^N \phi_k^2\right)\geq 0\quad \text{for all $(\phi_1,\ldots,\phi_N)$ with $\sum_{j=1}^N \phi_j = 0$.}
\end{equation}
The sufficient condition \cref{condmin3} for a strict local minimizer reduces to
 \begin{equation}\label{conalt2}
		\delta_L\,\left(\sum_{k=1}^n \phi_k^2\right)+\delta_S\,\left(\sum_{k=n+1}^N \phi_k^2\right)> 0\quad \text{for all nonzero $(\phi_1,\ldots,\phi_N)$ with $\sum_{j=1}^N \phi_j = 0$.}
\end{equation}
Next we have  for $V=v(h)$,
\begin{equation}\label{Pder}
	P^\prime(V) ={{16\,(1-h^2)}\over {3\,(h^2+1)^3}}.
\end{equation}
Hence $\delta_L<0$ if $1\leq n\leq N$; $\delta_S> 0$ if $1\leq n\leq N-1$, or  if $n=0$ and $0<V<1$ ; and
\begin{equation}\label{pspl}
	\delta_S=-h_L^4\,\delta_L\quad \text{if $1\leq n\leq N-1$.}
\end{equation}
  Consequently, we have immediately from \cref{conalt} that
${\mathbf V}^*$ is unstable if there exist $i, j$, $1\leq i<j\leq N$ such that $V^*_i=V^*_j=V_L>1$. Hence we have shown:

\begin{theorem}\label{stabeq}
	An equilibrium of the system \cref{sys} in ${\mathbb V}(\overline{V})$ is unstable if it contains two or more  large droplets.
\end{theorem}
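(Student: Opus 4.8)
The plan is to use the variational characterization of stability from \cref{Astab}: an equilibrium $\mathbf{V}^*$ is stable if and only if it is a local minimizer of $\mathcal{W}$ on $\mathbb{V}(\overline{V})$. To prove instability, I will show that the necessary second-order condition \cref{conalt} for a local minimizer fails whenever $\mathbf{V}^*$ contains two or more large droplets. After permuting the droplet indices, assume the first $n\geq 2$ entries of $\mathbf{V}^*$ equal the common large volume $V_L>1$.

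The key step is to test the quadratic form in \cref{conalt} along a perturbation that redistributes volume only among the large droplets, leaving every small droplet untouched. Concretely, take $\phi_1=1$, $\phi_2=-1$, and $\phi_k=0$ for $3\leq k\leq N$. This direction is admissible because $\sum_{j=1}^N\phi_j=0$, so it respects the volume constraint. Substituting into the left-hand side of \cref{conalt}, the small-droplet sum $\sum_{k=n+1}^N\phi_k^2$ vanishes, while the large-droplet sum contributes $\phi_1^2+\phi_2^2=2$, yielding the value $2\,\delta_L$. By \cref{Pder} we have $\delta_L=P'(V_L)<0$ since $h_L>1$, so this value is strictly negative. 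The necessary condition \cref{conalt} is thereby violated, $\mathbf{V}^*$ cannot be a local minimizer of $\mathcal{W}$, and \cref{Astab} gives instability.

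I do not expect a serious obstacle here, since the groundwork has already been laid in recasting the Hessian condition as \cref{conalt} and in fixing the sign of $\delta_L$. The only conceptual point is the choice of test direction: one must perturb two large droplets against each other, so that the destabilizing directions (where $\delta_L<0$) are activated while none of the stabilizing ones (where $\delta_S>0$) contribute. This is precisely what having two or more large droplets makes possible, and it is exactly the mechanism flagged informally in the discussion preceding the statement.
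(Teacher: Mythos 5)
Your proposal is correct and follows essentially the same route as the paper: the paper derives the necessary condition \cref{conalt}, notes $\delta_L<0$, and concludes instability "immediately," the implicit test direction being exactly your $\phi_1=1$, $\phi_2=-1$, $\phi_k=0$ perturbation of two large droplets against each other. You have merely made explicit the admissible direction the paper leaves to the reader.
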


For
$\overline{V}>1$, any equilibrium ${\mathbf V}^*$ in ${\mathbb V}(\overline{V})$  must contain at least one entry $V^*_i>1$. Moreover, since $\mathcal W$ assumes a global minimum on ${{\mathbb V}(\overline{V})}$, the minimizer must be an equilibrium of the system \cref{sys} by \cref{condmin1}.
 Hence we conclude:

\begin{corollary}\label{V>1stab}
An  equilibrium of the system \cref{sys} in ${\mathbb V}(\overline{V})$ with $\overline{V}>1$ is stable if and only if it contains exactly one large droplet.
\end{corollary}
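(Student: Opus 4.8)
The plan is to prove the two implications separately. The forward implication is essentially immediate from \cref{stabeq}, while the reverse implication will rest on the coercivity of $\mathcal{W}$ furnished by \cref{propA} together with the uniqueness of the one-large-droplet equilibrium guaranteed by \cref{pVL1}.

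For necessity, I would first observe that when $\overline{V}>1$ no equilibrium in $\mathbb{V}(\overline{V})$ can consist solely of small droplets: if every entry satisfied $V^*_j\leq 1$, then averaging would give $\overline{V}\leq 1$, a contradiction. Hence any equilibrium contains at least one large droplet, and if it contained two or more, \cref{stabeq} would make it unstable. A stable equilibrium with $\overline{V}>1$ therefore contains exactly one large droplet.

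For sufficiency I would exploit the global structure of $\mathcal{W}$. By \cref{propA}, $\mathcal{W}$ is coercive on the closed affine set $\mathbb{V}(\overline{V})$, so it attains a global minimum at some point $\mathbf{V}^{\min}$; by the first-order condition \cref{condmin1} this minimizer is an equilibrium, and by \cref{VV+} it lies in $\mathbb{V}_+(\overline{V})$. The necessity argument above, combined with \cref{stabeq}, forces $\mathbf{V}^{\min}$ to carry exactly one large droplet. The decisive step is then uniqueness: applying \cref{pVL1} with $\alpha=1/N$ shows that $\mathcal{P}_{1/N}$ has a single zero $h_L>1$, so the large-droplet height, and therefore the entire configuration (one entry $v(h_L)$ and $N-1$ entries $v(h_L^{-1})$), is determined up to the choice of which index is large. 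Consequently every equilibrium with exactly one large droplet is a coordinate permutation of $\mathbf{V}^{\min}$. Since both $\mathcal{W}$ and the constraint defining $\mathbb{V}(\overline{V})$ are invariant under permutation of coordinates, each such permutation is again a global minimizer of $\mathcal{W}$, hence a local minimizer, and \cref{Astab} yields stability.

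The main obstacle will be this uniqueness step. The coercivity argument only delivers stability for \emph{one} global minimizer, whereas the statement asserts stability for \emph{every} one-large-droplet equilibrium; bridging this gap requires knowing that ``exactly one large droplet'' pins down the configuration completely, which is precisely the existence of a unique zero $h_L>1$ in \cref{pVL1}, and then invoking the permutation-invariance of $\mathcal{W}$ to pass from the single minimizer to its full permutation orbit. Without these two ingredients the transfer of stability from the global minimizer to an arbitrary one-large-droplet equilibrium would not go through.
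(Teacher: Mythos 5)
Your proposal is correct and follows essentially the same route as the paper: necessity from \cref{stabeq} plus the observation that $\overline{V}>1$ forces at least one large droplet, and sufficiency from the coercivity of $\mathcal W$ (\cref{propA}), the fact that the global minimizer is an equilibrium via \cref{condmin1}, and hence must be of the one-large-droplet type. The paper leaves implicit the uniqueness step via \cref{pVL1} and the permutation-invariance argument that transfers stability from the single global minimizer to every one-large-droplet equilibrium; you have correctly identified and supplied exactly the ingredients needed to close that gap.
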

\begin{snugshade}
\noindent {\em Applied mathematics aside}: 
\Cref{stabeq} is reminiscent of related results in \cite{We} for  two and three spher\-i\-cal-cap droplets and in \cite{SlSt} for $N$ coupled inviscid droplet oscillators with $S_N$ symmetry, while \cref{V>1stab} was proved in \cite{LeEA1} for $s=1$, exploiting the hyperbolicity of equilibria in the Newtonian regime.
\end{snugshade}

Let us now focus on the case $n=1$. Since for $(\phi_1,\dots,\phi_N)$ with  $\sum_{j=1}^N \phi_j = 0$
\begin{equation}
	\phi_1^2=\left(\sum_{j=2}^N \phi_j\right)^2\leq (N-1)\,\sum_{j=2}^N \phi_j^2,
\end{equation}
we can use \cref{conalt2,pspl} to obtain the sufficient condition for a strict local minimizer
\begin{equation}
	\delta_L\,\left(N-1-h_L^4\right)>0.
\end{equation}

\begin{theorem}\label{V<=1stab}
An  equilibrium of the system \cref{sys} in ${\mathbb V}(\overline{V})$ 
is stable if it contains exactly one large droplet of height $h>\left(N-1\right)^{1/4}$.
 \end{theorem}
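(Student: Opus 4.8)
The plan is to apply \cref{Astab}, which identifies stable equilibria with strict local minimizers of $\mathcal{W}$ on $\mathbb{V}(\overline{V})$, and then to verify the second-order sufficient condition \cref{conalt2} at the given equilibrium. Because the equilibrium has exactly one large droplet, we are in the case $n=1$, for which the discussion preceding the statement has already collapsed \cref{conalt2} into the single scalar inequality $\delta_L(N-1-h_L^4)>0$. That reduction uses the Cauchy--Schwarz bound $\phi_1^2\le(N-1)\sum_{j=2}^N\phi_j^2$ (valid because $\phi_1=-\sum_{j=2}^N\phi_j$ on the constraint set) together with the identity $\delta_S=-h_L^4\,\delta_L$ from \cref{pspl}. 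So all that remains is a sign analysis.

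First I would note that the lone large droplet has height $h_L>1$, whence \cref{Pder} gives $\delta_L=P'(V_L)<0$. Dividing $\delta_L(N-1-h_L^4)>0$ through by the negative quantity $\delta_L$ reverses the inequality, so the sufficient condition is equivalent to $N-1-h_L^4<0$, that is, to $h_L>(N-1)^{1/4}$. This is exactly the hypothesis; hence \cref{conalt2} holds, $\mathbf{V}^*$ is a strict local minimizer of $\mathcal{W}$, and \cref{Astab} then yields stability.

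I do not anticipate a genuine obstacle here, since the heavy lifting sits in the reduction already carried out before the statement; the one point I would treat carefully is the Cauchy--Schwarz constant. For sufficiency only the direction of the estimate is needed, and for any nonzero admissible $\phi$ one has $\sum_{j=2}^N\phi_j^2>0$ (otherwise the constraint $\sum_j\phi_j=0$ would force $\phi=\mathbf{0}$), so the strict inequality $N-1<h_L^4$ does make the constrained quadratic form strictly positive. I would also remark that $N-1$ is the sharp constant, attained when $\phi_2=\cdots=\phi_N$, which is why $h_L=(N-1)^{1/4}$ is the natural borderline and the strictness demanded in \cref{conalt2} is precisely what excludes the degenerate case $h_L^4=N-1$.
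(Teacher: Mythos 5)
Your argument is correct and is essentially the paper's own proof: the paper likewise reduces the sufficient condition \cref{conalt2} in the case $n=1$ to $\delta_L\left(N-1-h_L^4\right)>0$ via the bound $\phi_1^2\leq (N-1)\sum_{j=2}^N\phi_j^2$ and the identity \cref{pspl}, and then concludes from $\delta_L<0$ that the hypothesis $h_L>(N-1)^{1/4}$ suffices. Your extra remarks on why $\sum_{j=2}^N\phi_j^2>0$ for nonzero admissible $\phi$ and on the sharpness of the constant $N-1$ are sound and consistent with the borderline case treated separately in \cref{V<1instab}.
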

It follows from \cref{pVE1,pVS1}  with $\alpha={1\over N}$ that such equilibria arise for $\overline{V}=1$ if $N\geq 3$, and for $0<\overline{V}<1$ if $\overline{V}>L\left({1\over N}\right)$.  For $\overline{V}>1$ this result is already contained in \cref{V>1stab} by \cref{pVL1}.

Next, in the case $n=1$, let us choose $\phi_1=1$ and $\phi_j = -(N-1)^{-1}$, $2\leq j\leq N$. Then \cref{conalt}  reduces to the following  necessary condition for a local minimizer
\begin{equation}
	\delta_L\,\left(N-1-h_L^4\right)\geq 0.
\end{equation}
Hence an equilibrium is unstable  if it contains a large droplet of height less than $\left(N-1\right)^{1/4}$. \Cref{pVS1} shows that such equilibria with $n=1$ occur for $0<\overline{V}<1$ if $\overline{V}>L\left({1\over N}\right)$. There the height of the  large droplet for this unstable equilibrium  was denoted by $h_l(\alpha)$ with $\alpha={1\over N}$.

 The borderline case $n=1$ and $h_L = \left(N-1\right)^{1/4}$ with $0<V<1$ is left to be discussed. By \cref{pVS1}, this situation arises when $\overline{V}=L\left({1\over N}\right)$ with $N\geq 3$ .
To obtain a stability result in this case, let us argue directly. First
\begin{equation}\label{P2D}
	P^{\prime\prime}(V) = {{256}\over 9}\,{{h\,(h^2-2)}\over {(h^2+1)^5}}\quad\text{where $V=v(h)$.}
\end{equation}
Next let ${\boldsymbol \phi}={\boldsymbol \phi}(t)=\left(\phi_1(t),\ldots,\phi_N(t)\right)$ be a smooth curve. Then
\begin{align}
	&{{d^2}\over {dt^2}} {\mathcal W}\left({{\boldsymbol \phi}} \right)={1\over N}\,\sum_{k=1}^N \left(P^{\prime}(\phi_k)\,\left(\phi_k^{\prime}\right)^2+P(\phi_k)\,\phi_k^{\prime\prime}\right),\quad \text{and}\label{A2D}\\
&
	{{d^3}\over {dt^3}}{\mathcal W}\left({{\boldsymbol \phi}} \right)={1\over N}\, \sum_{k=1}^N \left(P^{\prime\prime}(\phi_k)\,\left(\phi_k^{\prime}\right)^3+3\,P^\prime(\phi_k)\,\phi_k^\prime\,\phi_k^{\prime\prime}+P(\phi_k)\,\phi_k^{\prime\prime\prime}\right).\label{A3D}
\end{align}
Let $V_L=v(h_L)$ and $V_S = v\left(h_L^{-1}\right)$ with   $h_L = \left(N-1\right)^{1/4}$. For fixed $\phi_0\in \mathbb R$, we set
\begin{equation}
	 \phi_k(t) \equiv \left\{\begin{matrix} V_L+(N-1)\,\phi_0\,t & \text{if $k=1$,}\\
						V_S-\phi_0\,t & \text{otherwise.}\end{matrix}\right.
\end{equation}
Then ${{\boldsymbol \phi}}(t) = \left(\phi_1(t),\ldots,\phi_N(t)\right)$ is a smooth curve with trace in ${\mathbb V}({\overline{V}})$. Moreover,
\begin{align}
	&{{d^2}\over {dt^2}} {\mathcal W}\left({\mathbf v}(t) \right)\Big|_{t=0} = 0,\quad \text{and}\\
	&{{d^3}\over {dt^3}}{\mathcal W}\left({\mathbf v}(t) \right)\Big|_{t=0} =\left(P^{\prime\prime}(V_L)\,(N-1)^2-P^{\prime\prime}(V_S)\right)\,{{N-1}\over N}\,\phi_0^3\\
				&\hspace*{2.35cm}={{256}\over 9}\,{{\left((N-1)^{1/2}-1\right)\,(N-1)^{7/4}}\over {\left((N-1)^{1/2}+1\right)^4}}\,{{N-1}\over N}\,\phi_0^3.
\end{align}
Here we have made use of \cref{P2D}. Since ${\mathbf V}^*={{\boldsymbol \phi}}(0)$ is an equilibrium of the desired form with  $\phi_0$  arbitrary,  ${\mathcal W}$ does not assume a local minimum at ${\mathbf V}^*$. Hence we have found:

\begin{theorem}\label{V<1instab}
An  equilibrium of the system \cref{sys} in ${\mathbb V}(\overline{V})$
is unstable if it contains a large droplet of height $1<h\leq \left(N-1\right)^{1/4}$.
\end{theorem}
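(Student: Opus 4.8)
The plan is to deduce instability entirely from the minimization characterization in \cref{Astab}, together with the first- and second-order conditions already assembled above. The key preliminary observation is that, by \cref{stat}, all droplets at an equilibrium share one common capillary pressure, and since $P(h)=\lambda$ has a unique root with $h>1$, \emph{every} large droplet has the same height $h_L>1$. Consequently the hypothesis that the equilibrium contains a large droplet of height $1<h\le(N-1)^{1/4}$ forces all of its $n\ge 1$ large droplets to share the common height $h_L=h\in\bigl(1,(N-1)^{1/4}\bigr]$. I would then argue by cases on $n$.

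If $n\ge 2$, then the equilibrium contains two or more large droplets and is unstable by \cref{stabeq}; nothing further is needed. So I may assume $n=1$. Choosing the admissible perturbation $\phi_1=1$ and $\phi_j=-(N-1)^{-1}$ for $2\le j\le N$ (which satisfies $\sum_j\phi_j=0$), the necessary condition \cref{conalt} for a local minimizer of $\mathcal W$ collapses, via the relation \cref{pspl} between $\delta_S$ and $\delta_L$, to $\delta_L\,(N-1-h_L^4)\ge 0$. From \cref{Pder} we have $\delta_L=P'(V_L)<0$ since $h_L>1$, so this necessary condition is violated precisely when $N-1-h_L^4>0$, i.e.\ when $h_L<(N-1)^{1/4}$. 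In that subcase ${\mathbf V}^*$ is not a local minimizer of $\mathcal W$ and hence, by \cref{Astab}, unstable.

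The only remaining subcase is the borderline $n=1$, $h_L=(N-1)^{1/4}$, where $\delta_L\,(N-1-h_L^4)=0$ and the constrained Hessian is merely positive \emph{semi}definite; this is exactly the degenerate, non-hyperbolic situation where the second-order test is inconclusive and linearization is unavailable. I would resolve it by the direct higher-order computation performed just before the statement: testing $\mathcal W$ along the curve with trace in ${\mathbb V}(\overline{V})$ given by $\phi_1(t)=V_L+(N-1)\,\phi_0\,t$ and $\phi_k(t)=V_S-\phi_0\,t$ for $k\ge 2$, one finds that the second derivative of $\mathcal W$ vanishes at $t=0$ while the third derivative is a nonzero multiple of $\phi_0^3$ (using the formula \cref{P2D} for $P''$). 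Since $\phi_0$ may be chosen of either sign, $\mathcal W$ strictly decreases along this curve on one side, so ${\mathbf V}^*$ fails to be a local minimizer and is unstable by \cref{Astab}. I expect this borderline case to be the sole genuine obstacle; the two generic subcases follow immediately from \cref{stabeq} and from the failure of the second-order necessary condition.
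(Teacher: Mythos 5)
Your proposal is correct and follows essentially the same route as the paper: instability for $n\ge 2$ via \cref{stabeq}, violation of the second-order necessary condition \cref{conalt} with the test vector $\phi_1=1$, $\phi_j=-(N-1)^{-1}$ when $1<h_L<(N-1)^{1/4}$, and the third-derivative computation along the explicit curve in ${\mathbb V}(\overline{V})$ for the borderline case $h_L=(N-1)^{1/4}$. The only addition you make is to spell out explicitly (via \cref{stat} and the two-root structure of $p(h)=\lambda$) that all large droplets share a common height, which the paper leaves implicit in its setup.
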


Finally, let us turn to the case $n=0$. The corresponding  equilibria are uniform  of height $0<h_S\leq 1$. Hence we have $0<\overline{V}\leq 1$.  \Cref{conalt2} immediately gives the sufficient condition for a strict local minimizer in the form
\begin{equation}
	\delta_S>0.
\end{equation}
Clearly,  this condition holds true for the uniform equilibrium when $0<\overline{V}<1$. This condition  fails, however, for $\overline{V}=1$ since then $\delta_S=0$.
To  obtain a stability result for $n=0$, $\overline{V}=1$, we proceed differently. First, if $N=2$, ${\mathbf V}^* = \left(1,1\right)$ is the only possible equilibrium of \cref{sys} by \cref{unieq,pVE1}. Hence  the only choice for a global minimizer of ${\mathcal W}$ on  ${\mathbb V}(\overline{V})$     is ${\mathbf V}^*$ which must be stable.

Next  assume $N\geq 3$. Let  ${\mathbf V}^*=(1,\ldots,1)$ and consider ${{\boldsymbol \phi}}={{\boldsymbol \phi}}(t)\equiv {\mathbf V}^*+t\,\left(\phi_1,\ldots,\phi_N\right)$ where $\phi_1$, $\phi_2\in \mathbb R$ are fixed, $\phi_3 = -\left(\phi_1+\phi_2\right)$ and $\phi_k=0$ for $3<k\leq N$.  Since $\sum_{k=1}^N \phi_k=0$,  ${{\boldsymbol \phi}}(t)$ is a curve with trace in ${\mathbb V}(1)$.
From \cref{A2D,A3D} we obtain
\begin{align}
	&{{d^2}\over {dt^2}} {\mathcal W}\left({{\boldsymbol \phi}}(t) \right)\Big|_{t=0} = 0,\quad\text{and}\\
	&  {{d^3}\over {dt^3}} {\mathcal W}\left({{\boldsymbol \phi}}(t) \right)\Big|_{t=0} = {1\over N}\,P^{\prime\prime}(1)\,\sum_{k=1}^N \phi_k^3 = -{3\over N}\,P^{\prime\prime}(1)\,\phi_1\,\phi_2\,\left(\phi_1+\phi_2\right).\label{indef}
\end{align}
Since $P^{\prime\prime}(1)\not=0$ by \cref{P2D} and since the right-hand side of Equation \cref{indef} can take positive and negative values for appropriate choices of $\phi_1$ and $\phi_2$, ${\mathbf V}^*$ is not a local minimizer of ${\mathcal W}$ on ${\mathbb V}(1)$. For $\overline{V}>1$ the uniform equilibrium in ${\mathbb V}(\overline{V})$ is always unstable by \cref{stabeq}. Hence we can summarize:

\begin{theorem}\label{uniV=1}
	The uniform equilibrium  of the system \cref{sys} in ${\mathbb V}({\overline{V}})$ is stable if and only if either $0<\overline{V}<1$ and $N\geq 3$, or $0<\overline{V}\leq 1$ and $N=2$.
\end{theorem}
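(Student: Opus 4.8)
The plan is to use \cref{Astab} throughout, converting the stability question into whether the uniform equilibrium $\mathbf{V}^*=(\overline{V},\ldots,\overline{V})$, with common height $h_*=v^{-1}(\overline{V})$, is a local minimizer of $\mathcal{W}$ on ${\mathbb V}(\overline{V})$, and then to dispose of the regimes $\overline{V}>1$, $0<\overline{V}<1$, and $\overline{V}=1$ in turn. The regime $\overline{V}>1$ is immediate: here the uniform state consists of $n=N\geq 2$ large droplets ($h_*>1$), so \cref{stabeq} already declares it unstable. This handles the easy half of the ``only if'' direction and reduces everything else to $0<\overline{V}\leq 1$, where $n=0$.

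For $0<\overline{V}<1$ the common height satisfies $h_*<1$, and the strict second-order sufficient condition \cref{conalt2}, which for $n=0$ collapses to the single inequality $\delta_S=P'(\overline{V})>0$, follows by sign inspection: \cref{Pder} gives $P'(\overline{V})\propto(1-h_*^2)>0$ whenever $h_*<1$. Hence $\mathbf{V}^*$ is a strict local minimizer and therefore stable. I would emphasize that this argument is completely insensitive to $N$, so it simultaneously covers $N=2$ and $N\geq 3$ in this range.

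The genuinely delicate regime, and the step I expect to be the main obstacle, is the borderline $\overline{V}=1$, where $h_*=1$ and $\delta_S=P'(1)=0$: the equilibrium is non-hyperbolic and the second-order test is inconclusive, so stability now depends on $N$. For $N=2$ I would argue by uniqueness together with coercivity: \cref{unieq,pVE1} leave $(1,1)$ as the only equilibrium on ${\mathbb V}(1)$, and since $\mathcal{W}$ is coercive on the closed set ${\mathbb V}(1)$ by \cref{propA} it attains a global minimum there; by the first-order condition \cref{condmin1} any such minimizer is an equilibrium, forcing it to equal $(1,1)$, which is therefore a (global, hence local) minimizer and thus stable. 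For $N\geq 3$ the vanishing Hessian must be overcome by passing to third order along a straight-line curve $\boldsymbol{\phi}(t)=\mathbf{V}^*+t(\phi_1,\ldots,\phi_N)$ with $\sum_k\phi_k=0$ (so its trace lies in ${\mathbb V}(1)$). Using \cref{A2D,A3D} with $P'(1)=0$, the second $t$-derivative of $\mathcal{W}$ vanishes at $t=0$ while the third reduces to a multiple of $P''(1)\sum_k\phi_k^3$; taking $\phi_3=-(\phi_1+\phi_2)$ and the remaining components zero turns this into the cubic form $-\tfrac{3}{N}P''(1)\,\phi_1\phi_2(\phi_1+\phi_2)$ of \cref{indef}. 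Since $P''(1)\neq 0$ by \cref{P2D} and this form takes both signs, $\mathcal{W}$ has no local minimum at $\mathbf{V}^*$, so the uniform equilibrium is unstable for $N\geq 3$ at $\overline{V}=1$.

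Collecting the cases yields precisely the asserted dichotomy, after absorbing the overlapping ranges into the clauses $0<\overline{V}\leq 1$ with $N=2$ and $0<\overline{V}<1$ with $N\geq 3$. The only conceptual hurdle is the degeneracy at $\overline{V}=1$, $N\geq 3$, where the sign-changing cubic is the crux; everywhere else a sign check on $P'$ or an appeal to \cref{stabeq} suffices.
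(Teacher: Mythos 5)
Your proposal is correct and follows essentially the same route as the paper: the sign check $\delta_S=P'(\overline{V})>0$ via \cref{conalt2} for $0<\overline{V}<1$, the uniqueness-plus-coercivity argument for $N=2$, $\overline{V}=1$, the third-order cubic form $-\tfrac{3}{N}P''(1)\,\phi_1\phi_2(\phi_1+\phi_2)$ along the same test curve for $N\geq 3$, $\overline{V}=1$, and \cref{stabeq} for $\overline{V}>1$. No gaps.
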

An immediate consequence of this result is that for $N=2$ and any $s>0$, the initial value problem \cref{sys,ini} with initial data $V_1(0) = 1=V_2(0)$ has the unique solution $V_1(t) = 1=V_2(t)$, $t\geq 0$. Similarly, if $0<\overline{V}<1$, $N\geq 2$ and $s>0$, the uniform equilibrium is the unique solution of \cref{sys,ini} with initial data $V_i(0) = \overline{V}$, $1\leq i\leq N$. These findings for the case $N=2$ can be confirmed independently by direct arguments.

We have now determined the stability of all equilibria in ${\mathbb V}(\overline{V})$ and, by \cref{VV+},  in ${\mathbb V}_+(\overline{V})$ for every $\overline{V}>0$. The stability results are new for the case $s\not=1$. They are also new for the case $0<\overline{V}\leq 1$ when $s=1$.

\section{Rest State Orderings: Energy Hierarchies of Equilibria}

\label{sec_hier}

In the following  our objective is to determine an ordering of equilibria according to the size  of the energy functional $\mathcal W$. 

\begin{snugshade}
\noindent {\em Quantum-mechanical aside}: The equilibria in volume scavenging take on discrete values of the energy functional $\mathcal W$ (or total surface area $A$). We expect them to be ordered from lowest energy  (or lowest total surface area) to largest energy  (or largest total surface area) in analogy with the discrete energy levels of electrons in an atom. Hence unstable equilibria in volume scavenging  are equivalent  to excited states in a quantum-mechanical system, while a stable equilibrium minimizing the energy functional  corresponds to the ground state.
\end{snugshade}

\begin{snugshade}
\noindent {\em Socio-economic view}: The rest states of the competition  are naturally ordered in hierarchies of 
more versus less costly (or energetic) outcomes. An outcome is more costly  (energetic) if its total cost (energy) ${\mathcal W}$ is larger. 
\end{snugshade}

Let us first motivate the problem by pursuing an approach based on the variable $\theta$, introduced in \cref{deftheta}.

If $\mathbf V$ is an equilibrium of \cref{sys} consisting of $n$ droplets of height $h>0$ and $N-n$ droplets of height $h^{-1}$, $1\leq n\leq \lfloor {N\over 2}\rfloor$, then by \cref{defW,intP} with $\alpha={n\over N}$, we obtain
$	{\mathcal W}({\mathbf V}) = {3\over 2}\,\left(\alpha\,h^2+(1-\alpha)\,{1\over {h^2}}\right)$.
Using again $a=h+{1\over h}$ and $b=h-{1\over h}$, we find
$	{\mathcal W}({\mathbf V})= {3\over 2}+{3\over 4}\,b^2+{3\over 2}\,\left(\alpha-{1\over 2}\right)\,a\,b$.
Consequently, by \cref{defab}, we have a function in $\theta$
\begin{equation}\label{weq}
	 w(\theta) \equiv {3\over 2}+{3\over 4}\,{\mathcal B}(\theta)^2+{3\over 2}\,\left(\alpha-{1\over 2}\right)\,{\mathcal A}(\theta)\,{\mathcal B}(\theta)
\end{equation}
such that ${\mathcal W}({\mathbf V})=w(\theta)$ with $h={1\over 2}\left({\mathcal A}(\theta)+{\mathcal B}(\theta)\right)$. 
This {\em reduced} energy functional is displayed in \cref{wred}.
\begin{figure}[tbhp]\vspace*{-0.7cm}
\begin{center}
    \includegraphics[width=0.45\textwidth]{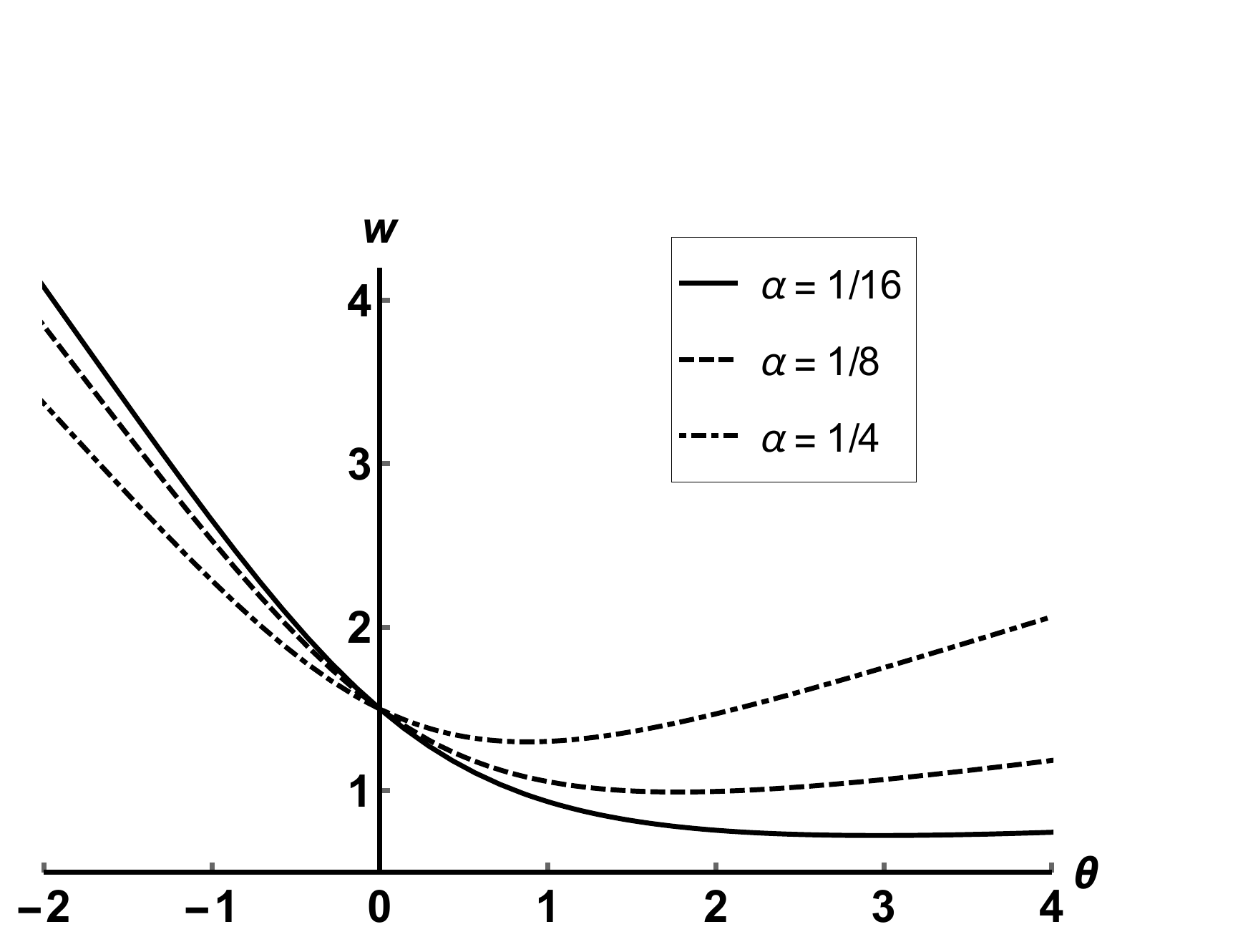}
\caption{The reduced energy functional $w$ for various values of $\alpha$}
   \label{wred}
\end{center}
\end{figure}
Let us consider an example. We choose $\overline{V}=1.2$ and determine $\theta$ and $w(\theta)$ from Equations \cref{Aeq,weq} for $\alpha\in \left\{{1\over {16}},{1\over 8},{1\over 4}\right\}$. Noting that Equation \cref{Aeq} has both a positive $\left(\oplus\right)$ and negative $\left(\ominus\right)$ solution for $\theta$, we display the results in the following tables:\\

\begin{center}
\begin{minipage}{0.45\textwidth}
\begin{tabular}{c|c|c|c}
    $\oplus$ & $\alpha={1\over {16}}$ & $\alpha={1\over {8}}$ &  $\alpha={1\over {4}}$\\[1ex] \hline
  $\theta$ & 15.936 & 7.374 & 3.199\\[1ex] \hline
   $w(\theta)$ & 1.427 & 1.646& 1.814
 \end{tabular}
 \end{minipage}
 \begin{minipage}{0.45\textwidth}
\begin{tabular}{c|c|c|c}
  $\ominus$ & $\alpha={1\over {16}}$ & $\alpha={1\over {8}}$ &  $\alpha={1\over {4}}$\\[1ex] \hline
  $\theta$ & -0.398 & -0.447 & -0.582\\[1ex] \hline
   $w(\theta)$ & 1.899 & 1.898& 1.897
 \end{tabular}
 \end{minipage}
\end{center}
\medskip
For $\theta>0$ we have  exactly $n=\alpha\,N$ large droplets, while for $\theta<0$ there are exactly $n=(1-\alpha)\,N$ large droplets. Hence the example above suggests (choosing $N=16$ for instance) that for the same average volume $\overline{V}$, the larger the energy functional $\mathcal W$ (or $w$), the more large droplets are contained in an equilibrium. 

\begin{snugshade}
\noindent {\em Physics aside}: While this claim appears intuitive and obvious, it is neither:  For simplicity take $\overline{V}>1$ and  compare two different equilibria. One have $n_1$ large droplets, the other one $n_2$. If $n_1<n_2$, each of the $n_1$ large droplets in the first equilibrium  is expected to have  larger volume (hence larger surface area) than each of the $n_2$ large droplets  in the second equilibrium. Yet there are only $n_1$ large droplets in the first equilibrium, while there are $n_2>n_1$ in the second. A similar picture arises for the small droplets. As the table with negative ($\ominus$) solutions for $\theta$ illustrates, the actual differences in the energy functional (or surface area)  can be tiny.
\end{snugshade}

We will prove below that the claim above is indeed correct if $\overline{V}>1$. For $0<\overline{V}<1$, the situation is, however, more subtle since for constant $\alpha$, Equation \cref{Aeq} generally permits two distinct positive solutions for $\theta$ and no negative solution as seen in \cref{ecurves}.

To obtain more precise information about the ordering of equilibria in terms of  $\mathcal W$, we put the approach above aside and instead examine the zeros of the mass polynomial ${\mathcal P}_\alpha$, given by \cref{defpa}, in more detail.  For $0<\overline{V}\leq 1$, we define $\alpha^*\left(\overline{V}\right)$ to be the unique value of $\alpha\in \left(0,{1\over 2}\right]$ such that $\overline{V}=L(\alpha)$.
Now consider the map $\alpha\mapsto h_L(\alpha)$, defined on $(0,1)$  for  $\overline{V}>1$ by \cref{pVL1} and on $\left(0,\alpha^*(\overline{V})\right)$ for  $0<\overline{V}\leq 1$  by \cref{pVE1,pVS1} with $h_L(\alpha)>\left({1\over \alpha}-1\right)^{1/4}$. In each case, $h_L(\alpha)$ is a simple root of the equation ${\mathcal P}_\alpha(h) = 0$. Hence it is elementary to conclude that $\alpha\mapsto h_L(\alpha)$ is a smooth map and ${d\over {d\alpha}} h_L(\alpha)\not= 0$. Since $h_L(\alpha)>\left({1\over \alpha}-1\right)^{1/4}$, we have $h_L(\alpha)\rightarrow\infty$ as $\alpha\rightarrow 0^+$. Consequently, $ {d\over {d\alpha}} h_L(\alpha)<0$. Since $h_L(\alpha)$ is also bounded from below as  $\alpha$ approaches the right endpoint of its interval of definition, we obtain:

\begin{proposition}\label{h_Lprop}
	Let\, $\alpha_0 = 1$ if $\overline{V}>1$, and  $\alpha_0= \alpha^*(\overline{V})$ if\, $0<\overline{V}\leq 1$. Then the map $\alpha\mapsto
h_L(\alpha)$ is smooth and strictly decreasing on $\left(0,\alpha_0\right)$. Moreover, it extends continuously to $\left(0,\alpha_0\right]$ such that
\begin{enumerate}
	\item[\mbox{(a)}]  in case\, $\overline{V}>1$, $h_L(\alpha_0)>1$ is the unique positive root  of  $h^3+3\,h-4\,\overline{V} = 0$,
	\item[\mbox{(b)}]  in case\, $\overline{V}=1$, $h_L(\alpha_0) = 1$, and
	\item[\mbox{(c)}]  in case\, $0<\overline{V}<1$, $h_L(\alpha_0) = \left({1\over \alpha_0} -1\right)^{1/4}>1$.
\end{enumerate}
\end{proposition}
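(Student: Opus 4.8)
The plan is to treat $h_L(\alpha)$ as an implicitly defined root of the polynomial equation ${\mathcal P}_\alpha(h)=0$ and to extract smoothness, strict monotonicity, and the endpoint value from three ingredients: the implicit function theorem on the open interval, one explicit sign computation for the derivative, and a monotone-limit argument at the right endpoint $\alpha_0$. In each of the three cases the relevant root $h_L(\alpha)$ is the strictly largest positive zero of ${\mathcal P}_\alpha$, and by \cref{poszero} there are at most two positive zeros while ${\mathcal P}_\alpha(h)\to\infty$ as $h\to\infty$; hence this rightmost zero is simple, so $\partial_h{\mathcal P}_\alpha(h_L)>0$.

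First I would establish smoothness and the sign of the derivative together. Since ${\mathcal P}_\alpha(h)$ is jointly polynomial in $(\alpha,h)$ and $\partial_h{\mathcal P}_\alpha(h_L)\neq 0$, the implicit function theorem gives that $\alpha\mapsto h_L(\alpha)$ is smooth on $(0,\alpha_0)$ with
\[
	{d\over d\alpha}\,h_L(\alpha)=-{\partial_\alpha{\mathcal P}_\alpha(h_L)\over \partial_h{\mathcal P}_\alpha(h_L)}.
\]
The numerator I would compute directly from \cref{defpa}:
\[
	\partial_\alpha{\mathcal P}_\alpha(h)=h^6+3\,h^4-3\,h^2-1=(h^2-1)\left(h^4+4\,h^2+1\right),
\]
which is strictly positive for $h>1$. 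As $h_L>1$, both numerator and denominator are positive, so ${d\over d\alpha}\,h_L(\alpha)<0$ and the map is strictly decreasing; this is of course consistent with the fact that $h_L(\alpha)>\left({1\over\alpha}-1\right)^{1/4}\to\infty$ as $\alpha\to 0^+$.

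For the continuous extension I would argue uniformly by a monotone-limit argument rather than the implicit function theorem, since at $\alpha_0$ the root generally degenerates. Being strictly decreasing and bounded below by $1$, the map $h_L$ has a one-sided limit $\ell\geq 1$ as $\alpha\to\alpha_0^-$; passing to the limit in ${\mathcal P}_\alpha(h_L(\alpha))=0$ and using joint continuity yields ${\mathcal P}_{\alpha_0}(\ell)=0$, after which I identify $\ell$. In case (a), $\alpha_0=1$ and ${\mathcal P}_1(h)=h^3\left(h^3+3\,h-4\,\overline{V}\right)$; since ${\mathcal P}_1(1)=4(1-\overline{V})<0$ we have $\ell\neq 1$, so $\ell$ is the unique positive root of $h^3+3\,h-4\,\overline{V}=0$, which exceeds $1$. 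In case (b), ${\mathcal P}_\alpha(1)=4(1-\overline{V})=0$ for all $\alpha$, and \cref{pVE1}(b) forbids any zero larger than $1$ at $\alpha_0={1\over 2}$, forcing $\ell\leq 1$ and hence $\ell=1$. In case (c), $\overline{V}=L(\alpha_0)$ by the definition of $\alpha^*(\overline{V})$, so the coincidence of roots in \cref{pVS1} shows the only positive zero of ${\mathcal P}_{\alpha_0}$ is the double root $\left({1\over\alpha_0}-1\right)^{1/4}>1$, whence $\ell=\left({1\over\alpha_0}-1\right)^{1/4}$.

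The main obstacle is the endpoint behaviour in cases (b) and (c): there $h_L$ merges with the second positive zero of ${\mathcal P}_\alpha$ into a double root at $\alpha_0$, so the implicit function theorem is genuinely unavailable and smoothness fails. The monotone-limit argument circumvents this by using only monotonicity together with a lower bound to guarantee the limit, then reducing the identification to evaluating the limiting polynomial ${\mathcal P}_{\alpha_0}$, whose admissible roots are already pinned down by \cref{pVL1,pVE1,pVS1}. (In case (a) the root $h_L$ remains simple at $\alpha_0=1$, so no such difficulty arises there.)
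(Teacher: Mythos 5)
Your argument is correct and follows essentially the same route as the paper: treat $h_L(\alpha)$ as a simple root of ${\mathcal P}_\alpha(h)=0$ and apply the implicit function theorem for smoothness on the open interval, use strict monotonicity together with the lower bound $h_L>1$ to get the continuous extension at $\alpha_0$, and identify the endpoint value from the positive zeros of ${\mathcal P}_{\alpha_0}$. The only (minor) difference is that you obtain ${d\over {d\alpha}}h_L<0$ by the explicit computation $\partial_\alpha{\mathcal P}_\alpha(h)=\left(h^2-1\right)\left(h^4+4\,h^2+1\right)>0$ for $h>1$, whereas the paper argues that the nowhere-vanishing derivative must be negative because $h_L(\alpha)\rightarrow\infty$ as $\alpha\rightarrow 0^+$; both are valid, and your version also spells out the endpoint identifications in cases (a)--(c) that the paper leaves implicit.
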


Next let us investigate the map  $\alpha\mapsto h_l(\alpha)$, defined  on $\left(0,\alpha^*(\overline{V})\right)$ for  $0<\overline{V}<1$  by \cref{pVS1} with $1<h_l(\alpha)<\left({1\over \alpha}-1\right)^{1/4}$. Again, $h_l(\alpha)$ is a simple root of the equation ${\mathcal P}_\alpha(h)=0$, and thus $\alpha\mapsto h_l(\alpha)$ is a smooth map and ${d\over {d\alpha}} h_l(\alpha)\not= 0$. Now let $h_l(0)$ be the unique positive root of the equation ${\mathcal P}_0(h) = -4\,\overline{V}\,h^3+3\,h^2+1=0$. It follows readily that this definition extends $h_l$ to a smooth function on $\left[0,\alpha^*(\overline{V})\right)$. Since
$
	  {\mathcal P}_0(1) = -4\,\overline{V}+4>0$, $
	  {\mathcal P}_0\left({1\over {2\,\overline{V}}}\right) = {{1\over {4\,\overline{V}^2}}}+1$, and
$ \lim_{h\rightarrow \infty} {\mathcal P}_0(h) = -\infty$,
we obtain
\begin{equation}
	h_l(0)>\max\left\{1, {1\over {2\,\overline{V}}}\right\}.
\end{equation}
This estimate together with the equation ${d\over {d\alpha}} \left({\mathcal P}_\alpha(h_l(\alpha))\right)=0$ implies that
\begin{equation}
	{d\over {d\alpha}} h_l (0) = {{h_l(0)^6+3\,h_l(0)^4-3\,h_l(0)^2-1}\over {6\,h_l(0)\,\left(2\,\overline{V}\,h_l(0)-1\right)}}>0.
\end{equation}
Hence by continuity,  ${d\over {d\alpha}} h_l(\alpha)>0$ for all $\alpha$.
Since $h_l$ is increasing and $h_l(\alpha)<\left({1\over \alpha}-1\right)^{1/4}$ on $\left(0,\alpha^*(\overline{V})\right)$, the limit  $\lim_{\alpha\rightarrow \alpha^*(\overline{V})^-} h_l(\alpha)$ exists. In fact, we obtain:

\begin{proposition}\label{h_lprop}
	Let\, $0<\overline{V}<1$. Then the map $\alpha\mapsto
h_l(\alpha)$ is smooth and strictly increasing on $\left(0,\alpha^*(\overline{V})\right)$. It extends continuously to $\left[0,\alpha^*(\overline{V})\right]$ such that
$h_l(0)>1$ is the unique positive root of $-4\,\overline{V}\,h^3+3\,h^2+1=0$ and
$h_l\left(\alpha^*(\overline{V})\right)=\left({1\over {\alpha^*(\overline{V})}}-1\right)^{1/4}$.
\end{proposition}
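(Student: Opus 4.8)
The plan is to collect the facts already derived for $h_l$ in the paragraph preceding the statement and to pin down its behaviour at the two endpoints of its interval of definition, working throughout with the mass polynomial $\mathcal{P}_\alpha$ of \cref{defpa}. On $\left(0,\alpha^*(\overline{V})\right)$ the value $h_l(\alpha)$ is, by part (a) of \cref{pVS1}, the smaller of the two positive zeros of $\mathcal{P}_\alpha$ and satisfies $1<h_l(\alpha)<\left(\frac1\alpha-1\right)^{1/4}$; since this zero is simple, the implicit function theorem makes $\alpha\mapsto h_l(\alpha)$ smooth with
\[
    h_l'(\alpha) = -\frac{\partial_\alpha \mathcal{P}_\alpha(h_l(\alpha))}{\partial_h \mathcal{P}_\alpha(h_l(\alpha))}.
\]

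For the strict monotonicity I would fix the sign of $h_l'$ directly. A short computation gives $\partial_\alpha \mathcal{P}_\alpha(h) = (h^2-1)(h^4+4h^2+1)$, which is strictly positive at $h_l>1$. Moreover $\mathcal{P}_\alpha(1)=4(1-\overline{V})>0$ while $\mathcal{P}_\alpha$ is nonpositive at the critical abscissa $\left(\frac1\alpha-1\right)^{1/4}$ by \cref{propp}, so $\mathcal{P}_\alpha$ decreases through its first zero $h_l$, forcing $\partial_h \mathcal{P}_\alpha(h_l)<0$. Hence $h_l'(\alpha)>0$ on the whole interval (equivalently, the explicit value $h_l'(0)>0$ recorded above, together with $h_l'\neq 0$ and continuity, gives the same conclusion), and $h_l$ is strictly increasing.

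For the left endpoint I would set $\alpha=0$ in \cref{defpa} to get the cubic $\mathcal{P}_0(h)=-4\overline{V}h^3+3h^2+1$, which has $\mathcal{P}_0(0)=1>0$ and $\mathcal{P}_0(h)\to-\infty$, hence a single positive root; the sign evaluations at $h=1$ and $h=\frac1{2\overline{V}}$ place this root above $\max\{1,\frac1{2\overline{V}}\}$, so in particular $h_l(0)>1$. As this root is again simple, the implicit-function argument extends $h_l$ smoothly to $\alpha=0$ with the stated value.

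The step I expect to be the real obstacle is the right endpoint $\alpha=\alpha^*(\overline{V})$, where $\overline{V}=L(\alpha^*)$ and the implicit function theorem breaks down: by part (a) of \cref{pVS1} and the factorization appearing in the proof of \cref{poszero}, the two roots $h_l$ and $h_L$ merge there into the double root $\left(\frac1{\alpha^*}-1\right)^{1/4}$, so $\partial_h \mathcal{P}_{\alpha^*}$ vanishes and the derivative formula degenerates. I would instead use monotonicity together with the bound $h_l(\alpha)\le\left(\frac1\alpha-1\right)^{1/4}$: the increasing, bounded function $h_l$ has a limit as $\alpha\to\alpha^*(\overline{V})^-$, this limit is a positive zero of $\mathcal{P}_{\alpha^*}$ by continuity of the coefficients, and when $\overline{V}=L(\alpha^*)$ the only positive zero of $\mathcal{P}_{\alpha^*}$ is the double root $\left(\frac1{\alpha^*}-1\right)^{1/4}$. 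Therefore the limit equals $\left(\frac1{\alpha^*}-1\right)^{1/4}$, which furnishes the continuous extension to $\left[0,\alpha^*(\overline{V})\right]$ and completes the proof.
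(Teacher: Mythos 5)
Your proposal is correct and follows essentially the same route as the paper: smoothness via simplicity of the root, the sign evaluations of $\mathcal{P}_0$ at $1$ and $1/(2\overline{V})$ for the left endpoint, and monotonicity plus the bound $h_l(\alpha)<\left(\frac{1}{\alpha}-1\right)^{1/4}$ to identify the limit at $\alpha^*(\overline{V})$ with the double root. The only cosmetic difference is that you fix the sign of $h_l'$ on the whole interval directly from $\partial_\alpha\mathcal{P}_\alpha(h_l)>0$ and $\partial_h\mathcal{P}_\alpha(h_l)<0$, whereas the paper computes $h_l'(0)>0$ explicitly and propagates the sign by continuity using $h_l'\neq 0$; both are valid and you note the equivalence yourself.
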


Now we define the function $\kappa=\kappa(h)$ for $h>1$ by
\begin{equation}\label{defkappa}
	\kappa(h) \equiv{{4\,\overline{V}\,h^3-3\,h^2-1}\over {h^6+3\,h^4-3\,h^2-1}}.
\end{equation}
If $\overline{V}=1$, this definition reduces to
\begin{equation}\label{defkappa2}
	\kappa(h) = {{4\,h^2+h+1}\over {h^5+h^4+h^3+4\,h^2+h+1}},
\end{equation}
which extends $\kappa$  smoothly to all $h\geq 1$ with $\kappa(1) = {2\over 3}$.
The function $\kappa$  is trivially related to the polynomial ${\mathcal P}_\alpha$, given by \cref{defpa}, in the following way:

\begin{proposition}\label{kappap}
	Suppose $h_0>1$ is a root of the equation ${\mathcal P}_\alpha(h) = 0$ for some $0\leq \alpha\leq 1$. Then
$\kappa(h_0) = \alpha$.
\end{proposition}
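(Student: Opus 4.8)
The plan is to exploit the fact that the mass polynomial ${\mathcal P}_\alpha$ is \emph{affine} in the parameter $\alpha$, so that the equation ${\mathcal P}_\alpha(h_0)=0$ can simply be solved for $\alpha$. Starting from the definition \cref{defpa} and using $3(1-\alpha)h_0^2 = 3h_0^2 - 3\alpha h_0^2$, I would group together all terms carrying the factor $\alpha$ and all those that do not, rewriting ${\mathcal P}_\alpha(h_0)=0$ as
\begin{equation}
\alpha\left(h_0^6+3\,h_0^4-3\,h_0^2-1\right) = 4\,\overline{V}\,h_0^3 - 3\,h_0^2 - 1.
\end{equation}
Comparing the two sides with the numerator and denominator of \cref{defkappa}, the desired identity $\kappa(h_0)=\alpha$ is immediate \emph{provided} the coefficient of $\alpha$ may be divided out.

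The only point demanding care is therefore that this coefficient does not vanish for $h_0>1$. To settle it I would factor the sextic: writing $x=h_0^2$, the cubic $x^3+3x^2-3x-1$ has $x=1$ as a root, which gives
\begin{equation}
h_0^6+3\,h_0^4-3\,h_0^2-1 = \left(h_0^2-1\right)\left(h_0^4+4\,h_0^2+1\right).
\end{equation}
For $h_0>1$ both factors are strictly positive, so the coefficient of $\alpha$ is nonzero, the division is legitimate, and $\alpha = \kappa(h_0)$ follows at once from \cref{defkappa}.

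There is no genuine obstacle here: the result is a direct consequence of the linearity of ${\mathcal P}_\alpha$ in $\alpha$, and the single substantive observation is the factorization above, which simultaneously validates the division and explains why $\kappa$ is defined only for $h>1$ (the denominator vanishes precisely at $h=1$). For the borderline value $\overline{V}=1$ the numerator $4\,\overline{V}\,h^3-3\,h^2-1$ also vanishes at $h=1$ and shares the common factor $h-1$ with the denominator; cancelling it removes the removable singularity and yields the smooth extension recorded in \cref{defkappa2}. Since the proposition concerns only roots $h_0>1$, however, the general formula \cref{defkappa} applies directly and no separate treatment of the case $\overline{V}=1$ is required.
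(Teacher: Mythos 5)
Your proof is correct and follows the same (essentially definitional) route the paper intends: since ${\mathcal P}_\alpha$ is affine in $\alpha$, one solves ${\mathcal P}_\alpha(h_0)=0$ for $\alpha$ and recognizes $\kappa(h_0)$, the paper presenting this as immediate. Your factorization $h_0^6+3h_0^4-3h_0^2-1=\left(h_0^2-1\right)\left(h_0^4+4h_0^2+1\right)$ correctly supplies the one detail the paper leaves implicit, namely that the coefficient of $\alpha$ is strictly positive for $h_0>1$ so the division is legitimate.
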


Finally we introduce the functional ${\mathcal S} = {\mathcal S}(h)$ for $h>1$ by
\begin{equation}
	{\mathcal S}(h)\equiv\kappa(h)\,\int_0^{v(h)} P(V)\,dV+\left(1-\kappa(h)\right)\,\int_0^{v(1/h)} P(V)\,dV.
\end{equation}
Note that by Equation \cref{intP},
\begin{equation}\label{Sbetter}
	{\mathcal S}(h) = {3\over 2}\,\left(\kappa(h)\,h^2+\left(1-\kappa(h)\right)\,{1\over {h^2}}\right).
\end{equation}
\begin{snugshade}
\noindent {\em Applied mathematics aside}: 
For integers $N\geq 2$ and $0\leq n\leq N$,  let $h_0>1$ be a root of the equation ${\mathcal P}_\alpha(h) = 0$ with $\alpha= {n\over N}$.
Then ${\mathcal S}(h_0) = {\mathcal W}({\mathbf V}_0)$ where ${\mathbf V}_0$ is any equilibrium consisting of exactly $n=\alpha\,N$ large droplets of height $h_0$. Hence $\mathcal S$ interpolates  the values of $\mathcal W$ along the equilibria.
\end{snugshade}
Using  \cref{defkappa}, we find
\begin{equation}
	{{d\mathcal S}\over {dh}}(h) =  -6\,{{\left(h^2-1\right)\,\left(\overline{V}\,h^4-h^3-h+\overline{V}\right)}\over
{\left(h^4+4\,h^2+1\right)^2}}.
\end{equation}

\begin{theorem}\label{incr}
	Let $\alpha_0 = 1$ if $\overline{V}>1$, and  $\alpha_0= \alpha^*(\overline{V})$ if\, $0<\overline{V}\leq 1$.  Then the maps
$\alpha\mapsto {\mathcal S}(h_L(\alpha))$ and $\alpha\mapsto {\mathcal S}(h_l(\alpha))$ are  strictly increasing on $(0,\alpha_0)$.
\end{theorem}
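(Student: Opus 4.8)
The plan is to differentiate along the two branches by the chain rule and to reduce the entire statement to the sign of $\mathcal{S}'$. Writing $\Phi_L(\alpha)=\mathcal{S}(h_L(\alpha))$ and $\Phi_l(\alpha)=\mathcal{S}(h_l(\alpha))$, one has $\Phi_L'(\alpha)=\mathcal{S}'(h_L(\alpha))\,h_L'(\alpha)$ and $\Phi_l'(\alpha)=\mathcal{S}'(h_l(\alpha))\,h_l'(\alpha)$. By \cref{h_Lprop} the factor $h_L'(\alpha)$ is strictly negative, while by \cref{h_lprop} the factor $h_l'(\alpha)$ is strictly positive. Hence it suffices to show $\mathcal{S}'(h_L(\alpha))<0$ and $\mathcal{S}'(h_l(\alpha))>0$ throughout $(0,\alpha_0)$.

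To read off the sign of $\mathcal{S}'$, I would note that for $h>1$ the factor $(h^2-1)$ and the denominator $(h^4+4h^2+1)^2$ in the displayed formula for $d\mathcal{S}/dh$ are positive, so the sign of $\mathcal{S}'(h)$ is opposite to that of $g(h)\equiv\overline{V}h^4-h^3-h+\overline{V}$. Since $g$ is palindromic, it is natural to rewrite $g(h)>0\iff\overline{V}>G(h)$ with $G(h)\equiv(h^3+h)/(h^4+1)$. A short computation gives $G'(h)=-(h^2-1)(h^4+4h^2+1)/(h^4+1)^2$, so $G$ is strictly decreasing on $(1,\infty)$ from $G(1)=1$ to $0$; and the factorization $h^4-h^3-h+1=(h-1)^2(h^2+h+1)$ shows $G(h)<1$ for $h>1$. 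When $\overline{V}\geq 1$ this already yields $g(h)>0$ for every $h>1$, hence $\mathcal{S}'(h_L(\alpha))<0$; since there is no $h_l$-branch in this regime, the theorem follows for $\overline{V}\geq 1$.

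For $0<\overline{V}<1$ the point is to locate the unique $h_c>1$ with $G(h_c)=\overline{V}$ relative to the two branches. Here the key identity is
\begin{equation*}
	G\!\left(\left(\tfrac{1}{\alpha}-1\right)^{1/4}\right)=L(\alpha)\qquad\text{for } 0<\alpha<\tfrac12,
\end{equation*}
which I would verify by the substitution $\alpha=(1+m^4)^{-1}$, $m=(\tfrac{1}{\alpha}-1)^{1/4}$, collapsing both sides to $(m^3+m)/(m^4+1)$. Taking $\alpha=\alpha^*(\overline{V})$, for which $\overline{V}=L(\alpha^*)$ by definition, this identity and the monotonicity of $G$ force $h_c=(\tfrac{1}{\alpha^*}-1)^{1/4}$. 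But \cref{h_Lprop}(c) and \cref{h_lprop} give precisely $h_L(\alpha^*)=h_l(\alpha^*)=(\tfrac{1}{\alpha^*}-1)^{1/4}=h_c$; that is, the two branches merge exactly at the critical point of $\mathcal{S}$. Using that $h_L$ is decreasing and $h_l$ increasing, for $\alpha\in(0,\alpha^*)$ one then gets $h_l(\alpha)<h_c<h_L(\alpha)$, whence $g(h_l)<0$ (so $\mathcal{S}'(h_l)>0$) and $g(h_L)>0$ (so $\mathcal{S}'(h_L)<0$). Combined with the signs of $h_l'$ and $h_L'$ from the first paragraph, both $\Phi_l'$ and $\Phi_L'$ are strictly positive.

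I expect the main obstacle to be the identity $G((\tfrac{1}{\alpha}-1)^{1/4})=L(\alpha)$: it is this algebraic coincidence that shows the single critical point of $\mathcal{S}$ on $(1,\infty)$ is exactly the fold where $h_l$ and $h_L$ coalesce, and it is what lets the monotonicity statements of \cref{h_Lprop,h_lprop} be leveraged without any further estimate. Everything else is routine sign-chasing; the only care needed is to treat $\overline{V}\geq 1$ (no $h_l$-branch, $g>0$ everywhere) separately from $0<\overline{V}<1$.
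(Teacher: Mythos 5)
Your proposal is correct. It shares the paper's skeleton --- differentiate $\mathcal{S}(h(\alpha))$ by the chain rule, reduce everything to the sign of the quartic $g(h)=\overline{V}h^4-h^3-h+\overline{V}$ evaluated along the two branches, and exploit the monotonicity of $h_L$ and $h_l$ from \cref{h_Lprop,h_lprop} --- but the step that pins down the sign of $g$ is genuinely different. The paper works with the composite $\mathcal{F}(\alpha)=g(h(\alpha))$: it observes that $\kappa(h(\alpha))=\alpha>0$ forces $4\,\overline{V}\,h^3-3\,h^2-1>0$ along each branch, so that $\mathcal{F}^\prime(\alpha)=\left(4\,\overline{V}\,h^3-3\,h^2-1\right)h^\prime(\alpha)$ has a definite sign, and then evaluates $\mathcal{F}$ at the endpoint $\alpha_0$ (where it is zero or non-negative) to conclude $\mathcal{F}>0$ along $h_L$ and $\mathcal{F}<0$ along $h_l$. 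You instead analyze $g$ as a function of $h$ alone: solving $g(h)=0$ for $\overline{V}$ via $G(h)=(h^3+h)/(h^4+1)$, proving $G$ strictly decreasing on $(1,\infty)$, and then identifying the unique critical height $h_c$ with the fold point $\left(\tfrac{1}{\alpha^*}-1\right)^{1/4}$ through the identity $G\bigl(\left(\tfrac{1}{\alpha}-1\right)^{1/4}\bigr)=L(\alpha)$, which is correct and does not appear in the paper. Your route buys a cleaner geometric picture --- the single critical point of $\mathcal{S}$ on $(1,\infty)$ is exactly where the two branches coalesce, so the sign of $\mathcal{S}^\prime$ on each branch is immediate from $h_l(\alpha)<h_c<h_L(\alpha)$ --- at the cost of one extra algebraic verification; the paper's route avoids that identity but needs the observation about $\kappa$ and an endpoint limit computation for $\mathcal{F}$. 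Both handle the regimes $\overline{V}\geq 1$ and $0<\overline{V}<1$ correctly, and your explicit remark that the $h_l$-branch is absent for $\overline{V}\geq 1$ matches the paper's implicit convention.
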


\begin{proof}	
Let $h=h_L$ or $h=h_l$ on $(0,\alpha_0)$. Since $h(\alpha)>1$ and $\kappa(h(\alpha)) = \alpha>0$ on $(0,\alpha_0)$, the definition \cref{defkappa} of $\kappa$ shows that $4\,\overline{V}\,h(\alpha)^3-3\,h(\alpha)^2-1>0$. Hence by \cref{h_Lprop,h_lprop},
\begin{equation}
	\left(4\,\overline{V}\,h^3-3\,h^2-1\right)\,{d\over {d\alpha}} h<0 \  \text{ if $h=h_L$,\  and}\ \left(4\,\overline{V}\,h^3-3\,h^2-1\right)\,{d\over {d\alpha}} h>0\ \text{if $h=h_l$.}
\end{equation}
Consequently, the map $\alpha\mapsto {\mathcal F}(\alpha)\equiv \overline{V}\,h(\alpha)^4-h(\alpha)^3-h(\alpha)+\overline{V}$ is  strictly decreasing on $(0,\alpha_0)$ if $h=h_L$, and  strictly increasing on $(0,\alpha_0)$ if $h=h_l$. In the first case, we obtain from  \cref{h_Lprop} that $\lim_{\alpha\rightarrow \alpha_0^+} {\mathcal F}(\alpha)$ exists and is non-negative. Thus ${\mathcal F}(\alpha)>0$ on $(0,\alpha_0)$. In the second case, \cref{h_lprop} proves that $\lim_{\alpha\rightarrow \alpha_0^+} {\mathcal F}(\alpha)$ exists and equals $0$. Hence ${\mathcal F}(\alpha)<0$ on $(0,\alpha_0)$. Finally, the claim follows since $h(\alpha)>1$ on $(0,\alpha_0)$ and
\begin{equation}
	{d\over {d\alpha}}\left({\mathcal S}(h(\alpha))\right) = -6\,{{\left(h(\alpha)^2-1\right)\,{\mathcal F}(\alpha)}\over
{\left(h(\alpha)^4+4\,h(\alpha)^2+1\right)^2}}\,{d\over {d\alpha}} h(\alpha).
\end{equation}
\end{proof}

Now, for fixed $N$, we let $N^*(\overline{V})$  be the largest integer $K$ such that $K\leq \alpha^*(\overline{V})\,N$ if  $0<\overline{V}< 1$. We also define $N^*(1)$ to be the largest integer $K$ such that $K< {1\over 2}\,N$. Specifically, we have
\begin{equation}
	N^*(\overline{V}) =\left\{\begin{matrix}
				\left\lceil{{1\over 2}\,N}-1\right\rceil& \text{for  $\overline{V}= 1$,}\\
				\left\lfloor{\alpha^*(\overline{V})\,N}\right\rfloor& \text{for  $0<\overline{V}< 1$.}
			\end{matrix}\right.
\end{equation}
Note that for $0<\overline{V}<1$, the condition $N^*(\overline{V}) \geq 1$ is equivalent to $L\left({1\over N}\right)\leq \overline{V}<1$.
Now we set:
\begin{enumerate}
	\item[\mbox{(i)}] If $\overline{V}>1$, let $\beta_n \equiv {\mathcal W}({\mathbf V}_n)$,  $1\leq n\leq N$,  where ${\mathbf V}_n$   is an equilibrium of the system \cref{sys} with exactly $n$ large droplets.
	\item[\mbox{(ii)}]   If $\overline{V}=1$,  let $\gamma_n \equiv  {\mathcal W}({\mathbf V}_n)$,  $0\leq n\leq N^*(1)$,
where ${\mathbf V}_n$   is an equilibrium of the system \cref{sys} with exactly $n$ large droplets.
\item[\mbox{(iii)}] If $0<\overline{V}<1$ and $N^*(\overline{V})\geq 1$, let $\lambda_n \equiv {\mathcal W}({\mathbf V}_n)$,   $1\leq n\leq N^*(\overline{V})$,  where  ${\mathbf V}_n$   is an equilibrium of the system \cref{sys} with exactly $n$ large droplets of height   $h\geq {\left({N\over n}-1\right)^{1/4}}$.
	\item[\mbox{(iv)}] If  $0<\overline{V}<1$, let $\sigma_n \equiv {\mathcal W}({\mathbf V}_n)$,  $0\leq n\leq N^*(\overline{V})$,  where  ${\mathbf V}_n$   is an equilibrium of the system \cref{sys} with exactly $n$ large droplets of height   $h\leq {\left({N\over n}-1\right)^{1/4}}$.
\end{enumerate}
Observe, in particular, that $\beta_N$, $\gamma_0$ and $\sigma_0$ denote the value of the functional ${\mathcal W}$ at the uniform equilibrium when $\overline{V}>1$, $\overline{V} = 1$,  and $0<\overline{V}<1$, respectively.

\begin{theorem}\label{hierarchies}
\mbox{\ }
\begin{enumerate}
	\item[\mbox{(a)}] If\ \  $\overline{V}>1$, then $\beta_n<\beta_{n+1}$, $1\leq n\leq N-1$.
	\item[\mbox{(b)}] If\ \ $\overline{V}=1$, then $\gamma_n<\gamma_{n+1}$, $1\leq n \leq N^*(1)-1$, and $\gamma_n<\gamma_0$, $1\leq n\leq N^*(1)$.
\item[\mbox{(c)}] If\ \ $0<\overline{V}<1$, then  $\lambda_n<\lambda_{n+1}$, $1\leq n\leq N^*(\overline{V})-1$, and $\sigma_n<\sigma_{n+1}$, $0\leq n\leq N^*(\overline{V})-1$.
\end{enumerate}
\end{theorem}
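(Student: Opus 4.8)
The plan is to reduce every assertion to the single monotonicity statement of \cref{incr} by means of the interpolation identity recorded just above it: whenever $h_0>1$ solves ${\mathcal P}_\alpha(h)=0$ with $\alpha=n/N$, the value ${\mathcal S}(h_0)$ equals ${\mathcal W}$ at any equilibrium carrying exactly $n$ large droplets of height $h_0$ (this rests on \cref{kappap} and the explicit form \cref{Sbetter}). Thus the crux is purely combinatorial: identify, for each of $\beta_n,\gamma_n,\lambda_n,\sigma_n$, the correct root branch ($h_L$ or $h_l$) at $\alpha=n/N$, and then transport the strict monotonicity of $\alpha\mapsto{\mathcal S}(h_L(\alpha))$ and $\alpha\mapsto{\mathcal S}(h_l(\alpha))$ to the strictly increasing sequence of rationals $n/N$.

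First I would fix the branch identifications. For $\overline{V}>1$, \cref{pVL1} gives a unique large root $h_L(n/N)$, whence $\beta_n={\mathcal S}(h_L(n/N))$; for $\overline{V}=1$ and $1\le n\le N^*(1)$, \cref{pVE1} yields $\gamma_n={\mathcal S}(h_L(n/N))$; and for $0<\overline{V}<1$, \cref{pVS1} supplies two roots with $1<h_l(\alpha)\le(1/\alpha-1)^{1/4}\le h_L(\alpha)$, so that $\lambda_n={\mathcal S}(h_L(n/N))$ (the $h\ge(N/n-1)^{1/4}$ branch of item~(iii)) and $\sigma_n={\mathcal S}(h_l(n/N))$ for $n\ge1$ (the $h\le(N/n-1)^{1/4}$ branch of item~(iv)). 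Whenever both $n/N$ and $(n+1)/N$ lie strictly inside $(0,\alpha_0)$, the interior strict inequalities $\beta_n<\beta_{n+1}$, $\gamma_n<\gamma_{n+1}$, $\lambda_n<\lambda_{n+1}$ and $\sigma_n<\sigma_{n+1}$ are immediate from \cref{incr} together with the strict monotonicity of $n\mapsto n/N$.

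What remains, and what I expect to be the one delicate point, is the handling of the uniform equilibria, which the ${\mathcal S}$-interpolation attains only as limits at the endpoints of the $\alpha$-interval: $\beta_N$ at $\alpha=1=\alpha_0$, $\gamma_0$ as $\alpha\to\tfrac12^-$, and $\sigma_0$ as $\alpha\to0^+$ (together with the common top entry of the $\lambda$- and $\sigma$-chains when $\alpha_0N\in\mathbb Z$, where $h_L$ and $h_l$ coalesce by \cref{pVS1}). Here I would invoke the continuous extensions of \cref{h_Lprop,h_lprop}, checking that they reproduce the direct values of ${\mathcal W}$ at the uniform equilibria --- e.g.\ $\beta_N={\mathcal S}(h_L(\alpha_0))$ with $h_L(\alpha_0)>1$, $\gamma_0=\lim_{\alpha\to\alpha_0^-}{\mathcal S}(h_L(\alpha))={\mathcal S}(1)=\tfrac32$ via \cref{Sbetter}, and $\sigma_0=\lim_{\alpha\to0^+}{\mathcal S}(h_l(\alpha))=\tfrac32\,h_l(0)^{-2}$. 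Since a function strictly increasing on an open interval and continuous up to an endpoint takes at that endpoint a value strictly above (resp.\ strictly below) all its interior values, the boundary comparisons $\beta_{N-1}<\beta_N$, $\gamma_n<\gamma_0$ for $1\le n\le N^*(1)$, and $\sigma_0<\sigma_n$ follow. All the analytic work --- the derivative of ${\mathcal S}$ and the factorization of ${\mathcal P}_\alpha$ --- having already been absorbed into \cref{incr,poszero}, this endpoint bookkeeping is the only substantive step left.
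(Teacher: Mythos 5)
Your proposal is correct and follows essentially the same route as the paper's own proof: both identify $\beta_n,\gamma_n,\lambda_n,\sigma_n$ with ${\mathcal S}$ evaluated on the appropriate root branch at $\alpha=n/N$, invoke the strict monotonicity of $\alpha\mapsto{\mathcal S}(h_L(\alpha))$ and $\alpha\mapsto{\mathcal S}(h_l(\alpha))$ from \cref{incr}, and recover the uniform-equilibrium values $\beta_N$, $\gamma_0$, $\sigma_0$ via the continuous endpoint extensions of \cref{h_Lprop,h_lprop}. No gaps.
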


\begin{proof}
Let $\alpha_0$ be given as in \cref{incr} and let $h=h_L$ or $h=h_l$ on $(0,\alpha_0)$. Then with $V_L = v(h(\alpha))$ and $V_S = v(1/h(\alpha))$,
${\mathcal S}(h(\alpha)) = \alpha\,\int_0^{V_L} P(V)\,dV + (1-\alpha)\,\int_0^{V_S} P(V)\,dV$.
Hence setting $\alpha={n\over N}$, we obtain
\begin{equation}
	{\mathcal S}\left(h\left({n\over N}\right)\right) = \left\{\begin{matrix} \beta_n& \text{if $\overline{V}>1$, $h=h_L$,  and $1\leq n\leq N-1$,}\\[1ex]
						\gamma_n& \text{if $\overline{V}=1$, $h=h_L$,  and $1\leq n\leq N^*(1)$,}\\[1ex]
						\lambda_n& \text{if $0<\overline{V}<1$, $h=h_L$,  and $1\leq n\leq N^*(\overline{V})-1$},\\[1ex]				
						\sigma_n& \text{if $0<\overline{V}<1$, $h=h_l$, and $1\leq n\leq N^*(\overline{V})-1$.}\end{matrix}\right.
\end{equation}
If either $\overline{V}>1$ or $0<\overline{V}<1$, $h_L$ extends to a continuous function on $(0,\alpha_0]$ such that
$h_L(\alpha)>1$ for $0<\alpha\leq \alpha_0$. Consequently,
	${\mathcal S}\left(h_L(1)\right) = \beta_N$  if $\overline{V}>1$, and 
	${\mathcal S}\left(h_L\left({{N^*(\overline{V})}/ N}\right)\right) = \lambda_{N^*(\overline{V})}$ if $0<\overline{V}<1$,
provided that $N^*(\overline{V})\geq 1$.
If $\overline{V} = 1$, $h_L$ extends to a continuous function on $\left(0,{1\over 2}\right]$ such that $h_L(\alpha)>1$ for $0<\alpha<{1\over 2}$ and $h_L\left({1\over 2}\right) = 1$. Consequently, by \cref{defkappa2,Sbetter}, the map $\alpha\mapsto {\mathcal S}\left(h_L(\alpha)\right)$ extends continuously to $\left(0,{1\over 2}\right]$ such that
\begin{equation}
	{\mathcal S}\left(h_L\left({1\over 2}\right)\right) = {\mathcal S}(1) = {2\over 3}\,\int_0^1 P(V)\,dV+ \left(1-{2\over 3}\right)\,\int_0^1 P(V)\,dV = \int_0^1 P(V)\,dV = \gamma_0.
\end{equation}
Finally, if  $0<\overline{V}<1$, $h_l$ extends to a continuous function on $[0,\alpha_0]$ such that
$h_l(\alpha)>1$ for $0\leq \alpha\leq \alpha_0$. Hence
	${\mathcal S}\left(h_l(0)\right) = \sigma_0$ and 
	${\mathcal S}\left(h_l\left({{N^*(\overline{V})}/N}\right)\right) =  \sigma_{N^*(\overline{V})}$.
The ordering of the quantities $\beta_n$, $\gamma_n$, $\lambda_n$ and $\sigma_n$ follows now immediately from the strict monotonicity of the maps $\alpha\mapsto {\mathcal S}(h_L(\alpha))$ and
$\alpha\mapsto {\mathcal S}(h_l(\alpha))$.
\end{proof}

In the case $\overline{V}<1$ with $N^*(\overline{V})\geq 1$,  the system \cref{sys}  potentially exhibits two types of stable equilibria: the uniform equilibrium  with droplets of height $h_0=h_l(0)$ and equilibria consisting of exactly one large droplet of height $h_N = h_L\left({1\over N}\right)$. It is natural to ask for which stable equilibrium the functional $\mathcal W$ is minimal.
Instead of  a comprehensive answer we can easily give a partial one: Since $\kappa(h)\sim h^{-3}$ as $h\rightarrow \infty$ and $h_N=h_L\left({1\over N}\right)\rightarrow \infty$ as $N\rightarrow\infty$, we obtain from  \cref{Sbetter} that
$
{\mathcal S}\left(h_N\right) \rightarrow 0$ as $N\rightarrow\infty$, while ${\mathcal S}\left(h_0\right) = {3\over 2}\,{{h_0^{-2}}}>0$.
Hence for sufficiently large $N$ (with constant $0<\overline{V}<1$), the stable equilibria with one large droplet are the global minimizers of $\mathcal W$. If, on the other hand,
$
	\overline{V} =L\left({1\over N}\right)
$,
then $\alpha^*(\overline{V}) = {1\over N}$ and,  by \cref{pVS1}, $h_l\left({1\over N}\right) = h_L\left({1\over N}\right)$. By \cref{V<1instab}, the corresponding equilibria with exactly one large droplet are unstable. Hence they cannot be   minimizers of $\mathcal W$. Indeed, since 
$
	\sigma_0<\sigma_1= \lambda_1
$  by
\cref{hierarchies},
$\mathcal W$ takes its global minimum value at the uniform equilibrium. Clearly, there exists $\epsilon>0$ (depending on $N$) such that this conclusion remains valid if
$
	L\left({1\over N}\right)\leq \overline{V}<L\left({1\over N}\right)+\epsilon$.
\begin{snugshade}
\noindent {\em Engineering aside}: The possible occurrence  of two types of stable equilibria for  $\overline{V}<1$   is an interesting observation which lends support to the switching mechanism of the adhesion device presented by Vogel and Steen in \cite{VoSt}. The device can switch between two states: attached and detached. Both states are equilibria of the droplet competition. The attached state corresponds to the uniform equilibrium where each droplet forms a liquid bridge with the flat substrate. The detached state arises when one droplet is much larger than the other ones. Liquid bridges between the small droplets and the substrate will be broken. Our idealized fluid flow model is able to describe and explain this grab-and-release mechanism in qualitative terms.
\end{snugshade}

\begin{snugshade}
\noindent{\em Material science aside}: Since for $L\left({1\over N}\right)<\overline{V}<1$ $(N\geq 3$) volume scavenging is ``bistable'' (i.e.\,it exhibits two different types of stable equilibria),  an initial droplet configuration can evolve towards the uniform equilibrium. In this case,  volume scavenging does not lead to  Ostwald ripening. The uniform equilibrium is still a local minimizer of the energy functional $\mathcal W$, and yet no coarsening occurs.
\end{snugshade}

\section{Concluding Remarks}

\label{sec_concl}

In the preceding sections we concentrated on the occurrence of equilibria, their stability and their ordering for fixed $\overline{V}>0$. It is now instructive to shed light on the equilibria and their stability when $\overline{V}$ varies. We make use of the equilibrium curves obtained via Equation \cref{Aeq} and identify equilibria as stable or unstable. We classify as follows:
\begin{itemize}
	\item $L^N$, $S^N$: uniform equilibria, consisting of $N$ large ($L$) or $N$ small ($S$) droplets, respectively
	\item $L^k\,S^m$: non-uniform equilibria, consisting of $k$ large ($L$) droplets of height $h> \left({N\over k}-1\right)^{1/4}$ and $m=N-k$ small ($S$) droplets
	\item $l^k\,s^m$: non-uniform equilibria, consisting of $k$ large ($l$) droplets of height $h\leq \left({N\over k}-1\right)^{1/4}$ and $m=N-k$ small ($s$) droplets
\end{itemize}
\begin{figure}[tbhp]
\centering
  \subfloat[$N=2$]{
    \includegraphics[width=0.48\textwidth]{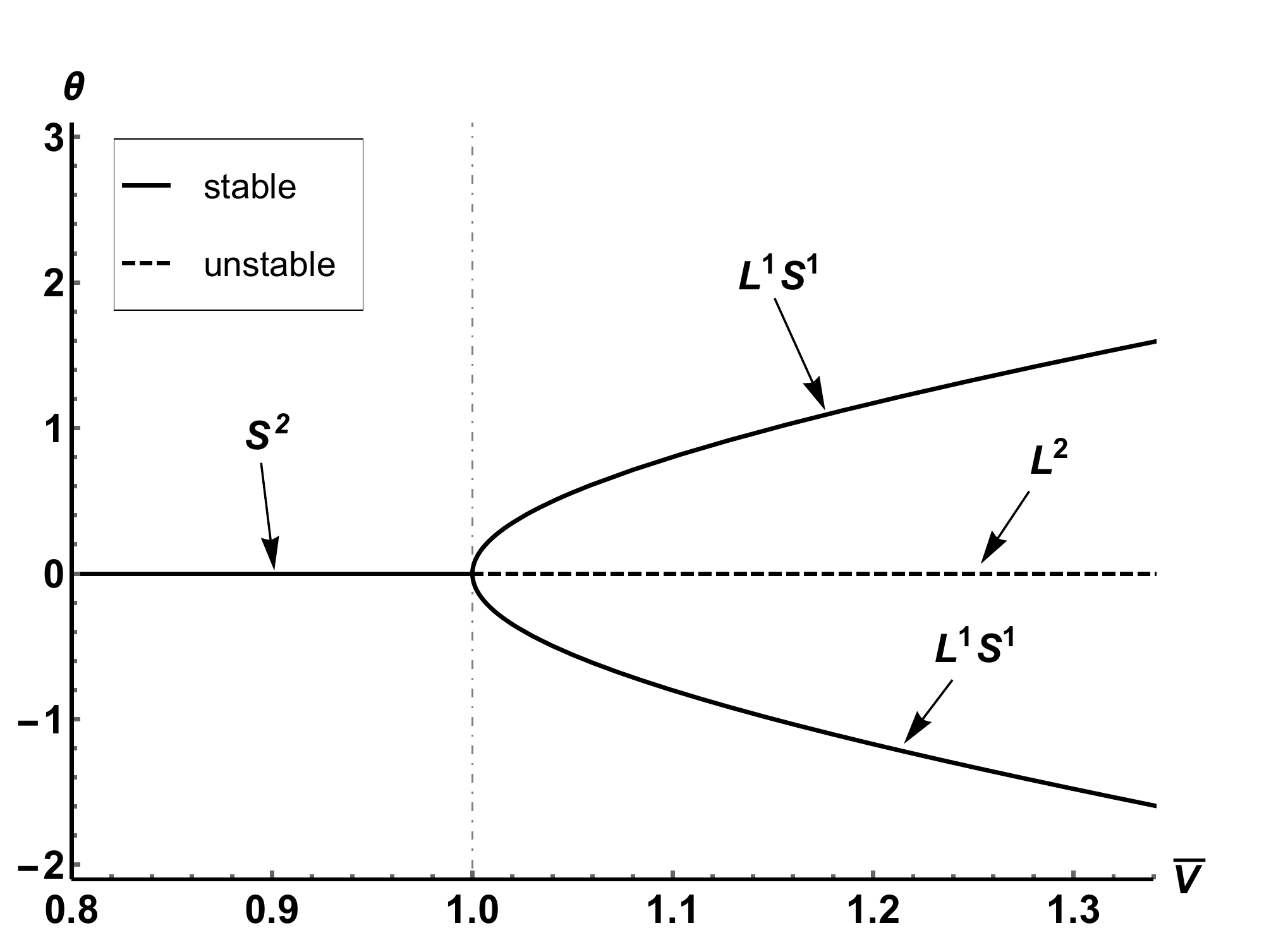}
    \label{N2bf}}
\subfloat[$N=3$]{
    \includegraphics[width=0.48\textwidth]{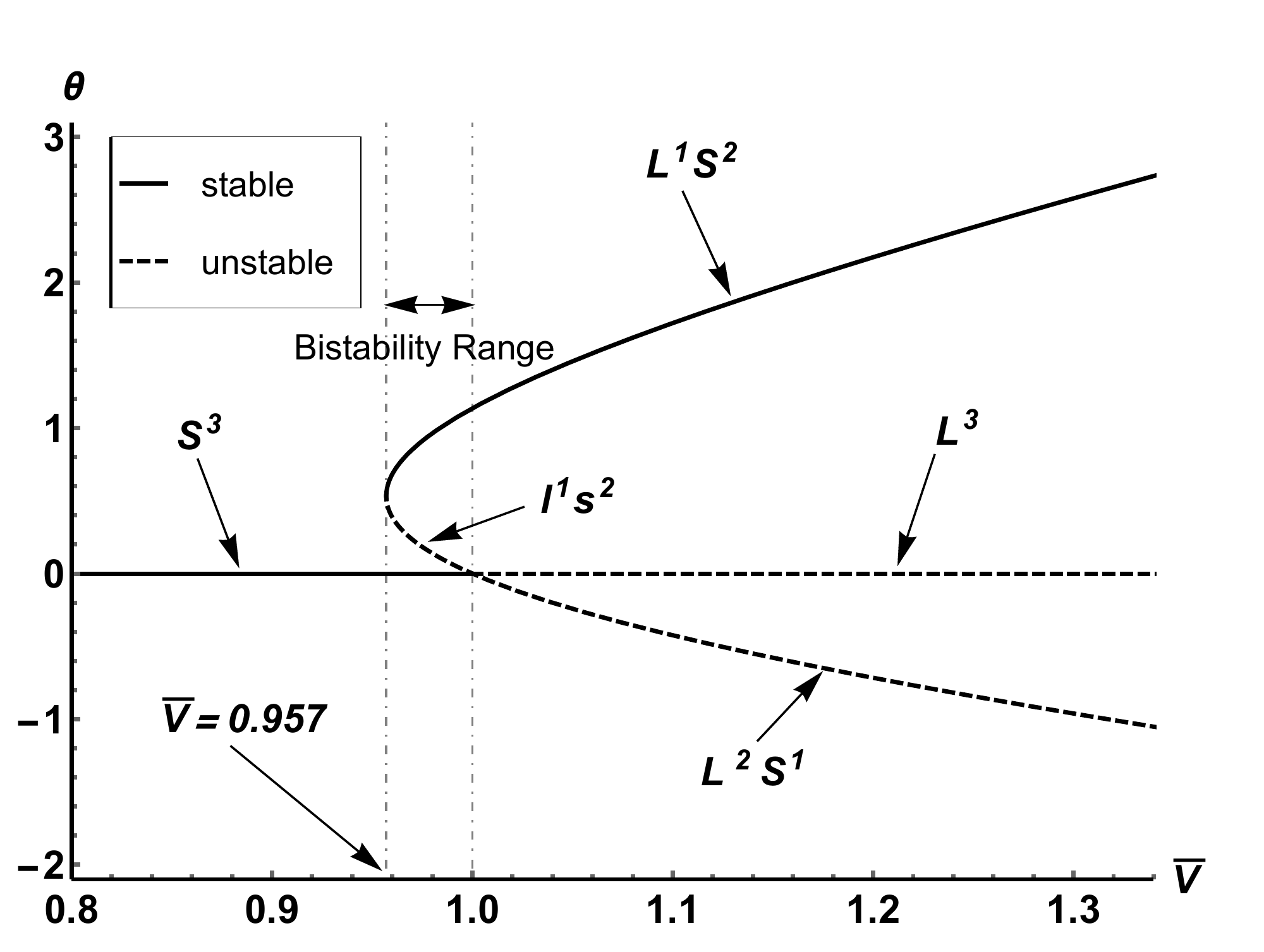}
    \label{N3bf}}
\caption{Bifurcation diagrams for $N=2, 3$}
\label{Bfd23}
\end{figure}
\begin{figure}[tbhp]
\centering
    \includegraphics[width=0.95\textwidth]{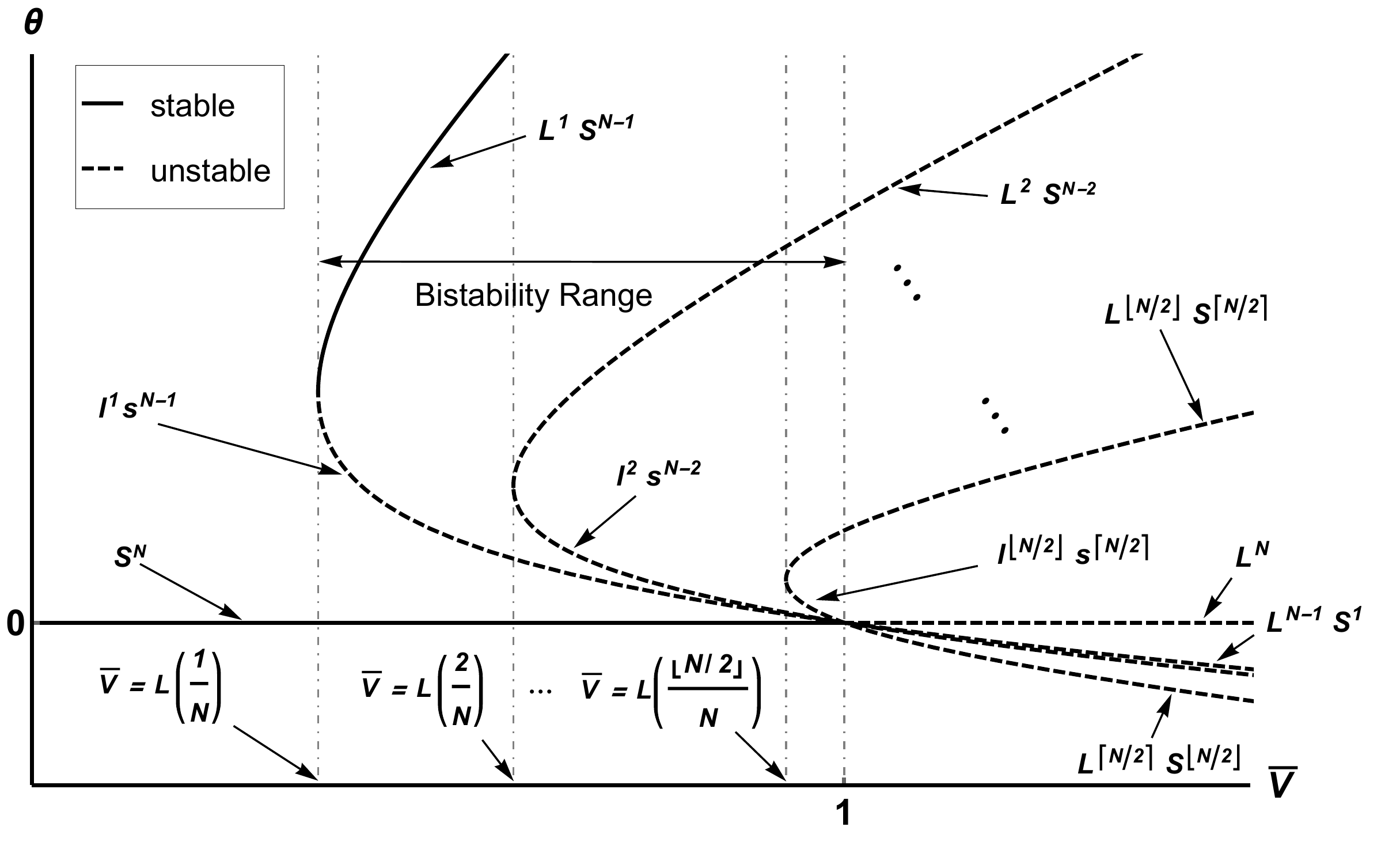}
\caption{Generic bifurcation diagram for odd $N$}
\label{Bfd}
\end{figure}

\Cref{Bfd23} displays the situation $N=2, 3$. A generic bifurcation diagram (for odd $N$) is shown in \cref{Bfd}:
Equilibria of the type $L^k\,S^m$ and $l^k\,s^m$ are both present on equilibrium curves with turning point in the half-plane  $\overline{V}<1$. Such equilibria  lie immediately above and below the turning point, respectively.  The turning point itself occurs for $\overline{V} = L\left({n\over N}\right) = {1\over N}\,\left(\left({N\over n}-1\right)^{1/4}+\left({N\over n}-1\right)^{3/4}\right)$.  Up to a rescaling,  this value of $\overline{V}$ was  given by Slater and Steen \cite{SlSt} in the case $n=1$ and has been reconfirmed here. At this turning point a saddle-node bifurcation occurs with equilibria of the type $L^1\,S^{N-1}$ stable and all other equilibria on the same solution curve unstable. Our results on the limiting function $L$ furnish the location of all turning points with $n\in \left\{1,\ldots,\left\lfloor{N\over 2}\right\rfloor\right\}$.  Equilibria of the type $l^k\,s^m$ extend all the way to the point $(\overline{V},\theta) = (1,0)$ and then switch to  the type $L^m S^k$. No equilibria of the type $l^k\,s^m$ are present on the equilibrium curve with  $\alpha={1\over 2}$, i.e.\,for $N$ even and $n={N\over 2}$. This curve consists of two solution branches symmetric about the axis $\theta=0$. In the case $N=2$ this situation corresponds to a supercritical pitchfork bifurcation  as indicated in  \cref{N2bf}. Our results are in agreement with Wente's work for $N=2$ and $3$ \cite{We}. The {\em bistability range} $L\left({1\over N}\right)<\overline{V}<1$, i.e.\,the interval of $\overline{V}$-values where two different types of stable equilibria can occur,  is indicated in \cref{N3bf,Bfd}. 

When $\overline{V}>1$,  the basins of attraction of stable equilibria are sensitive to changes in the initial data (and model parameters) as seen in \Cref{sec_mot}. A similar behavior is expected when $\overline{V}<1$. We leave it to future work to study the basins of attraction, especially  for 
$\overline{V}$ in the bistability range.

\begin{snugshade}
\noindent {\em Socio-economic view}: The volume scavenging competition of $N$ individuals exhibits three distinct behaviors: For  scarcest resource, $0<\overline{V}<L\left({1\over N}\right)$, the only possible outcome is egalitarian. In contrast, for abundant resource, $\overline{V}>1$, the competition exhibits the winner-take-all outcome. Finally, for scarce resource, $L\left({1\over N}\right)<\overline{V}<1$, both the winner-take-all and the all-share-evenly outcomes are possible. 
As $N\rightarrow \infty$, 
$L\left({1\over N}\right)\rightarrow 0$. Hence, as the number of competitors grows,  the range of resource where
bistability occurs increases.  At the same time, the range where only the egalitarian outcome is possible becomes smaller.
\end{snugshade}
\begin{figure}[tbhp]
\centering
  \subfloat[$N=10$]{
    \includegraphics[width=0.31\textwidth]{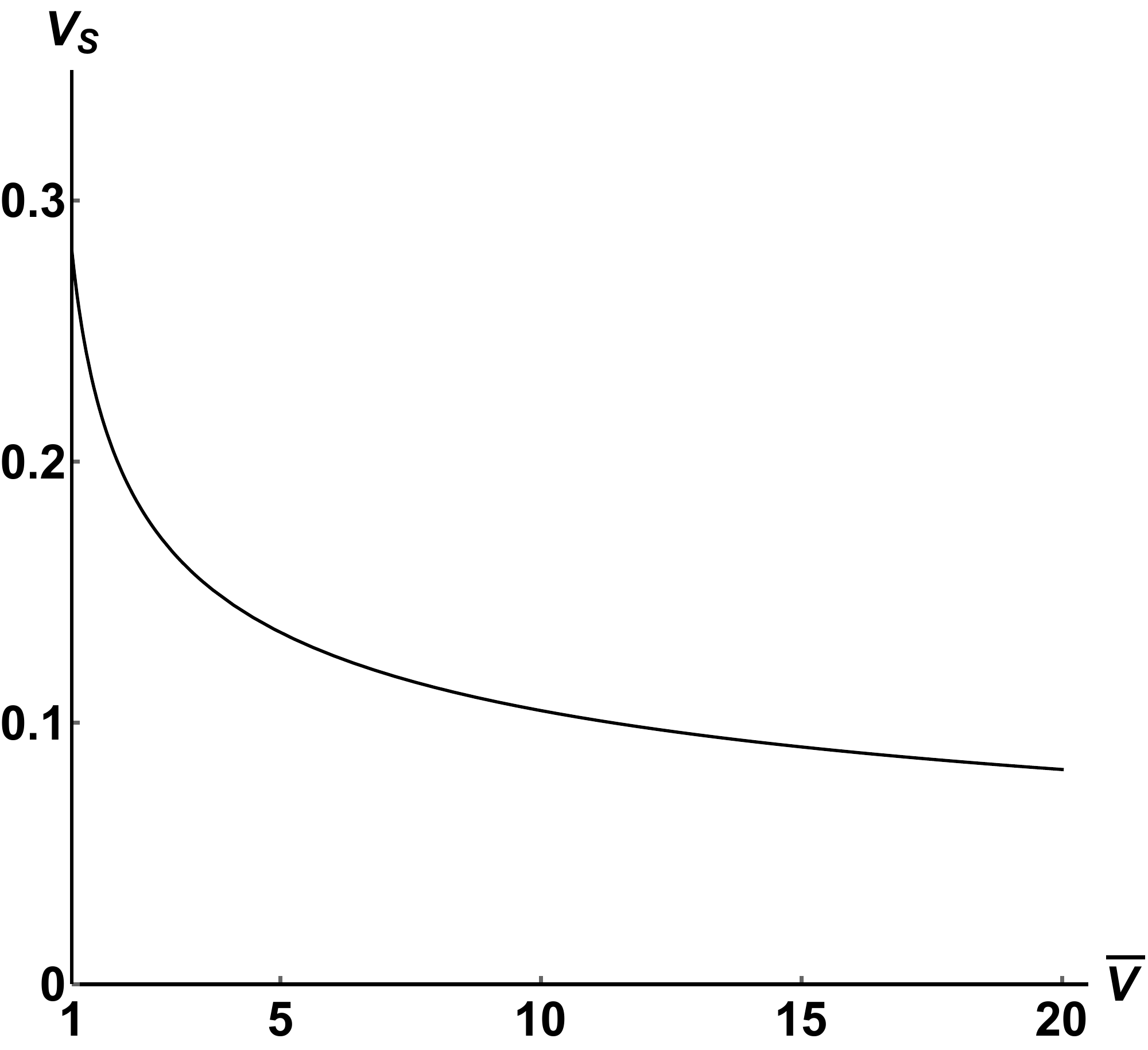}
    \label{SocEco10}}
 \subfloat[$N=30$]{
    \includegraphics[width=0.31\textwidth]{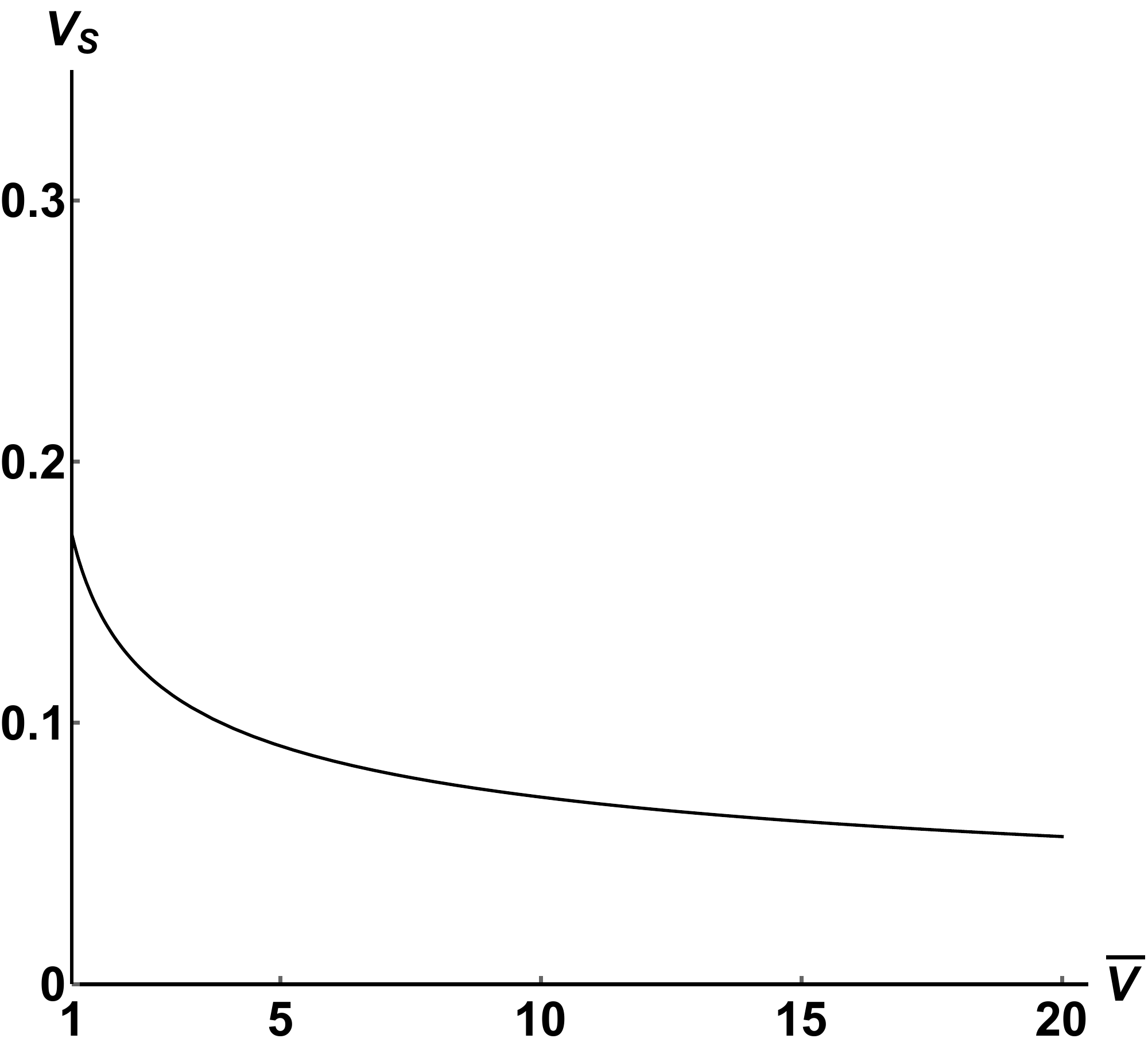}
    \label{SocEco30}}
\subfloat[$N=100$]{
    \includegraphics[width=0.31\textwidth]{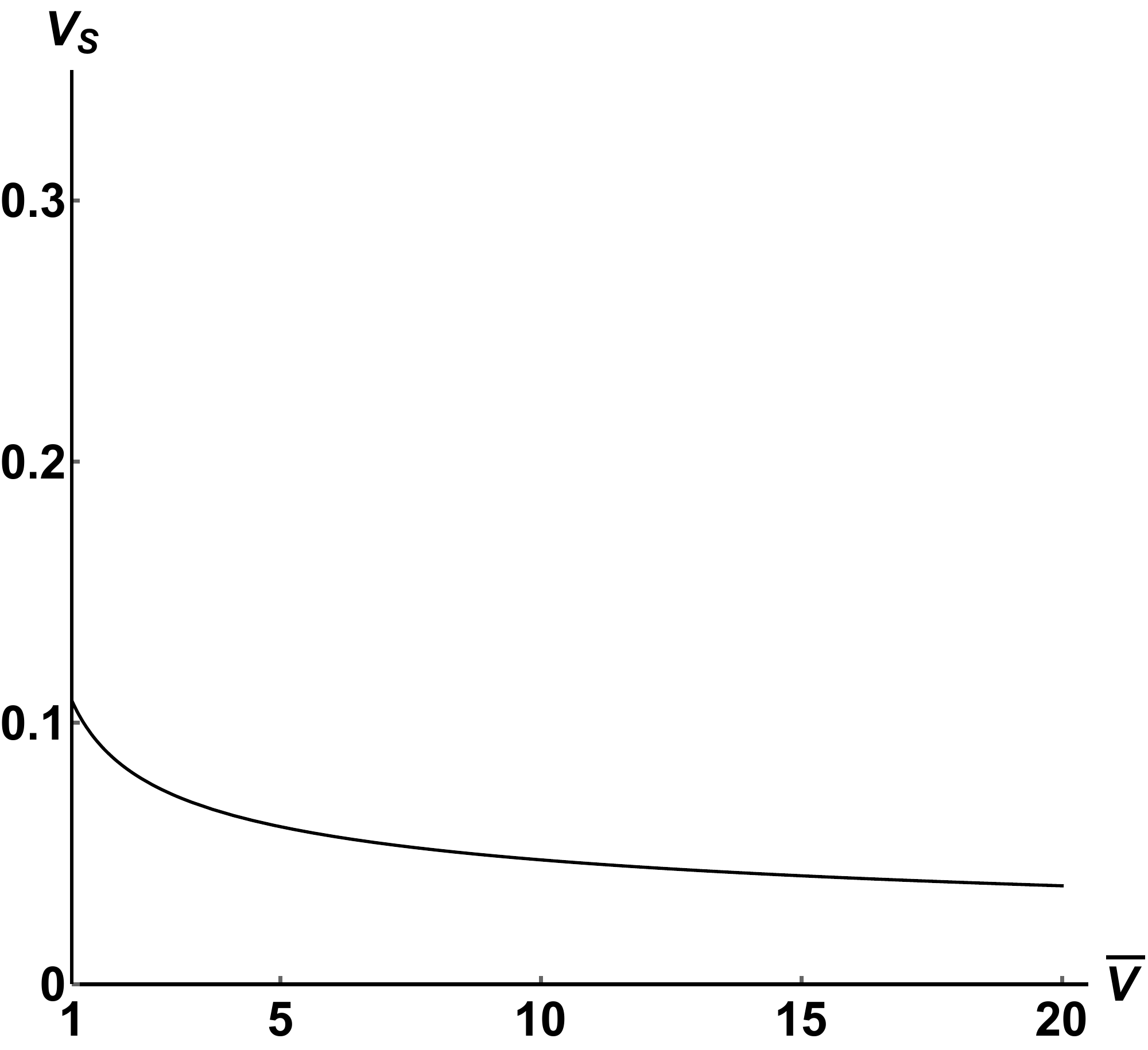}
    \label{SocEco100}}
\caption{Small droplet volume $V_S$ for stable equilibria as a function of  ${\overline V}>1$}
\label{EcoSoc}
\end{figure}
\begin{snugshade}
\noindent {\em  Socio-economic view}: 
The physics-based  model we studied motivates many questions of socio-eco\-nom\-ic interest. 
Among other things we might ask whether the ``macroeconomic objective'' of increasing  average resource per individual 
$\overline{V}$ actually results in a better outcome for the majority of  individuals in a population of size $N$. 
Since, for abundant resource $\overline{V}>1$, individuals experience the winner-take-all outcome, all but one competitor will earn an equal,  small share 
of total resource, corresponding to the volume  $V_S=V_S\left(\overline{V}\right)$ of a
small droplet in a stable equilibrium with exactly one large droplet. Hence $V_S\left(\overline{V}\right)<1$ is the end state resource for every non-winning individual in the winner-take-all market with abundant resource $\overline{V}>1$. Graphs of this function are depicted in \cref{EcoSoc} for populations of size $N=10$, $N=30$ and $N=100$. Surprisingly, it appears that an increase in average resource per individual $\overline{V}$ results in a smaller share $V_S\left(\overline{V}\right)$. Hence almost all individuals fare better  when resource $\overline{V}$ is less abundant.
\end{snugshade}

We have now given a complete characterization of all  equilibria and their stability for volume scavenging of Newtonian and power-law fluids.   While our stability and bifurcation results bear resemblance with the case of $N$ inviscid droplets with $S_N$ symmetry \cite{SlSt}, there are notable differences: The inviscid flow in \cite{SlSt} exhibits periodic and quasi-periodic solutions as well as chaotic behavior.  In the viscous situation discussed here, the fluid rheology (for any $s>0$) renders stable equilibria  asymptotically stable and restricts $\omega$-limit sets to a finite number of equilibria. This viscous behavior for $\overline{V}<1$ allows us to rationalize  the mechanism for the adhesion device discussed in \cite{VoSt}.  It is, however, remarkable that, in turn, our  results on the hierarchical ordering of equilibria in terms of the pressure-volume work functional $\mathcal W$  (and hence total surface area)  apply to the inviscid case as well. Our description of the turning points of the equilibrium curves as special values of the limiting function $L$ carry over directly. Most of these findings are new, both for the viscous and inviscid case.

\section*{Acknowledgement}

The authors are grateful  to the Institute of Mathematics and Its Applications at the University of Minnesota for its support. This work has its roots in scientific activities offered at the IMA during the thematic year on ``Complex Fluids and Complex Flows'' in 2009/2010.
The authors also thank Profs.\,\,Robert Frank (Cornell University) and Philip Cook (Duke University) for relevant references in sociology and economics.

\bibliographystyle{s-plain}

\end{document}